\theoremstyle{plain}
\newtheorem{thm}{Theorem}[section] 
\newtheorem{lem}[thm]{Lemma}
\newtheorem{prop}[thm]{Proposition}
\newtheorem{cor}[thm]{Corollary}
\newtheorem{lemma}[thm]{Lemma}
\theoremstyle{remark}
\newtheorem{defn}[thm]{Definition} 
\newtheorem{example}[thm]{Example}
\newtheorem{remark}[thm]{Remark}
\newcommand{\mscr}[1]{\mathcal{#1}}
\newcommand{\twid}[1]{\widetilde{#1}}
\newcommand{\ZZ}{\mathbb{Z}}
\newcommand{\RR}{\mathbb{R}}
\newcommand{\EE}{\mathbb{E}}
\newcommand{\al}{\alpha}
\newcommand{\de}{\delta}
\newcommand{\ep}{\epsilon}
\newcommand{\ta}{\theta}
\newcommand{\Ta}{\Theta}
\renewcommand{\l}{\left}
\renewcommand{\r}{\right}
\newcommand{\defeq}{\vcentcolon=}
\newcommand{\iid}{\overset{\text{iid}}{\sim}}
\DeclareMathOperator{\var}{Var}
\newcommand{\ol}{\overline}
\newcommand{\ul}{\underline}
\newcommand{\thx}{\theta_X}
\newcommand{\thy}{\theta_Y}
\begin{document}

\begin{frontmatter}
\title{Canonical Noise Distributions and Private Hypothesis Tests}
\runtitle{Canonical Noise Distributions}

\begin{aug}
\author[A]{\fnms{Jordan} \snm{Awan}\ead[label=e1]{jawan@purdue.edu}} 
\and
\author[B]{\fnms{Salil} \snm{Vadhan}\ead[label=e2]{salil\_vadhan@harvard.edu}}
\address[A]{Purdue University, \printead{e1}}
\address[B]{Harvard University, \printead{e2}}
\end{aug}
\begin{abstract}
$f$-DP has recently been proposed as a generalization of differential privacy allowing a lossless analysis of composition, post-processing, and privacy amplification via subsampling. In the setting of $f$-DP, we propose the concept of a \emph{canonical noise distribution} (CND), {the first mechanism designed for an arbitrary $f$-DP guarantee}. The notion of CND captures whether an additive privacy mechanism {perfectly matches the privacy guarantee of}
a given $f$. {We prove that a CND always exists, and} give a construction that produces a CND {for any $f$}. We show that private hypothesis tests are intimately related to CNDs, allowing for the release of private $p$-values at no additional privacy cost, as well as the construction of uniformly most powerful (UMP) tests for binary data, {within the general $f$-DP framework.}

We apply our techniques to the problem of difference-of-proportions testing, and construct a UMP unbiased (UMPU) ``semi-private'' test which upper bounds the performance of any {$f$-DP} test. Using this as a benchmark, we propose a private test based on the inversion of characteristic functions, which allows for optimal inference for the two population parameters and is nearly as powerful as the semi-private UMPU. When specialized to the case of $(\ep,0)$-DP, we show empirically that our proposed test is more powerful than any $(\ep/\sqrt 2)$-DP test and has more accurate type I errors than the classic normal approximation test.
\end{abstract}

\begin{keyword}[class=MSC]
\kwd[Primary ]{68P27}
\kwd[; secondary ]{ 62F03}
\end{keyword}

\begin{keyword}
\kwd{differential privacy}x
\kwd{uniformly most powerful test}
\kwd{frequentist inference}
\end{keyword}

\end{frontmatter}


\thispagestyle{empty}
\section{Introduction}
The concept of differential privacy (DP) was introduced in \citet{dwork2006calibrating}, which offered a framework for the construction of private mechanisms and a rigorous notion of what it means to limit privacy loss when performing statistical releases on sensitive data. DP requires that the randomized algorithm $M$ performing the release has the property that for any two datasets $X$ and $X'$ which differ in one individual's data (\emph{adjacent datasets}), the distributions of $M(X)$ and $M(X')$ are ``close.'' Since this seminal paper, many variants of differential privacy have been proposed; the variants primarily differ in how they formulate the notion of closeness. For example, pure and approximate DP are phrased in terms of bounding the probabilities of sets of outputs, according to $M(X)$ versus $M(X')$ \citep{dwork2014algorithmic}, whereas concentrated \citep{bun2016concentrated} and Renyi \citep{mironov2017renyi} DP are based on bounding a divergence between $M(X)$ and $M(X')$. 

\citet{Wasserman2010:StatisticalFDP} and \citet{Kairouz2017} showed that pure and approximate DP can be expressed as imposing constraints on the type I and type II errors of hypothesis tests which seek to discriminate between two adjacent databases. Recently \citet{dong2022gaussian} expanded this view, defining $f$-DP which allows for an arbitrary bound to be placed on the receiver-operator curve (ROC) or \emph{tradeoff} function when testing between two adjacent databases. It is shown in \citet{dong2022gaussian} that $f$-DP retains many of the useful properties of DP such as post-processing, composition, and subsampling and allows for loss-less calculation of the privacy cost of each of these operations. Furthermore, as special cases, $f$-DP contains both pure and approximate DP, and contains relatives of divergence-based notions of DP as well (e.g., Gaussian DP (GDP) is slightly stronger than zero-concentrated DP). 

In this paper, we study two basic and fundamental privacy questions in the framework of $f$-DP. The first is based on {optimizing} the basic mechanism of adding independent noise to a real-valued statistic, and the second is about constructing hypothesis tests under the constraint of DP. We show that in fact, the two problems are intricately related, where the ``canonical additive noise distribution'' enables private $p$-values ``for free,'' and gives a closed form construction of certain optimal hypothesis tests. 

One of the most basic and fundamental types of privacy mechanisms is noise addition, where independent noise is added to a real-valued statistic. {Additive mechanisms are not only widely used by themselves, but are also often a key ingredient to more complex mechanisms such as functional mechanism \citep{zhang2012functional}, objective perturbation \citep{chaudhuri2011differentially}, stochastic gradient descent \citep{abadi2016deep}, and the sparse vector technique \citep{dwork2009complexity}, to name a few. The oldest and most widely used additive mechanisms are the Laplace and Gaussian mechanisms, but there have since been many proposed distributions which satisfy different definitions of DP.} 
A natural question is what noise distributions are ``optimal" or ``canonical" for a given definition of privacy. The geometric mechanism/discrete Laplace mechanism is optimal for $\ep$-DP counts, in terms of maximizing Bayesian utility \citep{ghosh2012universally}, the staircase mechanism is optimal for $\ep$-DP in terms of $\ell_1$ or $\ell_2$-error \citep{geng2015optimal}, and the truncated-uniform-Laplace (Tulap) distribution generalizes both the discrete Laplace and staircase mechanisms and is optimal for $(\ep,\de)$-DP in terms of generating uniformly most powerful (UMP) hypothesis tests and uniformly most accurate (UMA) confidence intervals for Bernoulli data \citep{awan2018differentially,awan2020differentially}. With divergence-based definitions of privacy, Gaussian noise is argued to be canonical for (zero) concentrated DP \citep{bun2016concentrated}, and the sinh-normal distribution is argued to be canonical for truncated concentrated DP \citep{bun2018composable}. 

In this paper, we give {the first} formal definition of a \emph{canonical noise distribution} (CND) which {captures the notion of whether a distribution tightly matches a privacy guarantee $f$-DP.} 
We show that the Gaussian distribution is canonical for Gaussian differential privacy (GDP), and the Tulap distribution is canonical for $(\ep,\de)$-DP. We prove that a CND always exists for any {nontrivial} symmetric tradeoff function $f$, and give a general construction to generate a CND given {any} tradeoff function $f$. {This construction results in the first general mechanism for an arbitrary $f$-DP guarantee.} {In the special case of $(\ep,\de)$-DP, our} construction results in the Tulap distribution.

Another basic privacy question is on the nature of DP hypothesis tests. \citet{awan2018differentially}  showed that for independent Bernoulli data, there exists uniformly most powerful (UMP) $(\ep,\de)$-DP tests which are based on the Tulap distribution, enabling ``free'' private $p$-values, at no additional cost to privacy. 

We show that for an arbitrary tradeoff function $f$ {and} any $f$-DP test, a free private $p$-value can always be generated in terms of a CND for $f$. We also extend the main results of \citet{awan2018differentially} from $(\ep,\de)$-DP to $f$-DP as well as from i.i.d. Bernoulli variables to exchangeable binary data. This extension shows that the CND is the proper generalization of the Tulap distribution, and gives an explicit construction of the most powerful $f$-DP test for binary data, {the first DP hypothesis test for a general $f$-DP guarantee.}

We end with an extensive application to private difference-of-proportions testing. Testing two population proportions is a very basic and common hypothesis testing setting that arises when there are two groups with binary responses, such as A/B testing, clinical trials, and observational studies. As such, the techniques for testing these hypotheses are standardized and included in most introductory statistics textbooks. However, there currently lacks a theoretically based private test with accurate sensitivity and specificity. \citet{karwa2018correspondence} was the first attempt at tackling the private difference-of-proportions testing problem, and recently \citet{awan2020one} used a novel asymptotic method to calibrate the type I errors of a related DP test in large sample sizes. Our application builds off of these prior works, with a much improved analysis and strong theoretical basis in the $f$-DP framework.



We show that in general, there does not exist a UMP unbiased $f$-DP test for this problem, but using our earlier results on most powerful $f$-DP tests for binary data, we show that there does exist a UMP unbiased ``semi-private'' test, which satisfies a weakened version of $f$-DP. While this test does not satisfy $f$-DP, it does provide an upper bound on the power of any $f$-DP test, and gives intuition on the structure of a good $f$-DP test for this problem. We then design a novel $f$-DP test for the testing problem, based on using CNDs and an expression of the sampling distribution in terms of characteristic functions, enabling efficient computation via Gil-Pelaez inversion. Using theory of the parametric bootstrap, we argue that the test is asymptotically unbiased and has asymptotically accurate type I errors.  Empirically, we show that the test has more accurate type I errors and $p$-values than the popularly used normal approximation test, and that the power of our proposed test is nearly as powerful as the semi-private UMP unbiased test. In the case of $\ep$-DP, we demonstrate through simulations that our test has higher power than any $(\ep/\sqrt 2)$-DP  test, indicating that it is near optimal. Furthermore,  our test has the benefit of allowing for optimal hypothesis tests and confidence intervals for each of the population proportions, using the techniques of \citet{awan2020differentially}, as the proposed test is based on the same DP summary statistics. 

{\bfseries \large Organization} In Section \ref{s:background} we review background on hypothesis tests and differential privacy. In Section \ref{s:canonical}, we introduce the concept of a \emph{canonical noise distribution}, give some basic properties of CNDs, and provide a general construction of a CND for any $f$-DP privacy notion. In Section \ref{s:HT}, we show that any $f$-DP hypothesis test must satisfy constraints based on the function $f$, we give a general result for ``free'' DP $p$-values given an $f$-DP test function, and develop most powerful $f$-DP tests for exchangeable binary data. In Section \ref{s:application}, we consider the problem of privately testing the difference of population proportions. Specifically in Section \ref{s:semiprivate}, we develop a uniformly most powerful unbiased ``semi-private'' test, which gives an upper bound on the power of any $f$-DP test, in Section \ref{s:inversion} we propose an $f$-DP test based on the inversion of characteristic functions, and in Section \ref{s:simulations} we evaluate the type I error and power of our two sample test in simulations. Proofs and technical details are deferred to the supplementary materials.

{\bfseries \large  Related work}
Private hypothesis testing was first tackled by \citet{vu2009differential}, developing DP tests for population proportions as well as independence tests for $2\times 2$ tables. These tests use additive Laplace noise, and use a normal approximation to the sampling distribution to calibrate the type I errors.  \citet{Solea2014} develop tests for normally distributed data using similar techniques.  \citet{Wang2015:Revisiting} and \citet{Gaboardi2016} expanded on \citet{vu2009differential}, developing additional tests for multinomials. \citet{Wang2015:Revisiting} developed asymptotic sampling distributions for their tests, verifying the type I errors via simulations, whereas  \citet{Gaboardi2016} use Monte Carlo methods to estimate and control the type I error. \citet{Uhler2013} develop DP $p$-values for chi-squared tests of GWAS data, and derive the exact sampling distribution of the noisy statistic. \citet{kifer2016new} develop private $\chi^2$ tests for goodness-of-fit and identity problems which are designed to have the same asymptotic properties as the non-private tests. 

Under ``local differential privacy,'' a notion of DP where even the data curator does not have access to the original dataset, \citet{Gaboardi2018} develop multinomial tests based on asymptotic distributions.

The first uniformly most powerful hypothesis tests under DP for the testing of i.i.d. Bernoulli data were developed by \citet{awan2018differentially}. Their tests were based on the Tulap distribution, an extension of the discrete Laplace and Staircase mechanisms. \citet{awan2020differentially} expanded on these results to offer UMP unbiased two-sided DP tests as well as optimal DP confidence intervals and confidence distributions for Bernoulli data. 

Given a DP output, \citet{Sheffet2017} and \citet{Barrientos2019} develop significance tests for regression coefficients.  \citet{wang2018statistical} develop general approximating distributions for DP statistics, which can be used to construct hypothesis tests and confidence intervals, but which are only applicable to limited models. \citet{awan2020one} also provide asymptotic techniques that can be used to conduct approximate hypothesis tests, given DP summary statistics, but which have limited accuracy in finite samples.

Rather than the classical regime of fixing the type I error, and minimizing the type II error, there are several works on DP testing, where the goal is to optimize the sample complexity required to generate a test which places both the type I and type II errors below a certain threshold. \citet{Canonne2019} show that for simple hypothesis tests, a noisy clamped likelihood ratio test achieves optimal sample complexity. \citet{cai2017priv} and \citet{kakizaki2017differentially} both study the problem of $\ep$-DP discrete identity testing from  the sampling complexity perspective. \citet{aliakbarpour2018differentially} also studies $\ep$-DP identitiy testing as well as DP equivalence testing. \citet{acharya2018differentially} study identity and closeness testing of discrete distributions in the $(\ep,\de)$-DP framework. {\citet{bun2019private} derive sample complexity bounds for differentially privacy hypothesis selection, where the goal is to choose among a set of potential data generating distributions, which one has the smallest total variation distance to the true distribution. \citet{suresh2021robust} develop an alternative to the Neyman-Pearson Lemma for simple hypotheses, which is robust to misspecification of the hypotheses; due to the connection between robustness and differential privacy \citep{dwork2009differential}, this could be a promising tool for developing private tests.}

Outside the hypothesis testing setting, there is some additional work on optimal population inference under DP. \citet{Duchi2017} give general techniques to derive minimax rates under local DP, and in particular give minimax optimal point estimates for the mean, median, generalized linear models, and nonparametric density estimation.  \citet{Karwa2017} develop nearly optimal confidence intervals for normally distributed data with finite sample guarantees, which could potentially be inverted to give approximately UMP unbiased tests.

Notable works that develop optimal DP mechanisms for general loss functions are \citet{geng2015optimal} and \citet{ghosh2012universally}, which give mechanisms that optimize  symmetric convex loss functions, centered at a real-valued statistic.
Similarly, \citet{awan2021structure} derive optimal mechanisms among the class of $K$-Norm Mechanisms for a fixed statistic and sample size.

\section{Background}\label{s:background}
In this section, we review some basic notation as well as background on differential privacy. Notation and terminology regarding hypothesis testing is deferred to Appendix \ref{s:testing}.

we say that a real-valued function $f(x)$ is \emph{increasing} (\emph{decreasing}) if $a\leq b$ implies $f(a)\leq f(b)$ (resp. $f(a)\geq f(b)$). We say that $f$ is \emph{strictly increasing} (\emph{strictly decreasing}) if $a<b$ implies $f(a)<f(b)$ (resp. $f(a)>f(b)$). Given an increasing function $f$, we define its inverse to be $f^{-1}(y) = \inf\{x\in \RR\mid y\leq f(x)\}$. For a decreasing function $f$, the inverse is defined to be $f^{-1}(y) = \inf\{x\in \RR\mid y\geq f(x)\}$.

For a real-valued random variable $X$, its \emph{cumulative distribution function} (cdf) is defined as $F_X(t) = P(X\leq t)$, and its \emph{quantile function} is $F^{-1}_X$. A real-valued random variable is \emph{continuous} if its cdf $F_X(t)$ is continuous in $t$, and $X$ is \emph{symmetric about zero} if $F_X(t) = 1-F_X(-t)$. For a random variable $X\sim P$, with cdf $F$, we use $P$ and $F(\cdot)$ interchangeably to denote the distribution of $X$.  

\subsection{Differential Privacy}
In this section, we review the definition of $f$-DP which is formulated in terms of constraints on hypothesis tests and relate it to other notions of DP in the literature. A \emph{mechanism} $M$ is a randomized algorithm that takes as input a database $D$, and outputs a (randomized) statistic $M(D)$ in an abstract space $\mscr Y$.  Given two databases $X$ and $X'$ which differ in one person's contribution, a mechanism $M$ satisfies differential privacy if given the output of $M$ it is difficult to determine whether the original database was $X$ or $X'$.

The notion ``differing in one person's contribution'' is often formalized in terms of a metric. In this paper, we  use the Hamming metric, which is defined as follows: For any set $\mscr X$, we write $\mscr X^n = \{(x_1,x_2,\ldots, x_n) \mid x_i \in \mscr X\text{ for all } 1\leq i\leq n\}$. The \emph{Hamming metric} on $\mscr X^n$ is defined by $H(X,X') =\#\{i\mid X_i\neq X'_i\}$. If $H(X,X')\leq 1$, we call $X$ and $X'$ \emph{adjacent databases}. Note that by using the Hamming metric, we assume that the sample size $n$ is a public value and does not require privacy protection. 


All of the major variants of DP state that given a randomized algorithm $M$, for any two adjacent databases $X$, $X'$, the distributions of $M(X)$ and $M(X')$ should be ``similar.'' While many DP variants measure similarity in terms of divergences, recently \citet{dong2022gaussian} proposed $f$-DP, which formalizes similarity in terms of constraints on hypothesis tests. 

For two probability distributions $P$ and $Q$, the \emph{tradeoff function} $T(P,Q):[0,1]\rightarrow [0,1]$ is defined as $T(P,Q)(\alpha) = \inf \{1-\EE_Q \phi \mid \EE_P(\phi)\leq \alpha\}$, where the infinimum is over all measurable tests $\phi$. The tradeoff function can be interpreted as follows: If $T(P,Q)(\alpha)=\beta$, then the most powerful test $\phi$ which is trying to distinguish between $H_0=\{P\}$ and $H_1:\{Q\}$ at type I error $\leq \alpha$ has type II error $\beta$. A larger tradeoff function means that it is harder to distinguish between $P$ and $Q$. 
Note that the tradeoff function is closely related to the receiver-operator curve (ROC), and captures the difficulty of distinguishing between $P$ and $Q$. A function $f:[0,1]\rightarrow [0,1]$ is a tradeoff function if and only if $f$ is convex, continuous, decreasing, and $f(x) \leq 1-x$ for all $x \in [0,1]$ \citep[Proposition 2.2]{dong2022gaussian}.  We say that a tradeoff function $f$ is \emph{nontrivial} if $f(\alpha)<1-\alpha$ for some $\alpha\in (0,1)$; that is if $f$ is not identically equal to $1-\alpha$. 

  \begin{figure}
      \centering
      \includegraphics[width=.5\linewidth]{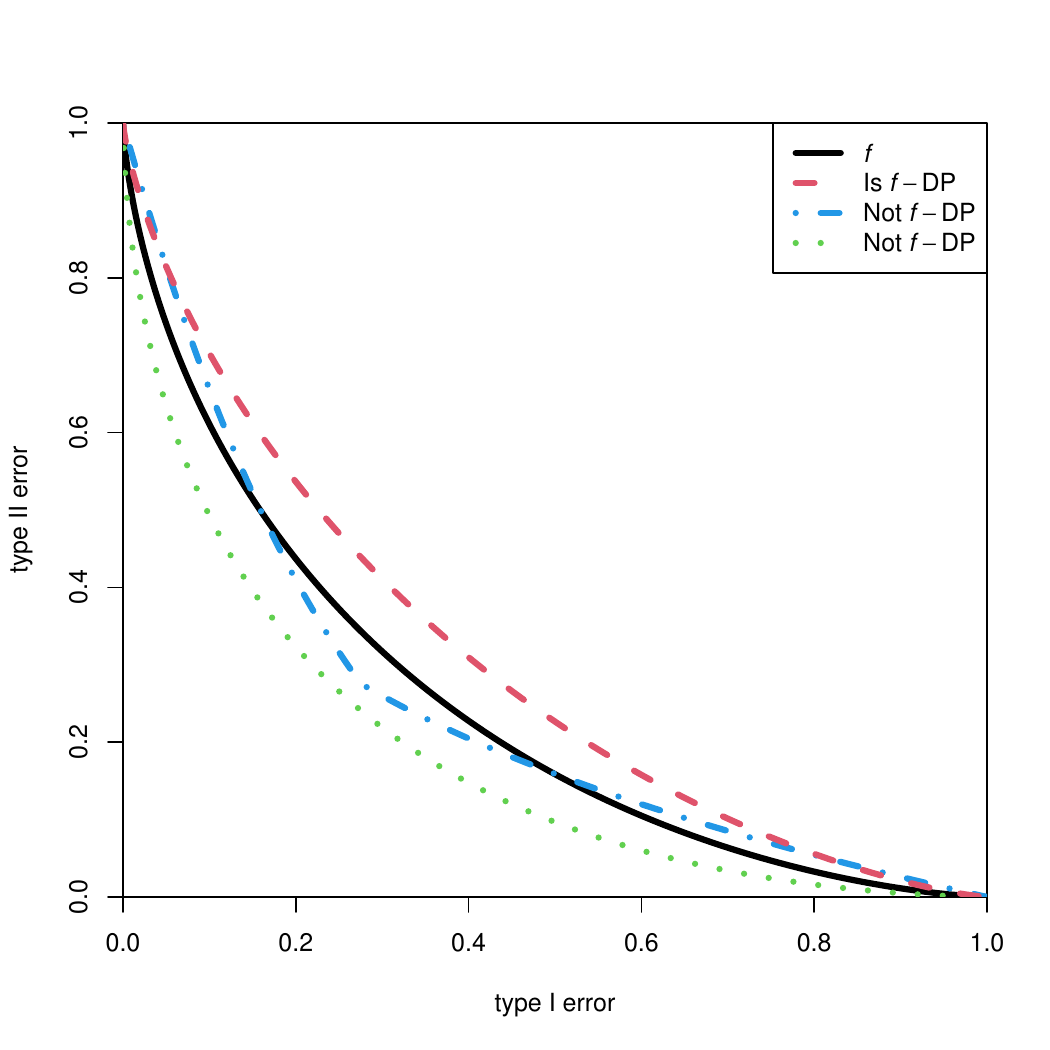}
      \caption{A plot of three examples of $T(M(D),M(D'))$. Only the red, dashed tradeoff curve satisfies $f$-DP.}
      \label{fig:tradeoff}
  \end{figure}

\begin{defn}[$f$-DP: \citealp{dong2022gaussian}]\label{def:fDP}
 Let $f$ be a tradeoff function. A mechanism $M$ satisfies $f$-DP if \[T(M(D),M(D'))\geq f\] for all $D,D'\in \mscr X^n$ such that $H(D,D')\leq 1$.
\end{defn}

See Figure \ref{fig:tradeoff} for examples of tradeoff functions which do and do not satisfy $f$-DP for a particular $f$. In the above definition, the inequality $T(M(D),M(D'))\geq f$ is shorthand for $T(M(D),M(D'))(\alpha)\geq f(\alpha)$ for all $\alpha\in [0,1]$. Without loss of generality we can assume that $f$ is symmetric: $f(\alpha) = f^{-1}(\alpha)$, where $f^{-1}(\alpha) = \inf \{t\in [0,1]\mid f(t)\leq \alpha\}$. This is due to the fact that adjacency of databases is a symmetric relation \citep[Proposition 2.4]{dong2022gaussian}. 
For the remainder of the paper, we assume that $f$-DP also requires this symmetry.

\citet{Wasserman2010:StatisticalFDP} and \citet{Kairouz2017} both showed that $(\ep,\de)$-DP can be expressed in terms of hypothesis testing, and in fact \citet{dong2022gaussian} showed that $(\ep,\de)$-DP can be expressed as a special case of $f$-DP. 

\begin{defn}[$(\ep,\de)$-DP: \citealp{dwork2006calibrating}]\label{def:epDeDP}
Let $\ep>0$ and $\de\geq 0$, and define $f_{\ep,\de}(\alpha) = \max\{0,1-\de-\exp(\ep)\alpha,\exp(-\ep)(1-\de-\alpha)\}$. Then we say that a mechanism $M$ satisfies \emph{$(\ep,\de)$-DP} if it satisfies $f_{\ep,\de}$-DP.
\end{defn}

Another notable special case of $f$-DP is Gaussian DP ($\mu$-GDP). \citet{dong2022gaussian} showed that $\mu$-GDP is perhaps the most natural single parameter privacy definition, due to the central limit theorem for composition. 
Gaussian DP is closely related to zero-concentrated differential privacy (zCDP) \citep{bun2016concentrated}, a very popular relaxation of DP. GDP is slightly stronger than zCDP in that a mechanism satisfying GDP satisfies zCDP \citep[Corollary B.6]{dong2022gaussian}, but the converse is not true \citep[Proposition B.7]{dong2022gaussian}.

\begin{defn}[Gaussian differential privacy: \citealp{dong2022gaussian}]\label{def:GDP}
Let $\mu>0$ and define 
\[G_{\mu}(\alpha) = T(N(0,1),N(\mu,1))(\alpha) = \Phi(\Phi^{-1}(1-\alpha)-\mu),\] where $\Phi$ is the cdf of $N(0,1)$. We say that a mechanism $M$ satisfies $\mu$-\emph{Gaussian differential privacy} ($\mu$-GDP) if it is $G_\mu$-DP.
\end{defn}


\section{Canonical noise distributions}\label{s:canonical}

One of the most basic techniques of designing a privacy mechanism is through adding data-independent noise. The earliest DP mechanisms add either Laplace or Gaussian noise, and there have since been several works developing optimal additive mechanisms including the geometric (discrete Laplace) \citep{ghosh2012universally}, truncated-uniform-Laplace (Tulap) \citep{awan2018differentially,awan2020differentially}, and staircase mechanisms \citep{geng2015optimal}. There have also been several works exploring multivariate and infinite-dimensional additive mechanisms such as $K$-norm \citep{hardt2010geometry,awan2021structure}, elliptical perturbations \citep{reimherr2019elliptical}, and Gaussian processes \citep{hall2013differential,mirshani2019formal}.

While there are many choices of additive mechanisms to achieve $f$-DP, we are interested in adding the least noise necessary in order to maximize the utility of the output. Rather than measuring the amount of noise by its variance or entropy, we focus on whether the privacy guarantee is tight. 

In this section, we introduce the concept \emph{canonical noise distribution} (CND), which captures whether a real-valued distribution is perfectly tailored to satisfy $f$-DP. We formalize this in Definition \ref{def:CND}. We then show that for any symmetric $f$, we can always construct a CND, where the construction is given in Definition \ref{def:CNDsynthetic} and proved to be a CND in Theorem \ref{thm:canonical}. We will see in Section \ref{s:HT} that CNDs are fundamental for understanding the nature of $f$-DP hypothesis tests, for constructing ``free'' DP $p$-values, and for the design of uniformly most powerful $f$-DP tests for binary data. We also see in Section \ref{s:application} that CNDs are central to our application of difference-of-proportions tests as well. 

Before we define canonical noise distribution, we must introduce the \emph{sensitivity} of a statistic, a central concept of DP \citep{dwork2006calibrating}. A statistic $T:\mscr X^n\rightarrow \RR$ has \emph{sensitivity} $\Delta>0$ if $|T(X)-T(X')|\leq \Delta$ for all $H(X,X')\leq 1$. As the sensitivity measures how much a statistic can change when one person's data is modified, additive noise must be scaled proportionally to the sensitivity in order to protect privacy. 

\begin{defn}\label{def:CND}
  Let $f$ be a symmetric nontrivial tradeoff function. A {continuous} distribution function $F$ is a \emph{canonical noise distribution} (CND) for $f$ if 
  \begin{enumerate}
      \item {for every statistic $S:\mscr X^n\rightarrow \RR$} with sensitivity $\Delta>0$, and $N\sim F(\cdot)$, the mechanism $S(X) + \Delta N$ satisfies $f$-DP. Equivalently, for every $m\in [0,1]$, $T(F(\cdot),F(\cdot-m))\geq f$,
      \item $f(\alpha)=T(F(\cdot),F(\cdot-1))(\alpha)$ for all $\alpha \in (0,1)$,
      \item $T(F(\cdot),F(\cdot-1))(\alpha) = F(F^{-1}(1-\alpha)-1)$ for all $\alpha \in (0,1)$,
      \item $F(x) = 1-F(-x)$ for all $x\in \RR$; that is, $F$ is the cdf of a random variable which is symmetric about zero.
  \end{enumerate}
\end{defn}

The most important conditions of Definition \ref{def:CND} are 1 and 2, which state that the distribution can be used to satisfy $f$-DP and that the privacy bound is tight. {For property 1, the value $m$ can be interpreted as the quantity $|S(X)-S(X')|/\Delta$; then by the symmetry of $F$, it can be seen that $T(S(X)+\Delta N,S(X')+\Delta N)=T(F(\cdot),F(\cdot-m))$.} Condition 3 of Definition \ref{def:CND} gives a closed form for the tradeoff function, and is equivalent to requiring that the optimal rejection set for discerning between $F(\cdot)$ and $F(\cdot-1)$ is of the form $(x,\infty)$ for some $x\in \RR$. The last condition of Definition \ref{def:CND} enforces symmetry of the distribution, which makes CNDs much easier to work with. 

Finally note that conditions 1 and 2 are not equivalent. Adding excessive noise would satisfy 1, but not 2, whereas a mechanism which fails $T(F(\cdot),F(\cdot-m))\geq T(F(\cdot),F(\cdot-1))$ for some $m\in (0,1)$ would not satisfy property 1. The following example illustrates both cases.

\begin{example}
Consider the discrete Laplace mechanism, which has cdf $F(t) = \frac{1-b}{1+b}b^{|t|}$ for $t\in \ZZ$ and $b\in (0,1)$. Then it can be verified that the discrete Laplace distribution with $b=\exp(-\ep)$ satisfies $T(F(\cdot),F(\cdot-1))=f_{\ep,0}$, but not part 1 of Definition \ref{def:CND}. For example, if $S(X)=0$ and $S(X')=.1$, adding discrete Laplace noise {$N\sim F$ results in distributions with disjoint support, since 
$S(X)+N$ takes values in $\ZZ$, whereas $S(X')+N$ takes values in $\ZZ+.1$. As the supports of the distributions are disjoint, we can have zero type I and type II error when testing between $X$ and $X'$, violating the $f_{\ep,0}$ bound.}

It is well known that the continuous Laplace mechanism with scale parameter $\Delta/\ep$ satisfies $\ep$-DP, when added to a $\Delta$-sensitivity statistic, and so satisfies property 1 of Definition \ref{def:CND} for $f_{\ep,0}$. However, as \citet{dong2022gaussian} noted, it can be verified that the Laplace distribution does not satisfy property 2 of Definition \ref{def:CND}, as there exists $\alpha\in (0,1)$ such that the tradeoff function is strictly greater than $f_{\ep,0}$ at $\alpha$.
\end{example}

{
\begin{remark}
Note that property 2 of Definition \ref{def:CND} captures the intuition that a privacy mechanism should match the tradeoff function in the privacy guarantee to avoid introducing excessive noise. While this is indeed an intuitive idea, this has never previously been formalized into a precise criterion for a privacy mechanism, as we do in Definition \ref{def:CND}. Furthermore, no prior work has attempted to build a mechanism that matches the tradeoff function for an arbitrary $f$-DP guarantee. In Theorem \ref{thm:canonical}, we not only prove that a CND exists, but give a construction to build a CND for every $f$. 
\end{remark}}

\begin{example}
[CND for GDP]
The distribution $N(0,1/\mu)$, which has cdf $\Phi(1/\mu)$ ($\Phi$ is the cdf of a standard normal) is a CND for $G_\mu$, defined in Definition \ref{def:GDP}. Property 1 is proved in \citep{dong2022gaussian}, properties 2 and 3 are easily verified, and the distribution is obviously symmetric. \citet{dong2022gaussian} state that ``GDP precisely characterizes the Gaussian mechanism.'' From the opposite perspective, we argue that this is because the normal distribution is a CND for $G_\mu$.
\end{example}

\begin{prop}\label{prop:cnd}
 Let $f$ be a symmetric nontrivial tradeoff function. Let $F$ be a CND for $f$, and $G$ be another cdf such that $T(G(\cdot),G(\cdot-1))\geq f$. Let $N\sim F$ and $M\sim G$. Then there exists a randomized function $\mathrm{Proc}:\RR\rightarrow \RR$ which satisfies $\mathrm{Proc}(N)\overset d=M$ and $\mathrm{Proc}(N+1)\overset d=M+1$, where ``$\overset d=$'' means \emph{equal in distribution}.
\end{prop}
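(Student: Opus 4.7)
The plan is to recognize this proposition as a comparison-of-experiments statement and invoke the randomization criterion (Blackwell--Sherman--Stein theorem) for pairs of distributions. The two relevant binary experiments are $\mathcal{E}_F = (F(\cdot), F(\cdot-1))$ and $\mathcal{E}_G = (G(\cdot), G(\cdot-1))$, both on the Polish space $\RR$. I want to show that $\mathcal{E}_F$ is Blackwell more informative than $\mathcal{E}_G$, since then the randomization criterion yields a Markov kernel taking the first experiment to the second, which is precisely the randomized function $\mathrm{Proc}$ asked for.

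First, I would record the tradeoff-function ordering. By property 2 of Definition \ref{def:CND}, $T(F(\cdot),F(\cdot-1))(\alpha) = f(\alpha)$ for every $\alpha\in(0,1)$; by continuity of tradeoff functions and their endpoint values, this extends to all of $[0,1]$, so $T(F(\cdot),F(\cdot-1)) = f$. By hypothesis, $T(G(\cdot),G(\cdot-1)) \geq f$ pointwise. Hence
\[
T(F(\cdot),F(\cdot-1)) \;\leq\; T(G(\cdot),G(\cdot-1)) \quad \text{pointwise on } [0,1],
\]
which is exactly the statement that $\mathcal{E}_F$'s Neyman--Pearson frontier dominates $\mathcal{E}_G$'s, i.e.\ $\mathcal{E}_F$ is Blackwell-sufficient for $\mathcal{E}_G$.

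Second, I would apply the randomization criterion for binary experiments on standard Borel spaces (for which $\RR$ qualifies): if the tradeoff function of one binary experiment is pointwise dominated by that of another, then there exists a Markov kernel $K:\RR \to \RR$ with $K\cdot F(\cdot) = G(\cdot)$ and $K\cdot F(\cdot-1) = G(\cdot-1)$. Realizing this kernel via an independent uniform random variable gives a randomized function $\mathrm{Proc}:\RR\to\RR$. Pushing forward the laws through $\mathrm{Proc}$ yields $\mathrm{Proc}(N) \overset d= M$ (since $N\sim F$ maps to $G$) and $\mathrm{Proc}(N+1) \overset d= M+1$ (since $N+1\sim F(\cdot-1)$ maps to $G(\cdot-1)$, which is the law of $M+1$).

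The main obstacle is invoking the randomization criterion cleanly. The textbook form (cf.\ Torgersen or Le Cam) is usually stated for general experiments with convex-function risks, but the binary case reduces to exactly the tradeoff-function ordering we already have, so no additional verification is needed. A secondary technicality is measurability: since $F$ is continuous (hence $\RR$ with $F$ is a nonatomic standard Borel space) and $G$ is arbitrary, the standard constructions go through without pathology. If a more self-contained argument is preferred, one can instead build $\mathrm{Proc}$ directly as a mixture indexed by the likelihood ratio $dG(\cdot-1)/dG(\cdot)$ matched to $dF(\cdot-1)/dF(\cdot)$, but invoking Blackwell is the shortest route.
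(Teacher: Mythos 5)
Your proof is correct and follows essentially the same route as the paper: the paper also deduces $T(F(\cdot),F(\cdot-1))=f\leq T(G(\cdot),G(\cdot-1))$ from property 2 of Definition \ref{def:CND} and then invokes \citet[Theorem 2.10]{dong2021gaussian}, which is exactly the Blackwell randomization criterion for binary experiments that you apply. Your direction of the Blackwell ordering and the identification of $F(\cdot-1)$, $G(\cdot-1)$ with the laws of $N+1$, $M+1$ are both handled correctly, so nothing further is needed.
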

Proposition \ref{prop:cnd} follows from property 2 in Definition \ref{def:CND} along with \citet[Theorem 2.10]{dong2022gaussian}, which is based on Blackwell's Theorem \citep{blackwell1950comparison}. Proposition \ref{prop:cnd} shows that if we add noise from a CND to a statistic $S(X)$ versus $S(X)+1$, we can post-process the result to obtain the same result as if we added noise from another distribution that achieves $f$-DP. This shows in some sense that a CND adds the least noise necessary to achieve $f$-DP. {Note that Proposition \ref{prop:cnd} does not imply that a CND is optimal in every sense: for example, \citet{geng2015optimal} derived the minimum variance additive $(\ep,0)$-DP mechanism, which is not a CND for $f_{\ep,0}$. We will see in Section \ref{s:HT} that the properties of Definition \ref{def:CND} do lead to optimal properties of DP hypothesis tests.}

In the remainder of this section, we show that given any tradeoff function $f$, we can always construct a canonical noise distribution (CND), but that a CND need not be unique.

\begin{restatable}{lem}{lemrecurrence}\label{lem:recurrence}
  Let $f$ be a symmetric nontrivial tradeoff function and let $F$ be a CND for $f$. Then $F(x)=1-f(F(x-1))$ when $F(x-1)>0$ and $F(x)=f(1-F(x+1))$ when $F(x+1)<1$. 
\end{restatable}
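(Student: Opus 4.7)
My plan is to derive both recurrences from the single identity
\[ f(\alpha) = F\!\bigl(F^{-1}(1-\alpha)-1\bigr) \qquad \text{for all } \alpha\in(0,1), \]
which follows immediately by combining properties 2 and 3 of Definition \ref{def:CND}. The strategy is to substitute a well-chosen value of $\alpha$ (namely $F(x-1)$ for the first identity, and $1-F(x+1)$ for the second), use property 4 to rewrite $1-F(y)=F(-y)$, and read off the result.

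For the first identity, I would first dispose of the boundary case $F(x-1)=1$ trivially: by monotonicity $F(x)=1$, and any tradeoff function satisfies $0\le f(1)\le 1-1=0$, so both sides equal $1$. When $F(x-1)\in(0,1)$, substituting $\alpha=F(x-1)$ into the master identity and invoking symmetry to write $1-F(x-1)=F(1-x)$ gives
\[ f(F(x-1)) = F\!\bigl(F^{-1}(F(1-x))-1\bigr). \]
If $F$ is strictly increasing at $1-x$, then $F^{-1}(F(1-x))=1-x$ and the right-hand side collapses to $F(-x)=1-F(x)$, yielding $F(x)=1-f(F(x-1))$.

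The main obstacle is that $F$ may have a flat region straddling $1-x$. Suppose $[a,b]$ is the maximal flat interval of $F$ containing $1-x$, with common value $c := F(1-x) = 1-F(x-1)\in(0,1)$; then $F^{-1}(F(1-x)) = a$ rather than $1-x$. To rescue the argument I would show that $F$ must also be constant on the shifted interval $[a-1,b-1]$, because then $-x\in[a-1,b-1]$ gives $F(a-1)=F(-x)=1-F(x)$ by symmetry, so once more $f(F(x-1))=F(a-1)=1-F(x)$. The constancy is extracted from continuity of $f$: the map $\alpha\mapsto F(F^{-1}(1-\alpha)-1)$ equals $f(\alpha)$ and thus must be continuous on $(0,1)$ since every tradeoff function is continuous. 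However, the jump of $F^{-1}$ at $c$ produces a left limit of $F(b-1)$ and a right limit of $F(a-1)$ for this composite function at $\alpha=1-c$; continuity forces $F(a-1)=F(b-1)$, and monotonicity of $F$ then yields constancy on $[a-1,b-1]$.

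The second identity $F(x)=f(1-F(x+1))$ follows from the same template: substitute $\alpha=1-F(x+1)$ into the master identity, handle the boundary case $F(x+1)=0$ trivially (both sides equal $0$), and repeat the flat-region/continuity argument with $[a,b]$ the maximal flat interval now containing $x+1$ to conclude $F(a-1)=F(x)$. The delicate step throughout is this continuity argument for flat regions; once it is in hand the rest is routine manipulation using properties 2--4 of the CND definition.
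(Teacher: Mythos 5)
Your proposal is correct, and its skeleton matches the paper's: both proofs start from the identity $f(\alpha)=F(F^{-1}(1-\alpha)-1)$ on $(0,1)$ (properties 2 and 3 of Definition \ref{def:CND}), substitute $\alpha=F(x-1)$ (resp.\ $\alpha=1-F(x+1)$), use the symmetry $1-F(y)=F(-y)$, dispose of the boundary cases, and reduce the non-invertible case to the claim that if $F$ is constant on a maximal interval $[a,b]$ at a height $c\in(0,1)$, then it is also constant on $[a-1,b-1]$. Where you genuinely differ is in how that flat-interval-shift claim is proved. The paper isolates it as a separate technical lemma (Lemma \ref{lem:support}), proved by contradiction: one extracts a subinterval on which $F$ is invertible, deduces that $f$ must vanish at the corresponding levels, and then uses the Galois inequalities to push the interval outside the support of $F$, contradicting the hypothesis. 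You instead read the claim off the continuity of $f$: since $\alpha\mapsto F(F^{-1}(1-\alpha)-1)$ coincides with the continuous function $f$ on $(0,1)$, its two one-sided limits at $\alpha_0=1-c$ must agree, and the jump of $F^{-1}$ at the level $c$, namely $\lim_{p\uparrow c}F^{-1}(p)=a$ and $\lim_{p\downarrow c}F^{-1}(p)=b$, identifies those limits (via continuity of $F$) as $F(a-1)$ and $F(b-1)$; monotonicity then yields constancy on $[a-1,b-1]$, and hence $f(F(x-1))=F(a-1)=F(-x)=1-F(x)$ exactly as in the paper. Your route is shorter and arguably more transparent, since it avoids the support-endpoint bookkeeping of Lemma \ref{lem:support}; the only detail you should write out is the routine verification of the two one-sided limits of $F^{-1}$ at $c$ (using that $[a,b]$ is the maximal level set and that $F$ is a continuous cdf), which is precisely where maximality and continuity enter. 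With that spelled out, your argument, including the symmetric treatment of the second recurrence with $\alpha=1-F(x+1)$ and the trivial cases $F(x-1)=1$ and $F(x+1)=0$, is complete.
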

\begin{proof}[Proof sketch.]
The result follows from properties 2, 3, and 4 of Definition \ref{def:CND} along with some algebra of cdfs. 
\end{proof}

In the Lemma \ref{lem:recurrence}, we see that a CND satisfies an interesting recurrence relation. If we know the value $F(x)=c$ for some $x\in \RR$ and $c\in (0,1)$, then we know the value of $F(y)$ for all $y \in \ZZ+x$. This means that if we specify $F$ on an interval of length 1, such as $[-1/2,1/2]$, then $F$ is completely determined by the recurrence relation. While there are many choices to specify $F$ on $[-1/2,1/2]$, each of which may or may not lead to a CND. We show that using a particular linear function in $[-1/2,1/2]$ does indeed give a CND. The remainder of this section is devoted to this construction of a CND and the proof that it has the properties of Definition \ref{def:CND}.

\begin{defn}\label{def:CNDsynthetic}
  Let $f$ be a symmetric nontrivial tradeoff function, and let {$c\in [0,1/2)$} be the unique fixed point of $f$: $f(c)=c$. We define $F_f:\RR\rightarrow \RR$ as 
  \[ F_f(x) = \begin{cases}
  f(1-F_f(x+1))&x<-1/2\\
  c(1/2-x) + (1-c)(x+1/2)&-1/2\leq x\leq 1/2\\
  1-f(F_f(x-1))&x>1/2.\\
  \end{cases}\]
\end{defn}

In Definition \ref{def:CNDsynthetic}, the fact that there is a unique fixed point follows from the fact that $f$ is convex and decreasing, and so intersects the line $y=\alpha$ at a unique value. In Lemma \ref{lem:contraction} we establish that the fixed point $c$ lies in the interval $[0,1/2)$. Note that in Definition \ref{def:CNDsynthetic}, the cdf corresponds to a uniform random variable  on the interval $[-1/2,1/2]$, but due to the recursive nature of $F_f$ and the fact that $f$ is in general non-linear, the CND of Definition \ref{def:CNDsynthetic} need not be uniformly distributed on any other intervals. See Figure \ref{fig:cnd} for a plot of the pdf and cdf of the CND of Definition \ref{def:CNDsynthetic} corresponding to the tradeoff function $G_1$. 

The following proposition verifies that $F_f$ is a distribution function, as well as some other properties, such as continuity, symmetry, and concavity/convexity.

\begin{restatable}{prop}{proptradeoff}\label{prop:tradeoff}
  Let $f$ be a symmetric nontrivial tradeoff function, and let $F\defeq F_f$. Then 
  \begin{enumerate}
  \item $F(x)$ is a cdf for a symmetric, continuous, real-valued random variable,
  \item $F(x)$ satisfies $F(x) = 1-f(F(x-1))$ whenever $F(x-1)>0$ and $F(x)=f(1-F(x+1))$ whenever $F(x+1)<1$.
  \item $F'(x)$ is decreasing on $(-1/2,\infty)$ and increasing on $(-\infty,1/2)$,
  \item $F(x)$ is strictly increasing on $\{x\mid 0<F(x)<1\}$.
  \end{enumerate}
  \end{restatable}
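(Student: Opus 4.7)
My plan is to work inductively over the unit intervals $I_k = [k-1/2, k+1/2]$ for $k \in \ZZ$, dispatching parts 2, 1, 3, 4 in that order since each builds on the previous. A few preliminaries first: existence and uniqueness of the fixed point $c$ follow from $\alpha \mapsto f(\alpha) - \alpha$ being continuous and strictly decreasing on $[0,1]$ with $f(0) \ge 0$ and $f(1) = 0$; the bound $c < 1/2$ uses nontriviality, since $c = 1/2$ would give $f(1/2) = 1/2 = 1 - 1/2$, which combined with convexity and $f(\alpha) \le 1-\alpha$ forces $f \equiv 1-\alpha$. I would also record that $f = f^{-1}$ together with continuity of $f$ rules out plateaus of $f$ at positive heights (such a plateau would force $f^{-1}$ to jump), so $f$ is strictly decreasing on $\{f > 0\}$ and $f \circ f = \mathrm{id}$ on the image of $f$.

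For part 2, the recurrence holds by definition when $|x| > 1/2$. For $x \in [-1/2, 1/2]$, $x - 1 \in [-3/2, -1/2]$, so by definition $F(x-1) = f(1 - F(x))$; applying $f$ to both sides and using $f \circ f = \mathrm{id}$ then gives $F(x) = 1 - f(F(x-1))$. The other equality is symmetric. For part 1, I would prove by induction on $|k|$ that $F$ is continuous and nondecreasing on $I_k$ with values in $[0,1]$. On $I_0$ this is immediate from the linear formula (from $c$ to $1-c$). On $I_{k+1}$ for $k \ge 0$, $F(x) = g(F(x-1))$ with $g(y) := 1 - f(y)$ continuous and nondecreasing; boundary values agree by construction since $g(c) = 1 - c$, etc. The limit $F(x) \to 1$ at $+\infty$ reduces to analyzing the iteration $F_k := F(k-1/2) = g(F_{k-1})$ with $F_1 = 1-c > c = F_0$: monotonicity and boundedness give $F_k \uparrow L$ with $g(L) = L$, and fixed points of $g$ correspond to $f(\alpha) = 1 - \alpha$, which for nontrivial $f$ only occur at $\alpha \in \{0, 1\}$ by convexity, forcing $L = 1$. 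Symmetry $F(x) = 1 - F(-x)$ is immediate on $I_0$ and propagates via $F(-x) = f(1 - F(-x+1)) = f(F(x-1)) = 1 - F(x)$ using the inductive hypothesis and the recurrence just proved.

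For parts 3 and 4, I would establish that $F$ is concave on $[-1/2, \infty)$ (so $F'$ is nonincreasing there), and by symmetry convex on $(-\infty, 1/2]$. $F$ is linear on $I_0$; on $I_{k+1}$, $F$ is the composition of the concave increasing $g = 1 - f$ with the concave function $F(\cdot - 1)$, hence concave. To extend concavity across the boundary $k + 1/2$, I would check $F'((k+1/2)^-) \ge F'((k+1/2)^+)$; by the chain rule on the recurrence, both one-sided derivatives at $k+1/2$ equal $g'(F(k-1/2))$ times the corresponding one-sided derivative at $k - 1/2$, so by induction the inequality reduces to the base case at $x = 1/2$. There $F'(1/2^-) = 1 - 2c$ and $F'(1/2^+) = g'(c)(1-2c) = -f'(c)(1-2c)$; differentiating $f \circ f = \mathrm{id}$ at the fixed point $c$ gives $f'(c)^2 = 1$, so $f'(c) = -1$ (as $f$ is decreasing) and the one-sided derivatives in fact agree. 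The non-smooth case is handled identically with subdifferentials, using $f = f^{-1}$ to equate the relevant one-sided slopes. Part 4 then follows: where $F(x) \in (0, 1)$, the recurrence forces $F(x-1)$ to lie in the region where $f$ is strictly decreasing, so the positive slope $1 - 2c > 0$ on $I_0$ propagates to strict monotonicity throughout $\{x : 0 < F(x) < 1\}$.

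The main technical burden is the bookkeeping in the induction, particularly matching the one-sided derivatives across interval boundaries and handling the degenerate cases (namely $c = 0$, the ``dead zone'' case where $f \equiv 0$ on some $[\alpha_0, 1]$ as with $(\ep, \delta)$-DP, and non-smoothness of $f$ at $c$). The convergence $F_k \to 1$ also relies on nontriviality of $f$ in a subtle way through the classification of fixed points of $g$.
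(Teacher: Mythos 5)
Your overall strategy is the same as the paper's: prove the recurrence first, then build up continuity, symmetry, monotonicity and the tail limits by induction over the unit intervals, get part 3 from concavity of $1-f$ plus a comparison of one-sided derivatives at the half-integer break points, and deduce part 4 from part 3 together with strict increase on $[-1/2,1/2]$ and $F\to 1$. Your treatment of the tail limit (monotone bounded iteration of $g=1-f$, then ruling out interior fixed points of $g$ by convexity and nontriviality) is a slightly more elementary substitute for the paper's contraction-mapping lemma, and it works.

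There is, however, one step that is wrong as stated: the claim that the one-sided derivatives of $F$ agree at $x=1/2$ because differentiating $f\circ f=\mathrm{id}$ at the fixed point gives $f'(c)=-1$, together with the remark that the non-smooth case is handled by "equating" the relevant one-sided slopes. For $f_{\ep,\de}$ the fixed point $c=\tfrac{1-\de}{1+e^{\ep}}$ is exactly the kink of $f$, so $f$ is not differentiable at $c$: the left and right slopes are $-e^{\ep}$ and $-e^{-\ep}$. Symmetry only gives that their product is $1$, not that either equals $-1$, and correspondingly the constructed CND (the Tulap distribution) has $F'(1/2^-)=1-2c$ but $F'(1/2^+)=e^{-\ep}(1-2c)$ — the density genuinely drops at every half-integer, so the one-sided derivatives of $F$ do not agree. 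What your concavity argument actually needs is only the inequality $F'(1/2^+)\le F'(1/2^-)$, i.e. that the right slope $s_+$ of $f$ at $c$ satisfies $s_+\ge -1$; this does follow from your symmetry observation, since $s_-\le s_+<0$ and $s_-s_+=1$ force $s_-\le -1\le s_+$ (the paper phrases this as $-f'\le 1$ to the right of the point of symmetry). The same correction is needed in your inductive step across $k+1/2$: the one-sided derivatives there are $g'\bigl(F(k-1/2)^{\pm}\bigr)\cdot F'\bigl((k-1/2)^{\pm}\bigr)$, and since $g$ may itself have a corner at $F(k-1/2)$ you should propagate the inequality using $g'(t^-)\ge g'(t^+)>0$ (concavity of $g$) rather than expect equality to "reduce to the base case." With these replacements your argument closes; one further small point in part 2 is to note explicitly that the hypothesis $F(x-1)>0$ guarantees $1-F(x)<f(0)$ (since $f(1-F(x))=F(x-1)>0$ and, by symmetry, $\{f>0\}=[0,f(0))$), which is what licenses applying $f\circ f=\mathrm{id}$ to $1-F(x)$.
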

   \begin{proof}[Proof sketch.]
   Most of the properties are proved by induction, checking that the properties hold on intervals of the type $[x-1/2,x+1/2]$ for $x\in \ZZ$ as well as at the break points at half-integer values.  The full proof is found in Appendix \ref{s:proofs}.
   \end{proof}

  Theorem \ref{thm:canonical} below states that for any nontrivial tradeoff function, this construction yields a canonical noise distribution, which can be constructed as in Definition \ref{def:CNDsynthetic}. This CND can be used to add perfectly calibrated noise to a statistic to achieve $f$-DP. As we will see later, the existence (and construction) of a CND will enable us to prove that any $f$-DP test can be post-processed from a test statistic, and this implies that we can always obtain hypothesis testing $p$-values at no additional privacy cost, a generalization of the result of \citet{awan2018differentially} which previously only held for $(\ep,\de)$-DP and for Bernoulli data.
  
  \begin{restatable}{thm}{thmcanonical}
  \label{thm:canonical}
    Let $f$ be a symmetric nontrivial tradeoff function and let $F_f$ be as in Definition \ref{def:CNDsynthetic}. Then $F_f$ is a canonical noise distribution for $f$. 
  \end{restatable}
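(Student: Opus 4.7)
My plan is to verify each of the four conditions of Definition \ref{def:CND} in turn, leveraging the structural results already established in Proposition \ref{prop:tradeoff}. Condition~4 (symmetry about zero) is immediate from Proposition \ref{prop:tradeoff}(1), so the real work lies in conditions~1, 2, and 3.

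I would handle Condition~3 first, namely $T(F_f(\cdot), F_f(\cdot - 1))(\alpha) = F_f(F_f^{-1}(1-\alpha) - 1)$. By Neyman--Pearson this reduces to verifying that the likelihood ratio $F_f'(x-1)/F_f'(x)$ is non-decreasing in $x$, so that the most powerful level-$\alpha$ test is the threshold test $\phi(x) = \ONE[x > F_f^{-1}(1-\alpha)]$, giving type~II error $F_f(F_f^{-1}(1-\alpha) - 1)$. Differentiating the recurrence $F_f(x) = 1 - f(F_f(x-1))$ yields $F_f'(x) = -f'(F_f(x-1)) F_f'(x-1)$ for $x > 1/2$, so the ratio reduces to $1/(-f'(F_f(x-1)))$, which is non-decreasing because $f$ is convex (hence $-f'$ is non-increasing) and $F_f$ is non-decreasing. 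The ranges $x \in [-1/2, 1/2]$ and $x < -1/2$ are handled by the mirror branch of the construction together with the symmetry of $F_f$, with the half-integer break points matching by continuity from Proposition \ref{prop:tradeoff}.

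Condition~2 then follows as a short algebraic consequence. Setting $y = F_f^{-1}(1-\alpha)$, the recurrence gives $1 - \alpha = F_f(y) = 1 - f(F_f(y-1))$, so $f(F_f(y-1)) = \alpha$; since the symmetry of $f$ makes it self-inverse, $F_f(y-1) = f(\alpha)$, which combined with Condition~3 yields $T(F_f(\cdot), F_f(\cdot - 1))(\alpha) = f(\alpha)$. Boundary regimes where $F_f(y-1) = 0$ or $F_f(y+1) = 1$ can be handled by continuity of $F_f$ and the standard conventions on $f$ at the endpoints.

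Condition~1 is where I expect the main obstacle. I need $T(F_f(\cdot), F_f(\cdot - m)) \geq f$ for every $m \in [0, 1]$, not just $m = 1$. The natural route is to extend the likelihood-ratio analysis from Condition~3 to all shifts $m \in [0, 1]$, equivalently to establish log-concavity of the density $F_f'$; then $T(F_f(\cdot), F_f(\cdot - m))(\alpha) = F_f(F_f^{-1}(1-\alpha) - m) \geq F_f(F_f^{-1}(1-\alpha) - 1) = f(\alpha)$ follows by monotonicity of $F_f$ together with $m \leq 1$. The difficulty is that the recurrence composes the factor $-f'(F_f(x-1))$ with the previous-interval density $F_f'(x-1)$, and log-concavity is not automatically preserved under such compositions. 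I would proceed by induction on the intervals $[k-1/2, k+1/2]$ for $k \in \ZZ$, with the uniform density on $[-1/2, 1/2]$ as the base case, and verify at each half-integer break point that the MLR inequality $F_f'(x-m)/F_f'(x) \leq F_f'(y-m)/F_f'(y)$ for $x \leq y$ and $m \in [0,1]$ is preserved through the recurrence; the convexity of $f$ and non-positivity of $f'$, together with the unimodality already known from Proposition \ref{prop:tradeoff}(3), are the key tools that should make the inductive step work.
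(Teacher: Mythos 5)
Your treatment of conditions 2, 3, and 4 matches the paper's argument (the paper also gets condition 3 from monotonicity of the likelihood ratio $F_f'(x-1)/F_f'(x)$, which by the recurrence equals $-f'(1-F_f(x))$, and condition 2 from the recurrence plus continuity). The genuine gap is exactly where you anticipated it: condition 1. Your plan is to prove log-concavity of $F_f'$ (equivalently, monotone likelihood ratio for every shift $m\in[0,1]$) and then write $T(F_f(\cdot),F_f(\cdot-m))(\alpha)=F_f(F_f^{-1}(1-\alpha)-m)$. This property is simply false for the constructed distribution, so no induction over the intervals $[k-1/2,k+1/2]$ can establish it. Take $f=f_{\ep,0}$: by Corollary \ref{cor:tulap}, $F_f$ is the Tulap cdf, whose density is a step function, constant on each interval $(k-1/2,k+1/2)$ and proportional to $e^{-\ep|k|}$. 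For the shift $m=1/2$ the ratio $F_f'(x-1/2)/F_f'(x)$ oscillates between $1$ and $e^{\ep}$ as $x$ crosses half-integers, so it is not monotone, the Neyman--Pearson optimal rejection sets are not half-lines, and $T(F_f(\cdot),F_f(\cdot-1/2))(\alpha)$ is strictly smaller than $F_f(F_f^{-1}(1-\alpha)-1/2)$ for some $\alpha$. Note also that even granting the threshold test, its type II error only upper-bounds the power side the wrong way: the optimal type II error satisfies $T(F_0,F_m)(\alpha)\leq F_f(F_f^{-1}(1-\alpha)-m)$, so the chain you wrote does not by itself give the needed lower bound $T(F_0,F_m)\geq f$.

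The paper's proof of condition 1 avoids MLR for fractional shifts entirely. Lemma \ref{lem:symTrade} (proved via Neyman--Pearson and the symmetry of $T(F_0,F_m)$, which follows from symmetry of $F_f$) shows it suffices to check rejection sets of the form $S_k=\{x: F_f'(x-m)/F_f'(x)>k\}$ for $k\geq 1$ (and the $\geq k$ variants), i.e., only half of the tradeoff curve and only likelihood-ratio sets for the pair $(F_0,F_m)$, which need not be half-lines. Then the unimodality from Proposition \ref{prop:tradeoff} part 3 ($F_f'$ increasing on $(-\infty,1/2)$, decreasing on $(-1/2,\infty)$) gives $S_k\subset(1/2,\infty)$, and on that region $F_f'(x-1)\geq F_f'(x-m)$, so $\EE_{F(\cdot-m)}\ONE(X\in S_k)\leq \EE_{F(\cdot-1)}\ONE(X\in S_k)$; hence the type II error against the shift-$m$ alternative dominates that against the shift-$1$ alternative on every relevant rejection set, yielding $T(F_0,F_m)\geq T(F_0,F_1)=f$. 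If you want to salvage your outline, replace the log-concavity step with an argument of this kind: you cannot get equality of $T(F_0,F_m)$ with a threshold formula, but you can compare powers of the $(F_0,F_m)$-optimal tests under the two alternatives using only the unimodality of the density.
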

  \begin{proof}[Proof sketch.]
  $F_f$ was already shown to be symmetric in Proposition \ref{prop:tradeoff}. The two equalities, $f(\alpha)=T(F(\cdot),F(\cdot-1))(\alpha)=F(F^{-1}(1-\alpha)-1)$ can also be easily verified using the properties of Proposition \ref{prop:tradeoff}. The main challenge is to show that $T(F(\cdot),F(\cdot-m))\geq T(F(\cdot),F(\cdot-1))$ for $m\in (0,1)$. Lemma \ref{lem:symTrade} in the appendix gives an alternative technical condition which makes it easier to verify property 1 of Definition \ref{def:CND}.
  \end{proof}

  It turns out that the requirements of Definition \ref{def:CND} do not uniquely determine a distribution. For instance, $\Phi$ the cdf of a standard normal is a CND for $1$-GDP, but $\Phi$ is different from the construction in Definition \ref{def:CNDsynthetic}. See Figure \ref{fig:cnd} for the cdf and pdf of these two CNDs. Note that the CND of Definition \ref{def:CNDsynthetic} is uniform in $[-1/2,1/2]$ and has ``kinks'' at each half-integer value. On the other hand, the standard normal is smooth.   This example shows that for certain tradeoff functions there may be a more natural CND than the one constructed in Definition \ref{def:CNDsynthetic}. 
  
  {While there may be more natural CNDs in some settings, we emphasize the generality of the construction in Definition \ref{def:CNDsynthetic}. In Proposition \ref{prop:samplingCND}, we present an exact method to sample from the CND of Definition \ref{def:CNDsynthetic} based on inverse transform sampling, 
  allowing for straightforward implementation and application of our CND results.}
  
  
  \begin{figure}
      \centering
          \begin{subfigure}[t]{0.48\linewidth}
       \includegraphics[width=\linewidth]{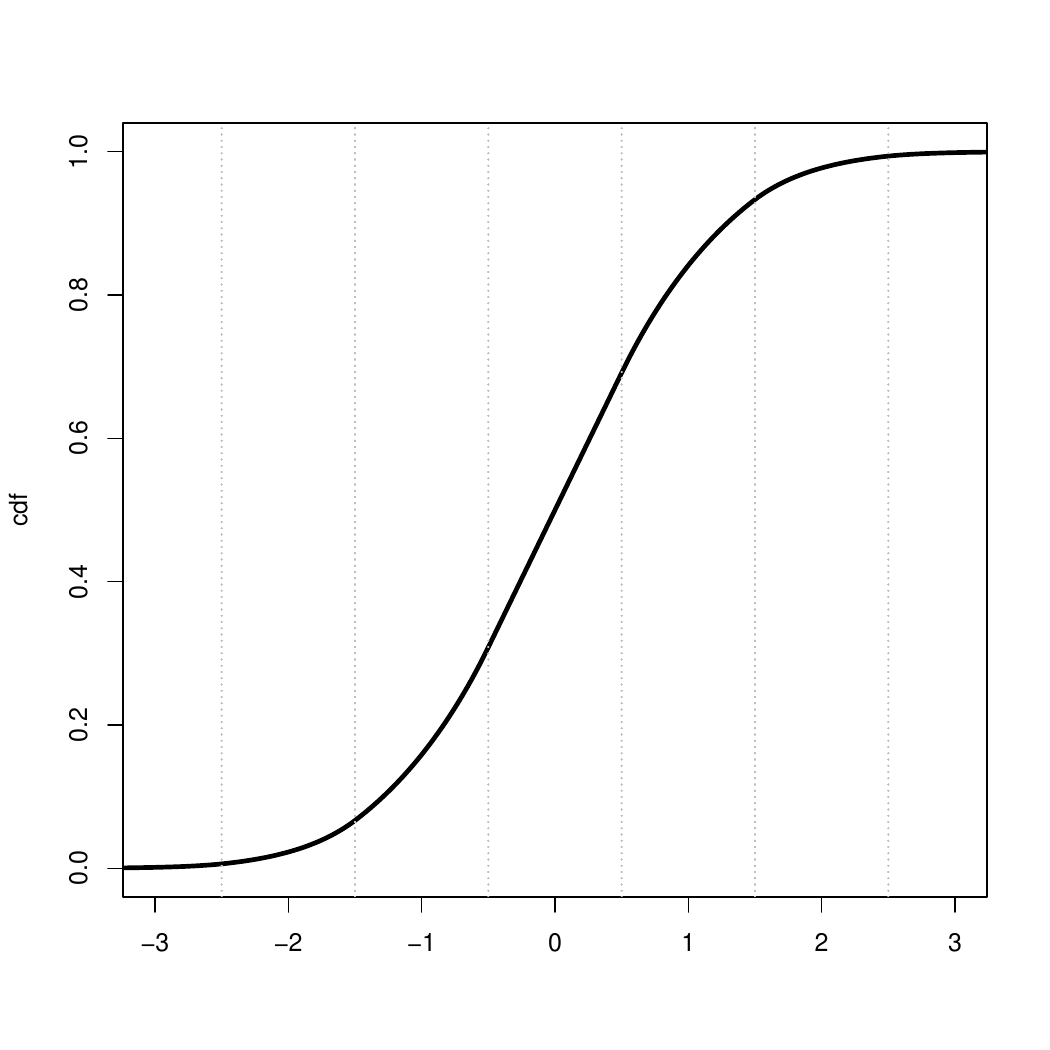}
      \caption{Plot of cdf of the CND of Definition \ref{def:CNDsynthetic} corresponding to $G_1$. The function is linear between -1/2 and 1/2.}
      \end{subfigure}\hspace{.02\linewidth}
		\begin{subfigure}[t]{0.48\linewidth}
      \includegraphics[width=\linewidth]{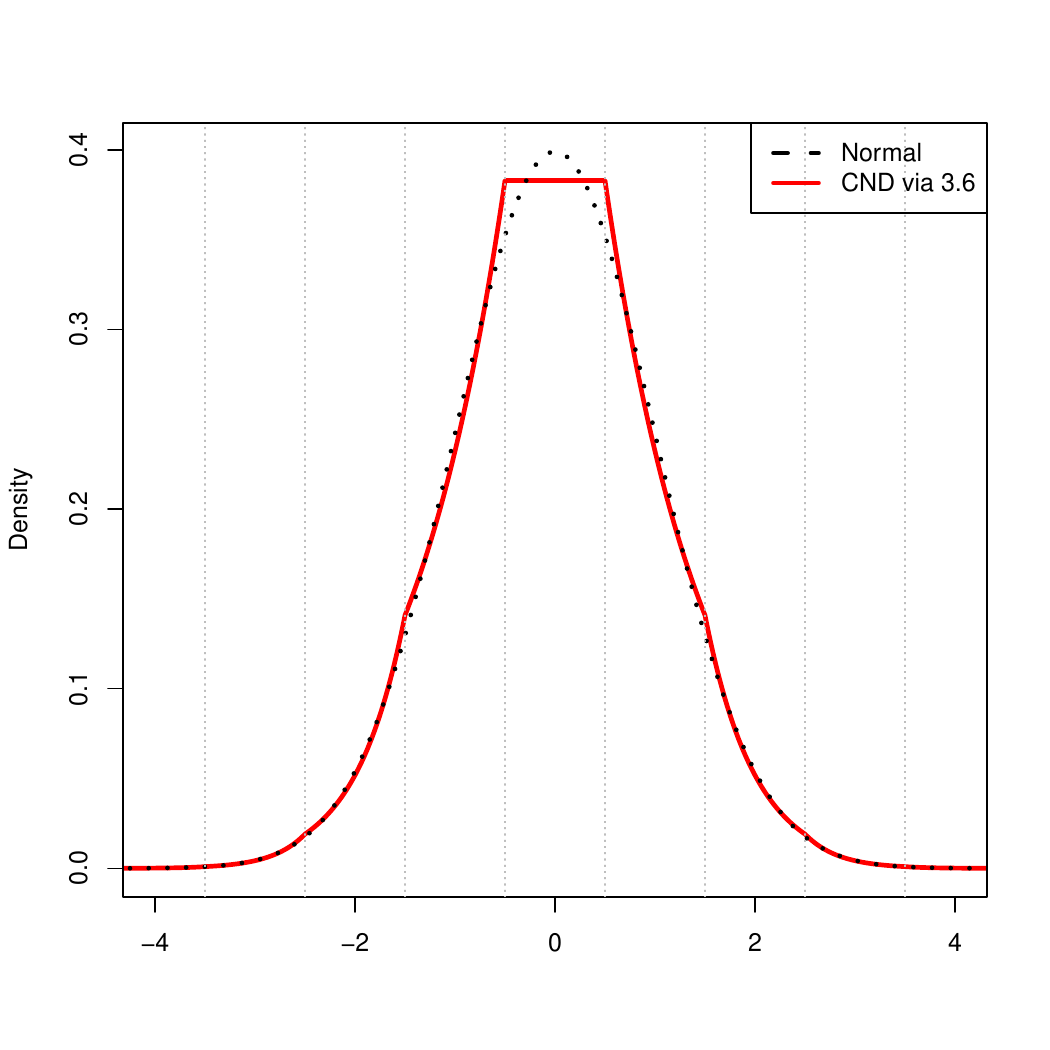}
      \caption{Density plots of $N(0,1)$ as well as the CND of Definition \ref{def:CNDsynthetic} for the tradeoff function $G_1$. }
      \end{subfigure}
\caption{Plots of CND construction of Definition \ref{def:CNDsynthetic}. The vertical lines are at half-integer values. }
      \label{fig:cnd}
  \end{figure}
  
  \subsection{Canonical noise for \texorpdfstring{$(\ep,\de)$}{TEXT}-DP}
  So far, we have developed a constructive and general method of generating canonical noise distributions for $f$-DP. In the special case of $(\ep,\de)$-DP, the CND $F_f$ is equal to the cdf of the Tulap distribution, proposed in \citet{awan2018differentially}, which is an extension of the Staircase mechanism \citep{geng2015optimal} from $(\ep,0)$-DP to $(\ep,\de)$-DP. 
  
  \begin{restatable}{cor}{cortulap}\label{cor:tulap}
  The distribution $\mathrm{Tulap}(0,b,q)$, where $b=\exp(-\ep)$ and $q = \frac{2\de b}{1-b+2\de b}$ is a CND for $f_{\ep,\de}$-DP, which agrees with the construction of Definition \ref{def:CNDsynthetic}. 
  \end{restatable}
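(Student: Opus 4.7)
The plan is to verify the corollary by explicit computation, exploiting the fact (established in Proposition~\ref{prop:tradeoff} and implicit in Lemma~\ref{lem:recurrence}) that once we pin down the values of a candidate CND on the window $[-1/2,1/2]$ and verify the recurrence, the entire distribution is determined and agrees with the canonical construction of Definition~\ref{def:CNDsynthetic}. So the whole proof reduces to checking that the $\mathrm{Tulap}(0,b,q)$ cdf matches $F_{f_{\ep,\de}}$ pointwise, after which Theorem~\ref{thm:canonical} gives the CND property for free.

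First I would recall the closed form of the $\mathrm{Tulap}(0,b,q)$ cdf (with $b=e^{-\ep}$ and $q=\tfrac{2\de b}{1-b+2\de b}$) and immediately read off symmetry about $0$, which handles property~4 of Definition~\ref{def:CND}. Next I would locate the unique fixed point $c$ of $f_{\ep,\de}$: solving $c = 1-\de-e^\ep c$ on the nontrivial middle piece of $f_{\ep,\de}$ yields $c=\frac{(1-\de)b}{1+b}$, which lies in $[0,1/2)$ whenever $(\ep,\de)$ is nontrivial. I would then substitute the definitions of $b$ and $q$ into the Tulap cdf restricted to $[-1/2,1/2]$ and check, through straightforward algebraic manipulation, that it equals the linear interpolant
\[
x \;\longmapsto\; c\bigl(\tfrac12-x\bigr) + (1-c)\bigl(x+\tfrac12\bigr),
\]
i.e.\ the middle piece of $F_{f_{\ep,\de}}$ from Definition~\ref{def:CNDsynthetic}. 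This is the one calculation where the specific form of $q$ enters essentially: $q$ must be exactly $\tfrac{2\de b}{1-b+2\de b}$ in order for the Tulap boundary values $F(\pm 1/2)$ to coincide with $c$ and $1-c$.

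Next I would verify the recurrence $F(x) = 1 - f_{\ep,\de}(F(x-1))$ for $x>1/2$ (and the mirror identity for $x<-1/2$). Because $f_{\ep,\de}$ is piecewise linear with slopes $-e^\ep$ and $-e^{-\ep}$ and intercepts involving $\de$, and because the Tulap cdf on $[1/2,3/2]$ is itself a linear combination involving the same factors $b=e^{-\ep}$ and $\de$, this reduces to matching slopes and intercepts on one interval; the rest follows by induction on the half-integer translates. At this point, Lemma~\ref{lem:recurrence} (or equivalently the argument following Definition~\ref{def:CNDsynthetic} that the recurrence extends any specification on $[-1/2,1/2]$ to all of $\RR$) lets us conclude $\mathrm{Tulap}(0,b,q)=F_{f_{\ep,\de}}$ identically, so the Tulap cdf agrees with Definition~\ref{def:CNDsynthetic}.

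Finally, invoking Theorem~\ref{thm:canonical} yields that $F_{f_{\ep,\de}}$ is a CND for $f_{\ep,\de}$-DP, which transfers immediately to $\mathrm{Tulap}(0,b,q)$. The only genuine obstacle is the bookkeeping in the middle step: tracking the precise algebraic identity that forces $q$ to take the stated value in terms of $\de$ and $b$; everything else is routine verification or a direct appeal to the already-established general theory.
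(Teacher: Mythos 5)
Your proposal is correct and follows essentially the same route as the paper: identify the fixed point $c=\frac{1-\de}{1+e^\ep}=\frac{(1-\de)b}{1+b}$, check that the $\mathrm{Tulap}(0,b,q)$ cdf coincides with the linear piece $c(1/2-x)+(1-c)(x+1/2)$ on $[-1/2,1/2]$, verify the recurrence so that the Tulap cdf equals $F_{f_{\ep,\de}}$, and then invoke Theorem \ref{thm:canonical}. The only difference is that the paper delegates the recurrence step to \citet[Lemma 2.8]{awan2020differentially} rather than re-deriving it by matching the piecewise-linear slopes and intercepts as you propose, which is just the same computation carried out in-line.
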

  \begin{proof}[Proof sketch.]
  The cdf of $\mathrm{Tulap}(0,b,q)$ is defined in the full proof. From the definition, it is easy to verify that the cdf of a Tulap random variable agrees with $F_f$ on $[-1/2,1/2]$. By \citet[Lemma 2.8]{awan2020differentially}, the Tulap cdf also satisfies the recurrence relation of Definition \ref{def:CNDsynthetic}. 
  \end{proof}
  
  It was claimed in both \citet{awan2018differentially} and \citet{awan2020differentially} that adding Tulap noise satisfied $(\ep,\de)$-DP, but their proof is actually incorrect and only holds for integer valued statistics. The above Corollary along with Theorem \ref{thm:canonical} offers a complete and correct argument for \citet[Theorem 2.11]{awan2020differentially}. 
  
  In \citet{awan2018differentially} and \citet{awan2020differentially}, it was shown that the Tulap distribution could be used to design optimal hypothesis tests and confidence intervals for Bernoulli data. Our notion of a canonical noise distribution, and the fact that Tulap is a CND for $(\ep,\de)$-DP sheds some light on why it had such optimality properties (even further explored in Section \ref{s:HT}). The Tulap distribution is also closely related to discrete Laplace and the Staircase distributions, which were shown by \citet{ghosh2012universally} and \citet{geng2015optimal} respectively to be optimal in terms of maximizing various definitions of utility in $(\ep,0)$-DP. 

 While continuous Laplace noise is commonly used in $(\ep,0)$-DP, \citet{dong2022gaussian} pointed out that the tradeoff function for Laplace noise does not agree with $f_{\ep,\de}$ for any values of $\ep$ and $\de$. From this observation, we conclude from Definition \ref{def:CND} that Laplace is not a CND for $(\ep,\de)$-DP. From the perspective of CNDs, Tulap noise is preferable over the Laplace mechanism.

\section{The nature of \texorpdfstring{$f$}{TEXT}-DP tests}\label{s:HT}

Recall that a test is a function $\phi: \mscr X^n\rightarrow [0,1]$, where $\phi(x)$  represents the probability of rejecting the null hypothesis given that we observed $x$. However, the mechanism corresponding to this test releases a random value drawn as $\mathrm{Bern}(\phi(x))$, where 1 represents ``Reject'' and 0 represents ``Accept.'' we say that the test $\phi$ satisfies $f$-DP if the corresponding mechanism $\mathrm{Bern}(\phi(x))$ satisfies $f$-DP. Intuitively, Lemma \ref{lem:DPtest} shows that a test satisfies $f$-DP if for adjacent databases $x$ and $x'$, the values $\phi(x)$ and $\phi(x')$ are close in terms of an inequality based on $f$. 

\begin{restatable}{lem}{lemDPtest}\label{lem:DPtest}
  Let $f$ be a symmetric tradeoff function. A test $\phi: \mscr X^n \rightarrow [0,1]$ satisfies $f$-DP if and only if $\phi(x) \leq 1-f(\phi(x'))$ for all $x,x'\in \mscr X^n$ such that $H(x,x')\leq 1$.
\end{restatable}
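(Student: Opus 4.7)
\textbf{Proof proposal for Lemma \ref{lem:DPtest}.}
The mechanism associated to the test is $M(x) = \mathrm{Bern}(\phi(x))$, which takes values in $\{0,1\}$. By Definition \ref{def:fDP}, $\phi$ is $f$-DP iff $T(\mathrm{Bern}(\phi(x)),\mathrm{Bern}(\phi(x')))(\alpha)\ge f(\alpha)$ for every $\alpha\in[0,1]$ and every adjacent pair $x,x'$. Setting $p=\phi(x)$ and $q=\phi(x')$, I will reduce the condition $T(\mathrm{Bern}(p),\mathrm{Bern}(q))\ge f$ to a single pointwise inequality involving $p,q$, and then use the symmetry of $f$ together with the symmetry of the adjacency relation to rewrite it in the stated form.

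First I would compute $T(\mathrm{Bern}(p),\mathrm{Bern}(q))$ explicitly. Assume WLOG $p\le q$ (the other case is handled by swapping the roles of $x,x'$, which is legitimate since $H$ is symmetric). A randomized test $\psi$ on $\{0,1\}$ is determined by $(\psi(0),\psi(1))$; optimizing type II error subject to type I $\le\alpha$ shows that the tradeoff function is piecewise linear, passing through $(0,1)$, $(p,1-q)$, and $(1,0)$, with the middle breakpoint achieved by the test that rejects on output $1$. In particular $T(\mathrm{Bern}(p),\mathrm{Bern}(q))$ is a convex piecewise linear function with its unique kink at $(p,1-q)$.

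Next I would argue, using convexity of $f$, that $T(\mathrm{Bern}(p),\mathrm{Bern}(q))\ge f$ holds on all of $[0,1]$ iff it holds at the single point $\alpha=p$, i.e.\ iff $1-q\ge f(p)$. For the forward direction this is immediate. For the reverse direction, note that $f(0)\le 1$ and $f(1)=0$ automatically, so on each of the two linear pieces $[0,p]$ and $[p,1]$ the convex function $f$ has values at the endpoints bounded above by the values of the piecewise linear $T(\mathrm{Bern}(p),\mathrm{Bern}(q))$; hence by convexity $f$ lies below the chord, which coincides with $T(\mathrm{Bern}(p),\mathrm{Bern}(q))$, on the whole interval. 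Consequently, the $f$-DP condition on all adjacent pairs becomes: for every adjacent $x,x'$, $\max(\phi(x),\phi(x'))\le 1-f(\min(\phi(x),\phi(x')))$.

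Finally, I would show this is equivalent to the stated condition ``$\phi(x)\le 1-f(\phi(x'))$ for all adjacent $x,x'$''. If $\phi(x)\ge\phi(x')$ the two statements coincide. If $\phi(x)<\phi(x')$, then applying the max/min version to the adjacent pair $(x',x)$ gives $\phi(x')\le 1-f(\phi(x))$, and using that $f$ is decreasing we obtain $1-f(\phi(x'))\ge 1-f(\phi(x))\ge\phi(x')>\phi(x)$, so the stated inequality holds automatically. Conversely, the stated inequality, applied to both $(x,x')$ and $(x',x)$, yields the max/min form. The main obstacle is the convexity/chord argument in the previous paragraph: once that reduction is in hand, the rest is bookkeeping using the symmetry of adjacency and the monotonicity of $f$.
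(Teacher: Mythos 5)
Your overall route is the same as the paper's: compute $T(\mathrm{Bern}(p),\mathrm{Bern}(q))$ via Neyman--Pearson, observe that it is piecewise linear with its kink at $(p,1-q)$, and use convexity of $f$ together with $f(0)\le 1$ and $f(1)=0$ to reduce $T(\mathrm{Bern}(p),\mathrm{Bern}(q))\ge f$ to the single inequality $1-q\ge f(p)$. Your final bookkeeping between the ``max/min'' form and the stated asymmetric form, using only that $f$ is decreasing, is also correct.

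The step that does not hold up as written is the WLOG. Symmetry of the Hamming relation only tells you that both ordered pairs $(x,x')$ and $(x',x)$ appear among the constraints; it does not let you verify only the ordering with $\phi(x)\le\phi(x')$. Definition \ref{def:fDP} requires, for $p=\phi(x)<q=\phi(x')$, both $T(\mathrm{Bern}(p),\mathrm{Bern}(q))\ge f$ and $T(\mathrm{Bern}(q),\mathrm{Bern}(p))\ge f$, and your Neyman--Pearson computation covers only the first: when the null parameter exceeds the alternative, the optimal test rejects on the outcome $0$, the kink sits at $(1-q,p)$, and the resulting condition reads $p\ge f(1-q)$, not $q\le 1-f(p)$. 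So the direction ``max/min condition $\Rightarrow$ $f$-DP'' of your reduction is not yet established. This is precisely where the hypothesis that $f$ is symmetric must enter --- you announce it in your opening paragraph but never actually deploy it. Either argue, as the paper does, that $T(\mathrm{Bern}(q),\mathrm{Bern}(p))\ge f$ is equivalent to $T(\mathrm{Bern}(p),\mathrm{Bern}(q))\ge f^{-1}=f$, or check directly that $p\ge f(1-q)$ and $q\le 1-f(p)$ coincide for symmetric $f$, using that $f\circ f(\alpha)\le\alpha$ with equality on $[0,f(0)]$. For an asymmetric tradeoff function the two orderings genuinely impose different constraints, so the omission is not cosmetic; once the symmetry of $f$ is invoked at this point, your proof is complete and matches the paper's.
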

\begin{proof}[Proof sketch.]
If we take the rejection region to be the set $\{1\}$ then $\phi(x)$ is the type I error and $1-\phi(x')$ is the type II error. The $f$-DP guarantee requires that $f(\phi(x))\leq 1-\phi(x')$, or equivalently, $\phi(x')\leq 1-f(\phi(x))$. Using the rejection region $\{0\}$ and some algebra, we get $\phi(x)\leq 1-f(\phi(x'))$. The full proof argues more precisely using the Neyman Pearson Lemma, considering also randomized tests. 
\end{proof}

Lemma \ref{lem:DPtest} greatly simplifies the search for $f$-DP hypothesis tests and generalizes the bounds on private tests established in \citet{awan2018differentially}.

\begin{example}[$(\ep,\de)$-DP tests]
When we apply Lemma \ref{lem:DPtest} to the setting of $(\ep,\de)$-DP, we have the two inequalities: $(1-\phi(x))\geq 1-\de-\exp(\ep) \phi(x')$ and $(1-\phi(x))\geq \exp(-\ep)(1-\de-\phi(x'))$. Some algebra gives 
\[\phi(x) \leq \begin{cases}
\de + \exp(\ep)\phi(x')\\
1-\exp(-\ep) (1-\de-\phi(x')),
\end{cases}\]
which agrees with the constraints derived in \citet{awan2018differentially}.
\end{example}

The result of Lemma \ref{lem:DPtest} can also be expressed in terms of canonical noise distributions in Corollary \ref{cor:CND_HT}, giving the elegant relation that $F^{-1}(\phi(x))$ and $F^{-1}(\phi(x'))$ differ by at most 1 when $x$ and $x'$ are adjacent.

\begin{restatable}[Canonical Noise Distributions]{cor}{corCNDHT}\label{cor:CND_HT}
Let $f$ be a symmetric nontrivial tradeoff function and let $F$ be a canonical noise distribution for $f$. Then a test $\phi$ satisfies $f$-DP if and only if  $F^{-1}(\phi(x)) \leq F^{-1}(\phi(x'))+1$ for all $x,x'\in \mscr X^n$ such that $H(x,x')\leq 1$. 
\end{restatable}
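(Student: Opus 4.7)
The plan is to deduce the corollary directly from Lemma \ref{lem:DPtest} by rewriting the constraint $\phi(x)\leq 1-f(\phi(x'))$ in terms of the CND $F$. Lemma \ref{lem:DPtest} already tells us that $f$-DP is equivalent to $\phi(x)\leq 1-f(\phi(x'))$ holding for every adjacent pair, so the entire task reduces to showing that this scalar inequality is equivalent to $F^{-1}(\phi(x))\leq F^{-1}(\phi(x'))+1$.

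First I would unpack property 3 of Definition \ref{def:CND}: for $\alpha\in(0,1)$, $f(\alpha)=F(F^{-1}(1-\alpha)-1)$. The key simplification comes from property 4 (symmetry about zero): from $F(-t)=1-F(t)$ one gets $F^{-1}(1-\alpha)=-F^{-1}(\alpha)$, so
\[
f(\alpha)=F(-F^{-1}(\alpha)-1)=1-F(F^{-1}(\alpha)+1),
\]
where the second equality uses symmetry a second time. Consequently,
\[
1-f(\phi(x'))=F\bigl(F^{-1}(\phi(x'))+1\bigr),
\]
and the Lemma \ref{lem:DPtest} condition becomes $\phi(x)\leq F(F^{-1}(\phi(x'))+1)$.

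Next I would apply $F^{-1}$ to both sides. Because $F$ is continuous and strictly increasing on $\{x:0<F(x)<1\}$ (Proposition \ref{prop:tradeoff} part 4), $F^{-1}$ is a genuine increasing inverse on the open interval $(0,1)$, and for any $y$ in this interval and any real $z$ one has the standard equivalence $y\leq F(z)\iff F^{-1}(y)\leq z$. Applied here with $y=\phi(x)$ and $z=F^{-1}(\phi(x'))+1$, this yields exactly $F^{-1}(\phi(x))\leq F^{-1}(\phi(x'))+1$, which is the desired condition. These implications are manifestly reversible, so both directions of the biconditional follow simultaneously.

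The main obstacle is the edge cases $\phi(x),\phi(x')\in\{0,1\}$, where $F^{-1}$ may be $\pm\infty$. The plan is to adopt the convention $F^{-1}(0)=-\infty$, $F^{-1}(1)=+\infty$ (consistent with the generalized inverse when $F$ has full support) and verify the degenerate cases by hand. When $\phi(x')=0$ the rewritten condition forces $\phi(x)=0$, which matches the limit $f(0)=\lim_{\alpha\downarrow 0} F(F^{-1}(1-\alpha)-1)=1$ coming from continuity of $F$ (at least when $F$ has unbounded support, and analogously in the bounded case). The symmetric case $\phi(x')=1$ is handled the same way. Once these boundary cases are checked, the equivalence holds on all of $[0,1]^2$ and the corollary is established.
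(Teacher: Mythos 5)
Your proposal is correct and takes essentially the same route as the paper's proof: reduce to Lemma \ref{lem:DPtest}, use the CND identity $f(\alpha)=F(F^{-1}(1-\alpha)-1)$ together with symmetry of $F$ to rewrite $1-f(\phi(x'))$ as $F(F^{-1}(\phi(x'))+1)$, and pass to $F^{-1}(\phi(x))\leq F^{-1}(\phi(x'))+1$ via the Galois inequalities (Lemma \ref{lem:galois}), which is exactly the content of the paper's two directions. The only quibble is that your appeal to strict monotonicity via Proposition \ref{prop:tradeoff} part 4 is a misattribution (that proposition concerns the constructed $F_f$, not an arbitrary CND), but it is also unnecessary: the Galois equivalence holds for any cdf with the generalized inverse, and the symmetry identity $F^{-1}(1-\alpha)=-F^{-1}(\alpha)$ is invoked with the same level of justification in the paper's own argument.
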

\begin{proof}[Proof sketch.]
The result follows from the fact that $f(\alpha)= F(F^{-1}(1-\alpha)-1)$, the symmetry of $F$, and some algebra of cdfs.
\end{proof}

Corollary \ref{cor:CND_HT} is also important for the construction of ``free'' DP $p$-values in Section \ref{s:free}.

  \subsection{Free \texorpdfstring{$f$}{TEXT}-DP \texorpdfstring{$p$}{TEXT}-values}\label{s:free}
  In \citet{awan2018differentially}, it was shown that for Bernoulli data, the uniformly most powerful DP test could also be expressed as the post-processing of a privatized test statistic, offering $p$-values at no additional privacy cost. We generalize this result using the concept of canonical noise distributions and show that any $f$-DP test can be expressed as a post-processing threshold test based on a privatized test statistic, and that the test statistic can also be used to give private $p$-values. 
  
  Typically in statistics, it is preferred to report a $p$-value rather than an accept/reject decision at a single type I error. A $p$-value provides a continuous summary of how much evidence there is for the alternative hypothesis and allows for the reader to determine whether there is enough evidence to reject at the reader's personal type I error. Lower $p$-values give more evidence for the alternative hypothesis.
  
  However, with privacy, one may wonder whether releasing a $p$-value rather than just the accept/reject decision would result in an increased privacy cost, or conversely whether a $p$-value at the same privacy level would have lower power. In fact, this question is related to fundamental concepts in differential privacy such as post-processing, privacy amplification, and composition. In Lemma \ref{lem:post}, we recall the post-processing property of DP, which states that after a DP result is released, no post-processing can compromise the DP guarantee. 
  
  \begin{lem}
    [Post-processing: \citealp{dong2022gaussian}]\label{lem:post}
    Let $M$ be an $f$-DP mechanism taking values in $\mscr Y$. Let $\mathrm{Proc}$ be a mechanism from $\mscr Y$ to $\mscr Z$. Then $\mathrm{Proc}\circ M$ satisfies $f$-DP. 
  \end{lem}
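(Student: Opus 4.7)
The plan is to prove the post-processing lemma by establishing the underlying data-processing inequality for tradeoff functions, namely that $T(\mathrm{Proc}(P), \mathrm{Proc}(Q)) \geq T(P,Q)$ for any (possibly randomized) map $\mathrm{Proc}$ and distributions $P, Q$, and then specializing to $P = M(D)$, $Q = M(D')$ for adjacent databases. Once we have the data-processing inequality, the conclusion is immediate: by $f$-DP of $M$ we have $T(M(D),M(D')) \geq f$, so $T(\mathrm{Proc}(M(D)), \mathrm{Proc}(M(D'))) \geq T(M(D),M(D')) \geq f$, which is exactly $f$-DP of $\mathrm{Proc}\circ M$.

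To prove the data-processing inequality, I would view $\mathrm{Proc}$ as a Markov kernel $K(y, \cdot)$ from $\mscr Y$ to $\mscr Z$ and, given any measurable test $\psi:\mscr Z\to[0,1]$ distinguishing $\mathrm{Proc}(P)$ from $\mathrm{Proc}(Q)$, construct the lifted test $\tilde\psi:\mscr Y\to[0,1]$ defined by $\tilde\psi(y) = \int_{\mscr Z} \psi(z)\, K(y, dz)$. By Fubini's theorem, $\EE_{Y\sim P}[\tilde\psi(Y)] = \EE_{Z\sim \mathrm{Proc}(P)}[\psi(Z)]$ and likewise for $Q$; hence $\tilde\psi$ has exactly the same type I and type II errors on the pair $(P,Q)$ that $\psi$ has on $(\mathrm{Proc}(P),\mathrm{Proc}(Q))$. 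Thus every (type I, type II) error pair achievable for the processed distributions is achievable for the originals, so the infimum defining $T(P,Q)(\alpha)$ is over at least as rich a family of tests, giving $T(P,Q)(\alpha)\leq T(\mathrm{Proc}(P),\mathrm{Proc}(Q))(\alpha)$ for every $\alpha\in[0,1]$.

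The main obstacle, and the only place where care is needed, is the handling of a genuinely randomized $\mathrm{Proc}$: one must justify that $\tilde\psi$ is measurable and that the composition of the external randomness in $\psi$ (if $\psi$ is itself randomized) with the internal randomness of $\mathrm{Proc}$ produces a legitimate test on $\mscr Y$. This is standard once $\mathrm{Proc}$ is modeled as a Markov kernel on a suitable Borel space, and the Fubini step takes care of equality of the relevant expectations. After this, no further properties of $f$ are used beyond the definition of the tradeoff function, so the argument works uniformly for every symmetric nontrivial $f$.
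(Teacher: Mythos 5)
Your argument is correct: the kernel-lifting construction $\tilde\psi(y)=\int\psi(z)\,K(y,dz)$ with Fubini gives the data-processing inequality $T(\mathrm{Proc}(P),\mathrm{Proc}(Q))\geq T(P,Q)$, from which $f$-DP of $\mathrm{Proc}\circ M$ follows immediately. The paper itself does not prove this lemma but imports it from \citet{dong2021gaussian}, and your proof is essentially the standard argument given there, so there is nothing to reconcile.
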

  
  Theorem \ref{thm:pVal} is the main result of this section, demonstrating that given an arbitrary $f$-DP hypothesis test, we can construct a summary statistic and $p$-values, all with no additional privacy cost, using a CND.
  
  \begin{restatable}{thm}{thmpVal}\label{thm:pVal}
    Let $\phi:\mscr X^n \rightarrow [0,1]$ be an $f$-DP test. Let $F$ be a CND for $f$, and draw $N\sim F$. Then
    \begin{enumerate}
        \item releasing $T= F^{-1}(\phi(x))+N$ satisfies $f$-DP,
        \item  the variable $Z=I(T\geq 0)$, a post-processing of $T$, is distributed as $Z\mid X=x\sim \mathrm{Bern}(\phi(x))$,
        \item the value $p = \sup_{\theta_0\in H_0}\EE_{X\sim \theta_0} F(F^{-1}(\phi(X))-T)$ is also a post-processing of $T$ and is a $p$-value for $H_0$, 
        \item if $H_0$ is a simple hypothesis and $\EE_{H_0}\phi=\alpha$, then at type I error $\alpha$, the $p$-value from part 3 is as powerful as $\phi$ at every alternative.
    \end{enumerate}
  \end{restatable}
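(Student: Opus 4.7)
The plan is to address the four parts in sequence, using the properties of a CND and standard arguments.

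Part 1 is an immediate corollary of earlier results. Since $\phi$ satisfies $f$-DP, Corollary~\ref{cor:CND_HT} gives that the real-valued statistic $S(x)\defeq F^{-1}(\phi(x))$ has Hamming sensitivity at most $1$. Property~1 of Definition~\ref{def:CND} (with $\Delta=1$) then guarantees $T=S(x)+N$ with $N\sim F$ satisfies $f$-DP. For Part 2, a direct calculation gives
\[
P(T\ge 0\mid X=x)=P(N\ge -F^{-1}(\phi(x)))=1-F(-F^{-1}(\phi(x)))=F(F^{-1}(\phi(x)))=\phi(x),
\]
using the symmetry of $F$ (property~4) and continuity, so $Z\mid X=x\sim\mathrm{Bern}(\phi(x))$.

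The key step for Part 3 is to reformulate the defining expression of $p$ as a survival probability of $T$. For any fixed $\theta_0$, letting $N'$ be an independent copy of $N$ and $X'\sim\theta_0$ independent of everything else, the same symmetry computation used in Part 2 yields
\[
g(\theta_0,t)\defeq \EE_{X'\sim\theta_0} F(F^{-1}(\phi(X'))-t)=\EE_{X'\sim\theta_0}P(F^{-1}(\phi(X'))+N'\ge t\mid X')=P_{\theta_0}(T\ge t).
\]
Since $N$ is continuous and independent of $X$, under any $\theta_0$ the marginal of $T$ has continuous cdf $G_{\theta_0}$, so the probability integral transform gives $g(\theta_0,T)=1-G_{\theta_0}(T)\sim U(0,1)$ under $\theta_0$. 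For any true $\theta_0^*\in H_0$, the sup defining $p$ implies $p\ge g(\theta_0^*,T)$, so $P_{\theta_0^*}(p\le\alpha)\le P_{\theta_0^*}(g(\theta_0^*,T)\le\alpha)=\alpha$. That $p$ is a post-processing of $T$ is immediate: each $g(\theta_0,\cdot)$ is a deterministic function, so $p$ depends on the data only through $T$.

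For Part 4, combine the preceding parts. When $H_0=\{\theta_0\}$ is simple and $\EE_{\theta_0}\phi=\alpha$, the Part 2 computation shows $g(\theta_0,0)=P_{\theta_0}(T\ge 0)=\EE_{\theta_0}\phi=\alpha$. Since $g(\theta_0,\cdot)$ is decreasing in $t$ and $T$ has no atoms, the level-$\alpha$ rejection region $\{p\le\alpha\}$ equals $\{T\ge 0\}$ up to a null set under $\theta_0$ or any alternative $\theta_1$. The Part 2 identity at $\theta_1$ then gives $P_{\theta_1}(T\ge 0)=\EE_{\theta_1}\phi$, so the $p$-value test matches (hence dominates) the power of $\phi$ at every alternative.

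The main obstacle is the identity in Part 3, namely $g(\theta_0,t)=P_{\theta_0}(T\ge t)$, which converts an expectation of cdf values into a survival probability of $T$ via the symmetry of $F$. Once this identity is in hand, the probability integral transform delivers both the $p$-value property in Part 3 and the equivalence of rejection regions in Part 4; the rest is essentially bookkeeping.
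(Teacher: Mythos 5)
Your proposal is correct and follows essentially the same route as the paper: sensitivity of $F^{-1}(\phi(x))$ via Corollary~\ref{cor:CND_HT} plus property~1 of Definition~\ref{def:CND} for part 1, the symmetry computation $F(F^{-1}(\phi(x)))=\phi(x)$ for part 2, the conversion of $\EE F(F^{-1}(\phi(X))-T)$ into a survival probability of $T$ using $N\overset{d}{=}-N$ for part 3, and the threshold-region argument for part 4. The only difference is that you re-derive the standard $p$-value validity and the threshold lemma inline (where the paper cites \citet[Theorem 8.3.27]{casella2002statistical} and its Lemma~\ref{lem:threshold}); your claim in part 4 that $\{p\le\alpha\}$ and $\{T\ge 0\}$ agree up to a null set under every alternative is slightly stronger than needed, but since $\{T\ge 0\}\subseteq\{p\le\alpha\}$ holds exactly, the power comparison goes through regardless.
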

  \begin{proof}[Proof sketch.]
  Property 1 follows from Corollary \ref{cor:CND_HT}, the observation that $F^{-1}(\phi(x))$ has sensitivity 1, and property 1 of Definition \ref{def:CND}. Property 2 can be verified using algebra of cdfs. Property 3 is a standard construction of a $p$-value \citep[Theorem 8.3.27]{casella2002statistical}. Property 4 is a special case of Lemma \ref{lem:threshold}, a general lemma about $p$-values.
  \end{proof}
  
  We see from Theorem \ref{thm:pVal} that given an $f$-DP test $\phi$, we can report both a summary statistic (namely, $T$) as well as a $p$-value (a post-processing of $T$) which contain strictly more information than only sampling $\mathrm{Bern}(\phi(x))$. This shows that for simple null hypotheses, there is no general privacy amplification when post-processing a $p$-value or test statistic to a binary accept/reject decision. 
  
  While in part 3 of Theorem \ref{thm:pVal} there are no assumptions on $H_0$, for some composite null hypotheses, the resulting $p$-value may have very low power. Part 4 states that if the null hypothesis is a singleton, then the power is perfectly preserved.

  We also remark that while the proof of Theorem \ref{thm:pVal} is not technical, it heavily relies on the properties of the CND, showing that the notion of CND has exactly the right properties for Theorem \ref{thm:pVal} to hold.
  
  Note that Theorem \ref{thm:pVal} starts with an $f$-DP test, and shows how to get a private summary statistic and $p$-values. However, constructing a private test $\phi$ is another matter. In Section \ref{s:exchange}, we show that for exchangeable binary data, we can construct a most powerful $f$-DP test in terms of a CND. 
  
  \begin{remark}
  While recently there has been controversy around the use of $p$-values in scientific research \citep{colquhoun2017reproducibility,wasserstein2016asa}, this is mostly due to the misuse or misinterpretation of a $p$-value. Many of the criticisms of $p$-values can be addressed by including additional statistical measures such as the effect size, confidence intervals, likeihood ratios, or Bayes factors. We view $p$-values as a valuable tool that is a component of a complete statistical analysis. Since the $p$-values of Theorem \ref{thm:pVal} are a post-processing of a private summary statistic, that statistic can also be potentially used for other statistical inference tasks, such as in \citet{awan2020differentially}.
  \end{remark}

\subsection{Most powerful tests for exchangeable binary data}\label{s:exchange}
In this section, we extend the main result of \citet{awan2018differentially}, that of constructing most powerful DP tests, to general $f$-DP as well as exchangeable distributions on $\{0,1\}^n$. In contrast, the hypothesis tests of \citet{awan2018differentially} were limited to $(\ep,\de)$-DP and i.i.d. Bernoulli data. A distribution $P$ on a set $\mscr X^n$ is \emph{exchangeable} if given $\ul X\sim P$ and a permutation $\pi$, $\ul X\overset d = \pi(\ul X)$. Note that i.i.d. data are always exchangeable, but there are exchangeable distributions that are not i.i.d. For example, sampling without replacement results in exchangeable but non-i.i.d. data. 

In the next result, we extend Theorem 3.2 of \citet{awan2018differentially} from $(\ep,\de)$-DP to the setting of general $f$-DP. The argument is essentially identical. We include the proof for completeness.

\begin{lemma}[Theorem 3.2 of \citet{awan2018differentially}]\label{lem:exchangeable}
Let $\mscr P$ be a set of exchangeable distributions on $\mscr X^n$. Let $\phi:\mscr X^n \rightarrow [0,1]$ be a test satisfying $f$-DP. Then there exists a test $\phi':\mscr X^n \rightarrow [0,1]$ such that for all $\ul x\in \mscr X^n$, $\phi'(\ul x)$ only depends on the empirical distribution of $\ul x$, and $\int \phi'(\ul x) \ dP = \int \phi(\ul x) \ dP$ for all $P \in \mscr P$.
\end{lemma}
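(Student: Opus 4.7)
The plan is to use the standard symmetrization trick: average $\phi$ over all permutations of its input. Concretely, I would define
\[
\phi'(\underline x) \defeq \frac{1}{n!} \sum_{\pi \in S_n} \phi(\pi(\underline x)),
\]
where $S_n$ is the symmetric group acting on $\mscr X^n$ by permuting coordinates. Two coordinate tuples with the same empirical distribution are permutations of one another, so $\phi'$ is constant on each orbit of $S_n$ and hence depends only on the empirical distribution, as required.

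Next I would verify the integral equality. Fix $P\in\mscr P$ and $\underline X\sim P$. By exchangeability, $\pi(\underline X)\overset d = \underline X$ for every $\pi\in S_n$, so $\EE_P \phi(\pi(\underline X)) = \EE_P \phi(\underline X)$. Averaging over $\pi$ and interchanging sum and expectation (finite sum) gives $\EE_P \phi'(\underline X) = \EE_P \phi(\underline X)$, which is the desired equality $\int \phi'\, dP = \int \phi\, dP$.

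Although the statement does not explicitly demand it, for use in the subsequent theory one also wants $\phi'$ to remain $f$-DP, and I would record this as part of the proof. For any permutation $\pi$, since $H(\pi(x),\pi(x')) = H(x,x')$, the test $\phi\circ\pi$ is also $f$-DP, i.e.\ by Lemma~\ref{lem:DPtest} it satisfies $\phi(\pi(x))\leq 1-f(\phi(\pi(x')))$ whenever $H(x,x')\leq 1$. Averaging over $\pi$ and using convexity of $f$ (via Jensen's inequality),
\[
\phi'(x) \;\leq\; 1 - \frac{1}{n!}\sum_{\pi} f(\phi(\pi(x'))) \;\leq\; 1 - f\!\left(\frac{1}{n!}\sum_{\pi}\phi(\pi(x'))\right) = 1-f(\phi'(x')),
\]
so Lemma~\ref{lem:DPtest} gives that $\phi'$ is $f$-DP.

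There is no real obstacle here; the only mild subtlety is the $f$-DP preservation step, which rests on convexity of tradeoff functions (a basic property from \citet{dong2021gaussian}) together with the Hamming-invariance of permutations. Every other step is bookkeeping about exchangeability and the definition of the empirical distribution.
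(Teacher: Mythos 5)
Your proposal is correct and follows essentially the same route as the paper's proof: symmetrize $\phi$ over the symmetric group, use exchangeability for the integral equality, and use convexity of $f$ (the paper phrases it as convexity of the set of tests satisfying the constraint, you as Jensen's inequality — the same argument) to preserve $f$-DP.
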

\begin{proof}
Define $\phi'(\ul x) = \frac 1{n!} \sum_{\pi \in \sigma(n)} \phi(\pi(\ul x))$, where $\sigma(n)$ is the symmetric group on $n$ letters. Note that for any $\pi \in \sigma(n)$, $\phi(\pi(\cdot))$ satisfies $f$-DP (just rearranging the sample space). Furthermore, $\int \phi(\pi(\ul x)) \ dP = \int \phi(\ul x) \ dP$ by exchangeability. Finally, by the convexity of $f$, the set of tests $\phi$ which satisfy $\phi(x)\leq 1-f(\phi(x'))$ is a convex set, and so is closed under convex combinations. So, $\phi'$ defined above satisfies $f$-DP, and by the linearity of integrals, preserves the expectations. 
\end{proof}

We work with the sample space $\mscr X=\{0,1\}$. Note that by Lemma \ref{lem:exchangeable}, because we are dealing with exchangeable distributions, the test need only depend on $X=\sum_{i=1}^n X_i$, so we define $\phi(x)$ for $x=0,1,2,\ldots, n$. Since changing one $X_i$ only changes $X$ by $\pm1$, we need only relate $\phi(x)$ and $\phi(x-1)$.

The main result of this section, Theorem \ref{thm:binary} constructs {not only the first private hypothesis test in the general $f$-DP framework, but derives} a most powerful $f$-DP test as well as a corresponding $p$-value in terms of the canonical noise distribution. The proof of Theorem \ref{thm:binary} is similar to the proof of {\citet[Theorem 4.5]{awan2018differentially}, further demonstrating} that the canonical noise distribution is the appropriate concept needed to extend their result from $(\ep,\de)$-DP to arbitrary $f$-DP. {Just like in \citet{awan2018differentially}, we have the surprising result that the UMP DP test in this case only depends on the summary statistic $x+N$, where $N$ is a CND.} The extension from Bernoulli distributions to arbitrary exchangeable binary variables is simply an observation that the argument only depends on the likelihood ratio. However, the extension to exchangeable distributions will allow us to apply {Theorem \ref{thm:binary}} to the difference-of-proportions problem in Section \ref{s:application}.

\begin{restatable}{thm}{thmbinary}\label{thm:binary}
  Let $f$ be a symmetric nontrivial tradeoff function and let $F$ be a CND of $f$. Let $\mscr X=\{0,1\}$. Let $P$ and $Q$ be two exchangeable distributions on $\mscr X^n$ with pmfs $p$ and $q$ such that $\frac{q}{p}$ is an increasing function of $x=\sum_{i=1}^n x_i$. Let $\alpha\in(0,1)$.  Then a most powerful $f$-DP test $\phi$ with level $\alpha$ for $H_0: X\sim P$ versus $H_1: X\sim Q$ can be expressed in any of the following forms:
  \begin{enumerate}
      \item There exists $y\in \{0,1,2,\ldots, n\}$ and $c\in (0,1)$ such that for all $x\in \{0,1,2,\ldots, n\}$,
      \[\phi(x) = \begin{cases}
    0&x<y,\\
  c& x=y,\\
  1-f(\phi(x-1))& x>y,
  \end{cases}\]
  where if $y>0$ then $c$ satisfies $c\leq 1-f(0)$, and $c$ and $y$ are chosen such that $\EE_{P}\phi(x)=\alpha$.  If $f(0)=1$, then $y=0$.
  \item $\phi(x) = F(x-m)$, where $m\in \RR$ is chosen such that $\EE_P \phi(x)=\alpha$. 
 \item Let $N\sim F$. The variable $T=X+N$ satisfies $f$-DP. Then $p = \EE_{X\sim P} F(X-T)$ is a $p$-value and $I(p\leq \alpha)\mid X=I(T\geq m)\mid X\sim \mathrm{Bern}(\phi(X))$, where $\phi(x)$ agrees with 1 and 2 above.
  \end{enumerate}
\end{restatable}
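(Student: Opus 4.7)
The plan is to closely follow the structure of \citet[Theorem 4.5]{awan2018differentially}, with the canonical noise distribution playing the role that the Tulap distribution did there. By Lemma \ref{lem:exchangeable}, it suffices to restrict attention to tests $\phi$ depending only on the sum $x = \sum_{i=1}^n x_i \in \{0,1,\ldots,n\}$; two adjacent binary databases have sums differing by at most $1$, so Lemma \ref{lem:DPtest} together with the symmetry of $f$ shows that $f$-DP for such a test is equivalent to the two-sided constraint $\phi(x) \leq 1-f(\phi(x-1))$ and $\phi(x-1) \leq 1-f(\phi(x))$ for every $x \in \{1,\ldots,n\}$.

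Next I would construct the candidate test of form 1 greedily. Set $\phi(x) = 0$ for $x<y$, choose $\phi(y) = c \in (0, 1-f(0)]$, and iterate $\phi(x) = 1 - f(\phi(x-1))$ for $x>y$; because $\al \mapsto 1-f(\al)$ is continuous and strictly increasing into $[1-f(0),1)$, this produces a strictly increasing sequence in $(0,1)$ that saturates the upper DP constraint, and $(y,c)$ can be chosen so that $\EE_P\phi = \al$ (with the caveat that if $f(0)=1$ we must take $y=0$). For optimality, given any competing $f$-DP test $\psi$ with $\EE_P\psi = \al$ (WLOG a function of $x$ alone by Lemma \ref{lem:exchangeable}), I would establish single-crossing of $\phi$ and $\psi$: because $\phi$ saturates the DP constraint above $y$ and is zero below, any $f$-DP test $\psi$ that exceeds $\phi$ at some large $x$ must fall below $\phi$ for small $x$. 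Combining single-crossing with the MLR assumption on $q/p$ and the equal-size constraint $\sum_x p(x)(\phi(x)-\psi(x)) = 0$ yields $\EE_Q\phi \geq \EE_Q\psi$ by the standard Neyman--Pearson exchange argument.

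Form 2 then follows from the CND recurrence. By Lemma \ref{lem:recurrence} and Proposition \ref{prop:tradeoff}, $F(t) = 1-f(F(t-1))$ wherever $F(t-1)>0$, so $\phi(x) = F(x-m)$ automatically obeys the same recursion as form 1 once $m$ is selected to satisfy $\EE_P F(X-m) = \al$; the initial zero-portion and transition value match the form-1 parameters $(y,c)$ by construction. For form 3, property 1 of Definition \ref{def:CND} applied to the sensitivity-$1$ statistic $X$ gives that $T = X+N$ is $f$-DP. Symmetry of $F$ yields
\[
P(T \geq m \mid X = x) = P(N \geq m-x) = F(x-m) = \phi(x),
\]
so $I(T \geq m) \mid X \sim \mathrm{Bern}(\phi(X))$. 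The survival function of $T$ under $H_0$ evaluated at $t$ equals $\EE_{X\sim P} F(X-t)$, again by symmetry, so $p = \EE_{X\sim P} F(X-T)$ is a valid $p$-value; since it is a monotone decreasing function of $T$, thresholding $p$ at $\al$ coincides with thresholding $T$ at $m$.

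The main obstacle is the optimality step in form 1: showing that the greedy candidate dominates every competing $f$-DP test under the MLR alternative. The subtlety is in verifying the single-crossing property against an \emph{arbitrary} $f$-DP test $\psi$, which must combine the extremality of the saturating recurrence $\phi(x) = 1-f(\phi(x-1))$ with the two-sided DP constraint imposed on $\psi$. I expect the argument to generalize cleanly from the $(\ep,\de)$-DP case of \citet{awan2018differentially}, with the CND recurrence replacing the explicit Tulap computations throughout.
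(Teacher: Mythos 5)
Your proposal follows essentially the same route as the paper's proof: reduce to tests of the sum via Lemma \ref{lem:exchangeable} and Lemma \ref{lem:DPtest}, build the candidate by saturating the recurrence $\phi(x)=1-f(\phi(x-1))$ (equivalently $F(x-m)$ via Lemma \ref{lem:recurrence} and an intermediate-value choice of $m$), prove optimality by the single-crossing induction against any competing $f$-DP test combined with the MLR assumption and the Neyman--Pearson exchange lemma of \citet[Lemma 4.4]{awan2018differentially}, and verify form 3 by the symmetry of $F$ and the standard survival-function $p$-value. The step you flag as the main obstacle is handled in the paper exactly as you anticipate, so no new idea is needed beyond what you sketched.
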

\begin{proof}[Proof sketch.]
Similar to the proof of \citet[Theorem 4.5]{awan2018differentially}, we begin by establishing the equivalence of forms 1 and 2, and arguing that there exists a test of the form 2 by the Intermediate Value Theorem. Using \citet[Lemma 4.4]{awan2018differentially}, a variation of the Neyman Pearson Lemma, we argue that the proposed $\phi$ is most powerful. Statement 3 uses the expressions from Theorem \ref{thm:pVal} as well as some distributional algebra of CNDs to get the more explicit formula. 
\end{proof}
While Theorem \ref{thm:pVal} took an $f$-DP test and produced ``free'' private $p$-values, Theorem \ref{thm:binary} constructs an optimal test from scratch beginning only with a CND.

\begin{example}
Let us consider what distributions fit within the framework of Theorem \ref{thm:binary}. If the variables $X_i$ are i.i.d., then they are distributed as Bernoulli. However, it is possible for the variables to be exchangeable and not independent. For example, the sum $X = \sum_{i=1}^n X_i$ could be distributed as a hypergeometric or Fisher's noncentral hypergeometric, which arises in two sample tests of proportions, see Section \ref{s:application}. For other exchangeable binary distributions, see \citet{dang2009unified}.
\end{example}

  \begin{remark}\label{rem:Awan}
  Theorem \ref{thm:binary} and Corollary \ref{cor:CND_HT} show that the results of \citet{awan2020differentially} extend to arbitrary $f$-DP. By simply modifying the Tulap distribution to a CND, all of the other results of \citet{awan2020differentially} carry over as well. In particular, for Bernoulli data, there exists a UMP one-sided test, a UMP unbiased two-sided test, UMA one sided confidence interval and UMA unbiased two-sided confidence interval. All of these quantities are a post-processing of the summary value $X+N$, where the noise $N$ is drawn from a CND $F$ of $f$. 
  \end{remark}
  
  \section{Extension to semi-private difference-of-proportions tests}\label{s:application}
  Testing two population proportions is a very common hypothesis testing problem, which arises in clinical trials with control and test groups, A/B testing, and observation studies comparing two groups (such as men and women, students from two universities, or aspects of two different countries). As such, the techniques for testing such hypotheses are very standardized and taught in many introductory statistics textbooks. However, there are limited techniques to test these hypotheses under $f$-DP. 




In Appendix \ref{s:noUMPU} we show that subject to differential privacy, there does not exist a UMP (unbiased) $f$-DP test. Nevertheless, we use the techniques developed earlier in this paper to derive a ``semi-private'' UMP unbiased test, which gives an upper bound on the power of any $f$-DP UMP unbiased test. The novel concept of ``semi-privacy'' enforces some of the DP constraints but not others, and this framework may be of independent interest when analyzing a combination of private and non-private releases (see Remark \ref{rem:semi} for more details). We then construct an $f$-DP test which allows for optimal inference for the two population parameters, and which we show through simulations to have comparable power to the semi-private UMP unbiased test. In the case of $\ep$-DP, we show through simulations that the proposed DP test is similar to the semi-private UMP unbiased test with privacy parameter $(\ep/\sqrt 2)$. We also demonstrate that the proposed test has more accurate $p$-values and type I error than commonly used Normal approximation tests.

\subsection{Semi-private UMP unbiased test}\label{s:semiprivate}
In this section, we simplify the search for an $f$-DP test for the difference of proportions, establishing a condition for the test to be \emph{unbiased}. However, as demonstrated through an example in Appendix \ref{s:noUMPU}, there does not in general exist a UMP unbiased (UMPU) $f$-DP test. By weakening the privacy guarantee, we develop a ``semi-private'' UMPU test which can be efficiently implemented. While the ``semi-private'' test does not satisfy $f$-DP, it gives an upper bound on the power of any other unbiased $f$-DP test, and serves as a useful baseline in Section \ref{s:simulations}.

We observe independent $X_i \iid \mathrm{Bern}(\theta_X)$ for $i=1,\ldots, n$ and $Y_j \iid \mathrm{Bern}(\theta_Y)$ for $j=1,\ldots, m$. For privacy, we consider two datasets \emph{adjacent} if either one of the $X_i$ is changed or one of the $Y_i$ is changed (but only one total value). We consider $m$ and $n$ to be publicly known values. We wish to test $H_0: \theta_X\geq\theta_Y$ versus $H_1: \theta_X<\theta_Y$, subject to the constraint of differential privacy. Such one-sided tests can also be converted to two-sided tests using a Bonferroni correction, as discussed in Remark \ref{rem:Bonferroni}, at the end of Section \ref{s:inversion}.

By a similar argument as in Lemma \ref{lem:exchangeable}, it is sufficient to consider tests which are functions of the empirical distributions of $\ul X$ and $\ul Y$. Equivalently, we may restrict to tests which are functions of $X = \sum_{i=1}^n X_i$ and $Y = \sum_{j=1}^m Y_j$. We consider two databases adjacent if either $X$ changes by 1 or if $Y$ changes by 1 (but not both). By Lemma \ref{lem:DPtest}, a test $\phi(x,y)$ satisfies $f$-DP if the following set of inequalities hold
\begin{equation}\label{eq:DPIneq}
  \begin{array}{cc}
    \phi(x,y)\leq 1-f(\phi(x+1,y))\\
    \phi(x,y)\leq 1-f(\phi(x-1,y))\\
    \phi(x,y)\leq 1-f( \phi(x,y+1))\\
    \phi(x,y)\leq 1-f(\phi(x,y-1)),
  \end{array}\end{equation}
for all pairs of $(x,y)$. 

Classically, it is known that even without privacy there is no uniformly most powerful test for this problem. Traditionally, attention is restricted to unbiased tests. Recall that a test is unbiased if for all $\theta_1\in \Theta_1$ and $\theta_0\in \Theta_0$, the power at $\theta_1$ is higher than at $\theta_0$ (here, $\theta$ represents the pair $(\theta_X,\theta_Y)$). Because the variables $(X,Y)$ have distribution in the exponential family, the search for a UMP unbiased test can be restricted to tests which satisfy $\EE_{\theta_X=\theta_Y} (\phi(X,Y)\mid X+Y=z)=\alpha$ \citep[Proof of Theorem 4.124]{schervish2012theory}, since $X+Y$ is a complete sufficient statistic under $H_0$. When $\theta_X=\theta_Y=\theta_0$, $X+Y \sim \mathrm{Binom}(m+n,\theta_0)$, and $Y\mid (X+Y=z) \sim \mathrm{Hyper}(m,n,z)$, where $\mathrm{Hyper}(m,n,z)$ is the hypergeometric distribution, where we draw $m$ balls out of a total of $m+n$ balls, and where $z$ balls are white, and the random variable counts the number of drawn white balls. This is equivalent to a permutation test where we shuffle the labels of the observations. Lemma \ref{lem:hyper} summarizes these observations. 


\begin{lem}\label{lem:hyper}
Let $X\sim \mathrm{Binom}(n,\theta_X)$ and $Y\sim \mathrm{Binom}(m,\theta_Y)$ be independent. Consider the test $H_0: \theta_X\geq \theta_Y$ and $H_1: \theta_X<\theta_Y$. Let $\Phi$ be a set of tests. If there exists a UMP test $\phi\in\Phi$ among those which satisfy 
\begin{equation}\label{eq:hyper}
    \EE_{H\sim\mathrm{Hyper}(m,n,z)}\phi(z-H,H)=\alpha,
\end{equation} 
for all $\alpha$, then $\phi$ is UMP unbiased size $\alpha$ among $\Phi$. 
\end{lem}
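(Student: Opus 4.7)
The plan is to follow the classical Lehmann--Scheff\'e reduction from UMP-unbiased tests to UMP tests with Neyman structure, along the lines of the proof of \citet[Theorem 4.124]{schervish2012theory}. The joint distribution of $(X,Y)$ lies in a two-parameter exponential family, and on the boundary $\theta_X=\theta_Y=\theta_0$ the statistic $X+Y\sim\mathrm{Binom}(m+n,\theta_0)$ is a complete sufficient statistic. This is the classical structural input that lets one identify unbiased tests with tests satisfying a conditional-expectation constraint.

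First I would show that every unbiased size-$\alpha$ test $\psi\in\Phi$ satisfies \eqref{eq:hyper}. Unbiasedness combined with continuity of the power function in $\theta$ forces $\EE_{\theta_X=\theta_Y=\theta_0}\psi(X,Y)=\alpha$ for every $\theta_0\in(0,1)$ (similarity on the boundary), because the power is $\leq\alpha$ on $\Theta_0$ and $\geq\alpha$ on $\Theta_1$ while these regions share the boundary $\{\theta_X=\theta_Y\}$. Completeness of $X+Y$ under this boundary family then upgrades similarity to Neyman structure: $\EE[\psi(X,Y)\mid X+Y=z]=\alpha$ for every $z\in\{0,\ldots,m+n\}$. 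A direct computation shows that under $\theta_X=\theta_Y$ the conditional distribution of $Y$ given $X+Y=z$ is $\mathrm{Hyper}(m,n,z)$, independent of $\theta_0$, so the Neyman-structure condition is exactly \eqref{eq:hyper}.

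The hypothesized UMP test $\phi\in\Phi$ in the constrained class therefore dominates every unbiased size-$\alpha$ test in $\Phi$ at every $\theta\in\Theta_1$. It remains to verify that $\phi$ itself is unbiased and size $\alpha$: the constant test $\equiv\alpha$ lies in the constrained class, so $\phi$ dominates it on $\Theta_1$, giving power $\geq\alpha$ there; on the boundary of $\Theta_0$ the power equals $\alpha$ by \eqref{eq:hyper}; and the standard exponential-family monotonicity argument---after reparametrizing in natural coordinates so that the nuisance direction is orthogonal to the direction $\log\frac{\theta_Y(1-\theta_X)}{\theta_X(1-\theta_Y)}$ of the test---shows that the supremum of power on $\Theta_0$ is attained on this boundary.

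The hardest step I anticipate is the completeness-to-Neyman-structure upgrade: one needs that similarity at the continuum $\theta_0\in(0,1)$, together with completeness of the binomial family, forces the conditional expectation to be identically $\alpha$ on the full support of $X+Y$. A secondary technical point is verifying that the size of $\phi$ is indeed $\alpha$ on all of $\Theta_0$ (not just on the boundary), which I would handle via the standard monotone-likelihood-ratio argument in the reparametrized exponential family; everything else is a routine application of the Lehmann--Scheff\'e UMPU machinery.
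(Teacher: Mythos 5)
Your proposal follows essentially the same route as the paper's proof: continuity of the power function and bounded completeness of $X+Y$ on the boundary $\{\theta_X=\theta_Y\}$ reduce unbiased size-$\alpha$ tests to tests with Neyman structure, i.e.\ tests satisfying \eqref{eq:hyper} (the paper cites \citet[Proposition 4.92]{schervish2012theory} and \citet[Lemma 4.122]{schervish2012theory} for exactly this), so the UMP test in the constrained class dominates every unbiased size-$\alpha$ test in $\Phi$. The only difference is that you spell out the verification that $\phi$ itself is unbiased of size $\alpha$ (comparison with the constant test $\equiv\alpha$ plus the monotone-likelihood-ratio argument on $\Theta_0$), a step the paper's proof simply asserts as clear, so your write-up is, if anything, slightly more complete.
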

\begin{proof}
It is easy to verify that the power function is continuous, and that $X+Y$ is a boundedly complete sufficient statistic under $H_0$. By \citet[Proposition 4.92]{schervish2012theory} and \citet[Lemma 4.122]{schervish2012theory}, the set of unbiased tests for this problem is a subset of the tests which satisfy Equation \eqref{eq:hyper}. It is also clear that Equation \eqref{eq:hyper} implies that the test is size $\alpha$. It follows that if a test is UMP among the tests in $\Phi$ satisfying Equation \eqref{eq:hyper} then it is UMP unbiased size $\alpha$ among $\Phi$. 
\end{proof}

However, as demonstrated by an example given later in Appendix \ref{s:noUMPU}, in general there is no UMP test for the hypothesis $H_0: \ta_X\geq\ta_Y$ versus $H_1: \ta_X<\ta_Y$ among the set
\begin{equation}\label{eq:Phi}
    \Phi_{f} = \l\{\phi(x,y) \mid \phi \text{ satisfies inequalities  \eqref{eq:DPIneq} and Equation \eqref{eq:hyper}}\r\}.
\end{equation}
The reason for this is that Lemma \ref{lem:hyper} suggests that a UMP unbiased test relies on being able to construct a UMP test, given $X+Y=z$. However, the inequalities \eqref{eq:DPIneq} put constraints, relating $\phi(x,y)$ for different values of $z$. 

Instead of requiring that all of the inequalities \eqref{eq:DPIneq} hold, we weaken the requirement of differential privacy, to only include the constraints relating $(x,y)$ with the same sum $x+y=z$. We call the following the set of ``semi-private'' tests:
\[\Phi^{\text{semi}}_{f}= \l\{\phi(x,y) \middle|
  \begin{array}{c}
    \text{for each } z\in \{0,1,\ldots, m+n\},\\
    \text{ there exists } \psi \in \Phi_{f},\\
    \text{ s.t. } \phi(x,y) = \psi(x,y) \text{ for all } x+y=z
  \end{array}\r\}.\]
   Intuitively, $\Phi^{\text{semi}}_{f}$ is the set of tests, which satisfy the set of implied constraints of \eqref{eq:DPIneq}, which only relate $(x,y)$ and $(x+1,y-1)$. So, the summary $z=X+Y$ is not protected at all, but for any $X+Y=z$, $(X,Y)$ must satisfy $f$-DP. While these semi-private tests are not necessarily intended for the purpose of privacy protection, by weakening the privacy requirement, they offer an upper bound on the performance of any DP test, as stated in Corollary \ref{cor:semi}.

\begin{restatable}
  [Semi-Private UMPU]{thm}{thmsemi}\label{thm:semi}
  Let $f$ be a symmetric nontrivial tradeoff function and let $F$ be a CND for $f$. Let $X\sim \mathrm{Binom}(n,\theta_X)$ and $Y\sim \mathrm{Binom}(m,\ta_Y)$ be independent. Let $\al\in (0,1)$ be given. For the hypothesis $H_0: \ta_X\geq\ta_Y$ versus $H_1: \ta_X<\ta_Y$,
  \begin{enumerate}
  \item $\phi^*(x,y) = F(y-x-c(x+y))$ is the UMPU test of size $\al$ among $\Phi^{\text{semi}}_{f}$, where $c(x+y)$ is chosen such that $\EE_{H\sim \mathrm{Hyper}(m,n,x+y)} \phi^*((x+y)-H,H) = \al$.
  \item Set $T = Y-X+N$, where $N\sim F$, and set $Z=X+Y$. Then 
  \[p = \EE_{H\sim \mathrm{Hyper}(m,n,Z)}F(2H-Z-T)\] is the exact $p$-value corresponding to $\phi^*$.
  \end{enumerate}
\end{restatable}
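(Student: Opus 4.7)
The plan is to reduce the UMPU problem to a slice-by-slice analysis and then apply a sensitivity-$2$ analogue of Theorem \ref{thm:binary}. By Lemma \ref{lem:hyper}, it suffices to find, for each $z\in\{0,1,\ldots,m+n\}$, a UMP size-$\alpha$ test in $\Phi^{\text{semi}}_f$ conditional on $Z=z$, where under $H_0$ with $\thx=\thy$ we have $Y\mid Z=z\sim\mathrm{Hyper}(m,n,z)$ and under $H_1$ with $\thx<\thy$ the conditional distribution is Fisher's noncentral hypergeometric with odds ratio greater than $1$, which has monotone likelihood ratio in $y$.

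I next characterize the constraint that $\Phi^{\text{semi}}_f$ imposes on a single slice. Writing $\tilde\phi(y)=\phi(z-y,y)$, the definition of $\Phi^{\text{semi}}_f$ requires $\tilde\phi$ to be the restriction of some $\psi\in\Phi_f$. Composing two single-step inequalities from \eqref{eq:DPIneq} for $\psi$ through the intermediate point $(z-y-1,y)$ yields the implied two-step bound $\tilde\phi(y+1)\le 1-f(1-f(\tilde\phi(y)))$. Using properties 2 and 4 of the CND (Definition \ref{def:CND}), the identity $1-f(\alpha)=F(F^{-1}(\alpha)+1)$ reduces this to $F^{-1}(\tilde\phi(y+1))\le F^{-1}(\tilde\phi(y))+2$, and symmetrically in the other direction. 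Thus on a slice, $F^{-1}(\tilde\phi)$ has sensitivity $2$ in $y$.

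This reformulation places the slice problem in the setup of Theorem \ref{thm:binary} but with doubled sensitivity. A Neyman--Pearson argument analogous to that theorem (via \citet[Lemma 4.4]{awan2018differentially}), combined with the MLR of the alternative, shows that the slice-UMP test saturates the sensitivity-$2$ bound and takes the shifted-CND form $\tilde\phi(y)=F(2y-z-c(z))$, where $c(z)$ is chosen uniquely by the Intermediate Value Theorem so that $\EE_{H\sim\mathrm{Hyper}(m,n,z)}F(2H-z-c(z))=\alpha$. Translating back via $x=z-y$ gives $\phi^{*}(x,y)=F(y-x-c(x+y))$. To certify $\phi^{*}\in\Phi^{\text{semi}}_f$, on each slice $z$ I exhibit the extension $\psi_z(x,y)=F((y-x)-c(z))$ with $c(z)$ held fixed; because $Y-X$ has sensitivity $1$ and $F$ is a CND for $f$, $\psi_z$ satisfies \eqref{eq:DPIneq} globally by property 1 of Definition \ref{def:CND}.

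For part 2, I apply Theorem \ref{thm:pVal} to $\phi^{*}$: the private summary $T^{*}=F^{-1}(\phi^{*}(X,Y))+N=(Y-X-c(Z))+N$ is $f$-DP, and the resulting $p$-value is $\EE_{H_0}F(F^{-1}(\phi^{*}(X',Y'))-T^{*})$. Conditioning on $Z$ under the least-favorable null $\thx=\thy$ (where $Y'\mid Z\sim\mathrm{Hyper}(m,n,Z)$) and noting that $T=Y-X+N=T^{*}+c(Z)$, the $c(Z)$ shifts in both $F^{-1}(\phi^{*}(X',Y'))=2Y'-Z-c(Z)$ and in $T^{*}$ cancel, yielding the stated expression $p=\EE_{H\sim\mathrm{Hyper}(m,n,Z)}F(2H-Z-T)$. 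The main obstacle is the slice-level Neyman--Pearson argument with the doubled sensitivity; it parallels \citet[Theorem 4.5]{awan2018differentially} but hinges on the crucial observation that moving between adjacent $y$-values within a slice requires two data changes (one in $\ul X$, one in $\ul Y$), producing the factor of $2$ in the extremal test's $F^{-1}$-slope.
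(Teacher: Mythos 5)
Your proposal is correct and follows essentially the same route as the paper: reduce to a slice-by-slice problem via Neyman structure (Lemma \ref{lem:hyper}), note that moving within a slice composes two adjacency steps so the effective constraint is the tradeoff function $g=1-(1-f)^{\circ 2}$ (the paper's Lemma \ref{lem:convert}), run a Theorem \ref{thm:binary}-style Neyman--Pearson/single-crossing argument against the Fisher noncentral hypergeometric with its MLR, and extract the $p$-value from Theorem \ref{thm:pVal} with the $c(Z)$ terms cancelling. Your ``sensitivity $2$ in the $F^{-1}$ scale'' reformulation is precisely the content of the paper's Lemma \ref{lem:twice}, which shows $F(2\cdot)$ is a CND for $g$ so that Theorem \ref{thm:binary} applies verbatim (and which carefully treats the invertibility/flat-region edge cases of $F$ that your sketch glosses over).
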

\begin{proof}[Proof sketch.]
Lemma \ref{lem:hyper} reduced the problem to determining whether the test is UMP among those which satisfy Equation \eqref{eq:hyper}. The technical lemmas \ref{lem:convert} and \ref{lem:twice}, given in Appendix \ref{s:proofs}, quantify the privacy of the semi-private tests when viewed as a function of $y$ (where $z$ is fixed), and determine the CND of the derived tradeoff function. Conditional on $z$, the distribution of $Y$ is a Fisher noncentral hypergeometric distribution \citep{harkness1965properties,fog2008sampling}. By Theorem \ref{thm:binary} we can construct the most powerful DP test based on the CND. Finally, we verify a monotone likelihood ratio property of the noncentral hypergeometrics to argue that the test is in fact uniformly most powerful.
\end{proof}

Corollary \ref{cor:semi} shows that while the semiprivate UMPU test does not satisfy $f$-DP, we can use it as a benchmark to compare other tests, as it gives an upper bound on the highest possible power of any unbiased $f$-DP level $\alpha$ test. 

\begin{cor}\label{cor:semi}
  Let $\phi^*(x,y)$ be the UMPU size $\al$ test among $\Phi^{\text{semi}}_f$, and let $\phi(x,y)$ be any unbiased, level $\al$ test in $\Phi_{f}$. Then
  \[\EE_{\substack{X\sim \ta_X\\Y\sim \ta_Y}} \phi^*(X,Y) \geq \EE_{\substack{X\sim \ta_X\\Y\sim \ta_Y}} \phi(X,Y),\]
  for any values of $\theta_X\leq \theta_Y$.
\end{cor}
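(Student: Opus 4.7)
The plan is to show that every unbiased level-$\al$ test in $\Phi_f$ is itself a member of $\Phi_f^{\text{semi}}$, after which the desired inequality drops out immediately from the UMPU property of $\phi^*$ established in Theorem \ref{thm:semi}. First I would establish the inclusion $\Phi_f \subseteq \Phi_f^{\text{semi}}$ by directly reading the definitions against each other: membership in $\Phi_f^{\text{semi}}$ only requires that for each $z\in \{0,1,\ldots,m+n\}$ there exists \emph{some} $\psi \in \Phi_f$ agreeing with $\phi$ on the slice $\{(x,y) : x+y = z\}$. Given any $\phi \in \Phi_f$, one can simply take $\psi = \phi$ on each slice, and the condition is satisfied trivially. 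Intuitively, the semi-private class is obtained from the $f$-DP class by discarding the across-slice privacy constraints in \eqref{eq:DPIneq} (those relating $\phi(x,y)$ to $\phi(x\pm 1, y)$ and $\phi(x, y\pm 1)$ with different $x+y$), so every $f$-DP test already satisfies the weaker semi-private constraints.

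With this inclusion in hand, any unbiased level-$\al$ test $\phi \in \Phi_f$ is also an unbiased level-$\al$ test lying in $\Phi_f^{\text{semi}}$. Theorem \ref{thm:semi} asserts that $\phi^*$ is the UMPU size-$\al$ test in $\Phi_f^{\text{semi}}$, so by the definition of UMPU we immediately obtain $\beta_{\phi^*}(\ta_X,\ta_Y) \geq \beta_\phi(\ta_X,\ta_Y)$ at every alternative $\ta_X<\ta_Y$. For the boundary case $\ta_X = \ta_Y$ (which is part of the null but is included in the range $\ta_X \leq \ta_Y$ of the corollary), both $\phi^*$ and $\phi$ satisfy Equation \eqref{eq:hyper}---the former by construction, the latter by virtue of membership in $\Phi_f$, which bakes \eqref{eq:hyper} into its definition---so averaging over the $\mathrm{Hyper}(m,n,\cdot)$ conditional distribution and then over $X+Y$ shows both tests have power exactly $\al$ at $\ta_X=\ta_Y$, making the inequality an equality there.

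I do not anticipate any real obstacle: once the much more substantive Theorem \ref{thm:semi} is in place, the corollary is essentially a bookkeeping exercise confirming that weakening the privacy constraint enlarges the feasible set, hence the optimum in the enlarged set upper-bounds any feasible power in the original set. The only mild subtlety worth flagging explicitly in the writeup is justifying the boundary case $\ta_X = \ta_Y$, which is handled via Equation \eqref{eq:hyper} as above.
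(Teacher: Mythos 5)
Your proposal is correct and is essentially the paper's own (implicit) argument: Corollary \ref{cor:semi} follows immediately from the trivial containment $\Phi_f \subseteq \Phi^{\text{semi}}_f$ (take $\psi=\phi$ on every slice $x+y=z$) together with the UMPU property of $\phi^*$ from Theorem \ref{thm:semi}. Your explicit treatment of the boundary case $\theta_X=\theta_Y$ via Equation \eqref{eq:hyper}, where both tests have power exactly $\al$, is a correct and welcome bit of extra care.
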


{\begin{remark}\label{rem:semi}
    The semi-private framework could potentially be of independent interest, as it is an example of a setting where some statistics are preserved exactly, whereas others are protected with privacy noise. For example, this is similar to the framework used for the 2020 Decennial Census, where certain counts are preserved without any privacy noise, and the other counts are sanitized by an additive noise mechanism. While they phrase their privacy guarantee in terms of post-processing, one could also view it as a ``semi-private'' procedure, where their privacy guarantee only holds for the databases which agree with the preserved counts. This is an alternative perspective to \emph{subspace differential privacy} \citep{gao2022subspace}, which restricts the output of a mechanism rather than the input database.
\end{remark}}

\subsection{Designing an  \texorpdfstring{$f$}{TEXT}-DP test for difference-of-proportions}\label{s:inversion}
Based on the negative result of Appendix \ref{s:noUMPU}, we consider a different approach to building a well-performing DP test. A very common non-private test used to test $H_0: \thx\geq\thy$ versus $H_1: \thx<\thy$ for $X\sim \mathrm{Binom}(n,\thx)$ and $Y\sim \mathrm{Binom}(m,\thy)$ is based on the test statistic 
\[Y/m-X/n,\]
which is intuitive as this quantity captures the sample evidence for the difference between $\thx$ and $\thy$. In fact this statistic has the important property that its expectation under the null does not depend on the parameter $\theta_X=\theta_Y$. If this were not the case, then tests based on this statistic would have limited power \citep{robins2000asymptotic}. However, the sampling distribution of this quantity depends on the parameter $\theta = \thx=\thy$ under the null (e.g., for $\theta=1/2$, the variance of $Y/m-X/n$ is higher than when $\theta$ is larger or smaller). Typically, the central limit theorem is used to justify that
\[\frac{Y/m-X/n}{\sqrt{(1/m+1/n)\hat \theta_{0}(1-\hat \theta_{0}})} \approx N(0,1),\]
where $\hat \theta_0 = \frac{X+Y}{m+n}$ is the maximum likelihood estimator for $\theta$ under the null. The central limit approximation works well in large samples, but for small samples this  approximation can be inadequate as demonstrated in the simulations of Section \ref{s:simulations}.

\subsubsection{Inversion-based parametric bootstrap \texorpdfstring{$f$}{TEXT}-DP test}
In this section, we consider tests based on the following privatized summary quantities $X+N_1$ and $Y+N_2$, where $N_1,N_2\iid F$ where $F$ is a CND of $f$. The vector $(X+N_1,Y+N_2)$ satisfies $f$-DP, since only one of $X$ and $Y$ changes by at most 1, between adjacent databases. 

\begin{remark}
Basing our test on these two noisy statistics has a few important benefits. As noted in Remark \ref{rem:Awan}, given $X+N_1$ and $Y+N_2$ we can perform optimal hypothesis tests and confidence intervals for $\theta_X$ and $\theta_Y$ combining Theorem \ref{thm:binary}, Corollary \ref{cor:CND_HT} and the other results of \citet{awan2020differentially}. In general this is not the case for an arbitrary $f$-DP test of $H_0: \thx\geq\thy$ versus $H_1: \thx<\thy$. While \ref{thm:pVal} says that we can always get a summary statistic and $p$-value out of an arbitrary $f$-DP test, these values may not contain enough information to do inference (let alone optimal inference) for $\theta_X$ and $\theta_Y$ separately.
\end{remark}

Then we consider the quantity $\displaystyle T = m^{-1}(Y+N_2) - n^{-1}(X+N_1)$. 
Asymptotics tells us that under the null hypothesis, $T/\sqrt{(1/m+1/n)\theta(1-\theta)}\overset d \rightarrow N(0,1)$, which is the same sampling distribution as without privacy. However, as many other researchers have noted, while these approximations are serviceable in classical settings, the approximations are too poor when privacy noise is introduced \citep{wang2018statistical}. One reason for this is that the noise introduced to achieve privacy, such as Laplace or Tulap,  often has heavier tails than the limit distribution, which is often Gaussian.

We notice that $T$ is a linear combination of independent random variables. So, we can use characteristic functions to derive the sampling distribution of $T$ under a specific null parameter $\theta$. 

we use $\psi_X(\cdot)$ to denote the characteristic function of a random variable $X$: $\psi_X(t) \defeq \EE_{X} e^{itX}$. Recall that for independent random variables $X_1,\ldots, X_n$ and real values $a_1,\ldots, a_n$, if $X=\sum_{i=1}^n a_i X_i$, then    $\psi_{X}(t) = \prod_{i=1}^n \psi_{X_i}(a_it)$. 

Then the characteristic function of our test statistic $T$ is given by
\[\psi_{T\sim \theta}(t) = \psi_{Y\sim \theta}(t/m)\psi_{N_2}(t/m)\psi_{X\sim \theta}(-t/n)\psi_{N_1}(-t/n).\]
We know the characteristic function for a binomial random variable, and for many common DP distributions $N$, we have formulas for $\psi_N$ as well. 

We can use the following inversion formula to evaluate the cdf of $T$.

\begin{lemma}[Inversion Formula: Gil-Pelaez]\label{lem:inversion}
Let $X$ be a real-valued continuous random variable, with characteristic function $\psi_X(t)$. Then the cdf of $X$ can be evaluated as
\[F_{X}(x) = \int_0^\infty \frac{\mathrm{Im}(e^{-itx} \psi_{X}(t))}{t} \ dt,\]
  where $\mathrm{Im}(\cdot)$ returns the imaginary component of a complex number: $\mathrm{Im}(z) = (z-z^*)/(2i)$, where $z^*$ is the complex conjugate of $z$.
\end{lemma}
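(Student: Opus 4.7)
The plan is to derive the Gil--Pelaez inversion formula by reducing the right-hand side to an expectation of the classical Dirichlet integral, after swapping the order of integration between $t$ and the expectation over $X$. First, I would unpack the imaginary part: writing $\psi_X(t) = \EE[e^{itX}]$ and using $\mathrm{Im}(e^{i\theta}) = \sin(\theta)$, we have
\[
\mathrm{Im}(e^{-itx}\psi_X(t)) = \EE[\sin(t(X-x))],
\]
so the integral on the right equals $\int_0^\infty \EE[\sin(t(X-x))]/t \, dt$.

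Next, I would justify the exchange of $\int_0^\infty$ and $\EE$ via Fubini's theorem. The natural obstacle here is that $|\sin(tu)/t|$ is not absolutely integrable on $(0,\infty)$, so Fubini does not apply directly. I would handle this by truncating at $T$, applying Fubini on $[0,T]$ (where the integrand is dominated by $\min(|X-x|, 1/t)$, which is integrable in both variables), and then sending $T \to \infty$. The key technical input is that the partial Dirichlet integrals $\int_0^T \sin(tu)/t \, dt$ are uniformly bounded in $T$ and $u$ (a standard consequence of the alternating-series/Abel-summation estimate), which allows dominated convergence to be applied on the outer expectation.

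After the swap, the inner integral is the classical Dirichlet integral: $\int_0^\infty \sin(tu)/t \, dt = (\pi/2)\,\mathrm{sgn}(u)$ for $u \ne 0$. Thus the quantity reduces to $(\pi/2)\EE[\mathrm{sgn}(X-x)] = (\pi/2)(P(X > x) - P(X < x))$. Because $X$ is assumed continuous, $P(X = x) = 0$, so this equals $(\pi/2)(1 - 2F_X(x))$. Solving for $F_X(x)$ recovers the Gil--Pelaez formula in its standard normalization $F_X(x) = 1/2 - (1/\pi)\int_0^\infty \mathrm{Im}(e^{-itx}\psi_X(t))/t \, dt$ (the statement as printed in the excerpt appears to suppress the $1/2$ and $1/\pi$ constants, which must be reinstated for the equation to hold dimensionally).

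The main obstacle is the rigorous Fubini step, since the integrand is only conditionally integrable. All the other ingredients---computing $\mathrm{Im}(e^{-itx}\psi_X(t))$, evaluating the Dirichlet integral, and using continuity to discard the atom at $x$---are routine manipulations once the interchange is justified. No properties of differential privacy, the canonical noise distribution, or the particular structure of $T$ from Section \ref{s:inversion} are needed; this lemma is a purely probabilistic statement invoked as a black box for numerically evaluating the cdf of $T$ from its characteristic function.
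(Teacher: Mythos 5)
You are right that this lemma is a purely probabilistic black box: the paper gives no proof of it (it is quoted as the classical Gil--Pelaez theorem and invoked only to evaluate the cdf of $T$ in Algorithm \ref{alg:inversion}), so the only benchmark is the standard literature argument --- which is exactly the route you take: write $\mathrm{Im}(e^{-itx}\psi_X(t))=\EE[\sin(t(X-x))]$, interchange integral and expectation, evaluate the Dirichlet integral $\int_0^\infty \sin(tu)/t\,dt=(\pi/2)\,\mathrm{sgn}(u)$, and use continuity of $X$ to discard the atom at $x$. Your remark about the constants is also correct and worth making explicit: as printed the display is false (take $X$ symmetric about $0$ and $x=0$: the right-hand side is $0$ while $F_X(0)=1/2$), and the intended statement is $F_X(x)=\tfrac12-\tfrac1\pi\int_0^\infty \mathrm{Im}(e^{-itx}\psi_X(t))/t\,dt$, which is what you prove and what a correct implementation must use.

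One step as written does not go through: the Fubini justification on $[0,T]$. The proposed dominating function $\min(|X-x|,1/t)$ is integrable in each variable separately but not jointly, since $\int_0^T\min(|u|,1/t)\,dt \asymp 1+\log^+(T|u|)$, so joint integrability over $[0,T]\times\Omega$ requires $\EE\log^+|X|<\infty$ --- a condition an arbitrary continuous $X$ need not satisfy (and for such heavy-tailed $X$ the integral near $t=0$ may exist only as an improper limit, which is why Gil--Pelaez states the theorem with $\lim_{\eta\to0,\,T\to\infty}\int_\eta^T$). The standard repair is to truncate at both ends: on $[\eta,T]$ the integrand is bounded by $1/\eta$ and the underlying measure is Lebesgue on a finite interval times a probability measure, so Fubini is immediate; after the swap, the partial Dirichlet integrals $\int_\eta^T \sin(tu)/t\,dt$ are bounded uniformly in $(\eta,T,u)$ and converge pointwise to $(\pi/2)\,\mathrm{sgn}(u)$, so bounded convergence handles $\eta\to0$, $T\to\infty$. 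With that modification your argument is the standard proof; and for the statistic actually inverted in Section \ref{s:inversion} (binomial counts plus Tulap or Gaussian noise) all moments exist, so none of these integrability caveats affect the paper's use of the lemma.
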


Lemma \ref{lem:inversion} gives a computationally tractable method of evaluating the exact sampling distribution of $T$ at a given null parameter. Since larger values of $T$ give more evidence of the alternative hypothesis, $p(T)=1-F_{T\sim \theta_0}(T)$ is a $p$-value for the null hypothesis $H_0: \theta_X=\theta_Y=\theta_0$ \citep[Theorem 8.3.27]{casella2002statistical}. However, this $p$-value depends on the null parameter $\theta_0$, which we likely do not know. A solution is to substitute an estimator for $\theta_0$ under the null hypothesis that $\thx=\thy$, based on the privatized statistics $X+N_1$ and $Y + N_2$.  A natural estimator is $\hat \theta_{0} = \min\{\max\{\frac{X+N_1+Y+N_2}{m+n},0\},1\}$. Plugging this estimate in for $\theta_0$ gives the approximate $p$-value:

\[\twid p(T,\hat\theta_0) = 1-F_{T_0\sim \hat \theta_{0}}(T).\]
This approximate $p$-value is our recommended $f$-DP test for the difference-of-proportions testing problem, and the procedure is summarized in Algorithm \ref{alg:inversion} for the cases of $(\ep,0)$-DP and $\mu$-GDP. While $p$-value is not exact, and is thus not guaranteed to have the intended type I error, the results of \citet{robins2000asymptotic} imply that this $p$-value is asymptotically uniform under the null, implying that the test is asymptotically unbiased, with asymptotically accurate type I errors. Furthermore, as we demonstrate in Section \ref{s:simulations}, for even sample sizes as small as $n,m\geq 30$, the approximation is incredibly accurate, offering accuracy even higher than the classic normal approximation test, which is widely used and accepted. We also show in Section \ref{s:simulations} that the power of the test is comparable to the semi-private test of Section \ref{s:semiprivate} indicating that it is near optimal. 

\begin{algorithm}[t]
\SetAlgoLined
 Let $X$, $Y$, $m$, and $n$ be given. Let either $\ep$ or $\mu$ be given.\;
 \If{$\ep$-DP}{
 Draw $N_1,N_2 \iid \mathrm{Tulap}(0,\exp(-\ep),0)$\;
    Set $\psi_{N}(t)  = \frac{\l[1-\exp(-\ep)\r]^2\l[\exp(-it/2)-\exp(it/2)\r]}{it\l[1-\exp(it-\ep)\r]\l[1-\exp(-it-\ep)\r]}$\;
 }\If{$\mu$-GDP}{
 Draw $N_1,N_2 \iid N(0,1/\mu^2)$\;
 Set $\psi_{N}(t) = \exp(-t^2/(2\mu^2))$\;
 }
 Set $\psi_{Y\sim \theta}(t) = ((1-\theta) + \theta \exp(it))^m$ and $\psi_{X\sim \theta}(t) = ((1-\theta) + \theta \exp(it))^n$\;
  Set $\hat X = X+N_1$ and $\hat Y = Y+N_2$\;
  Set $T = \hat Y/m-\hat X/n$\;
  Set $\hat \theta = \min\l\{\max\l\{\frac{\hat X+\hat Y}{m+n},0\r\},1\r\}$\;
  Set $\psi_{T\sim \theta}(t) =  \psi_{Y\sim \theta}(t/m)\psi_{X\sim \theta}(-t/n) \psi_N(t/m)\psi_N(-t/n)$\;
  Output $p$-value and summary values: $p= 1-\int_0^\infty \frac{\mathrm{Im}(\exp(itT) \psi_{T\sim \hat \theta}(t))}{t} \ dt$, $X+N_1$, $Y+N_2$
 \caption{$\ep$-DP or $\mu$-GDP approximate $p$-value, based on inversion.
 }\label{alg:inversion}
\end{algorithm}

\begin{remark}
While the p-value generated from Algorithm \ref{alg:inversion} may seem complex, it is relatively easy to implement. For instance in R, the command \texttt{integrate} can perform an accurate numerical integral. Another strength of Algorithm \ref{alg:inversion} is that the running time does not depend on the sample size $m$ or $n$, whereas the semi-private test runs in $O(m)$ time. 
\end{remark}

\begin{remark}
Algorithm \ref{alg:inversion} can be viewed as an exact evaluation of a parametric bootstrap, where we by-pass the need for sampling by numerically computing the cdf. As such,  we avoid the additional error and running time produced by the Monte Carlo sampling.
\end{remark}

\begin{remark}\label{rem:Bonferroni}
    While we focus on the one-sided hypothesis $H_0: \theta_X\geq \theta_Y$ versus $H_1: \theta_X<\theta_Y$, the test of Algorithm \ref{alg:inversion} can be easily modified to produce a ``two-sided'' test for $H_0: \theta_X=\theta_Y$ versus $H_1: \theta_X\neq \theta_Y$. Call $p$ the one-sided $p$-value from Algorithm \ref{alg:inversion}. Then $p_2 = 2\min\{p,1-p\}$ is a $p$-value for the two-sided test. This method of combining multiple tests called a \emph{Bonferroni correction} or an \emph{intersection-union test} \citep[Section 8.2.3]{casella2002statistical}.
\end{remark}

\subsection{Simulations}\label{s:simulations}
In this section, we perform several simulations to compare the performance of our proposed DP test to other competing DP tests, the semi-private UMPU test, as well as popularly used non-private tests. While our results can be applied to arbitrary $f$-DP, we only run our simulations for $(\ep,0)$-DP as this privacy definition is commonly used and introduces noise that is difficult to incorporate.

In Section \ref{s:power}, we consider the empirical power of the tests, and show that the inversion DP test out-performs other DP tests, and by comparing against the semi-private test with privacy budget $\ep/\sqrt 2$, show that it is observed to be more powerful than any $(\ep/\sqrt 2)$-DP test {(see Remark \ref{rem:sqrt2} for the intuition behind the factor of $1/\sqrt 2$)}. In Section \ref{s:typeI}, we consider the type I error of the various tests, and show that the observed type I error of the inversion test is more accurate than the commonly used non-private normal approximation test. We also show that naive DP normal approximation tests have unacceptably inaccurate empirical type I errors. In Section \ref{s:pValues}, we plot the empirical cumulative distribution functions (cdf) of the $p$-values from the various tests demonstrating from another perspective that the proposed test has accurate type I error. 

\subsubsection{Type I Error}\label{s:typeI}

The first simulation that we will consider, and one of the most important, demonstrates the reliability of the type I error guarantees of our proposed test against alternative tests. Recall that in the best practices of scientific research, many approximate statistical tests are widely used and accepted. For instance, most hypothesis testing tools are based on asymptotic theory which approximates the sampling distribution, such as the central limit theorem. As such, many widely used tests do not have exact type I error guarantees, but the error of these tests has been determined to be small enough for practical purposes. In Section \ref{s:inversion}, our proposed inversion-based test also involves an approximation to the sampling distribution. We demonstrate in the following simulation that the type I errors of this proposed test are more accurate than the widely accepted normal-approximation test.

For the simulation, we measure the empirical Type I error as the null $\theta_0$ takes values in $\{.05,.1,\ldots,.95\}$ and sample sizes are set to $m=n=30$, based on 20,000 replicates for each $\theta_0$ value. We consider two values for the nominal type I error: in the left plot of Figure \ref{fig:typeI} we set $\alpha=.01$ and in the right plot of Figure \ref{fig:typeI} we set $\alpha =.05$. The dotted horizontal lines represent a 95\% Monte Carlo confidence interval assuming that the true type I error is equal to the nominal level. As there are 19 unique theta values, if a curve crosses these thresholds more than once, this is evidence that the type I error is not appropriately calibrated. For this simulation, we only consider approximate tests as the non-private UMPU test and the semi-private test have perfectly calibrated type I errors. 

In red is the classic normal approximation test, described in Section \ref{s:inversion}. Such approximations are often considered accurate enough when the sample sizes $n$ and $m$ are greater than 30. Some rules of thumb for this problem require that there are at least 8 successes and failures in each group for the approximation to be accurate enough \citep[p. 321]{akritas2015probability}. We see in the left plot of Figure \ref{fig:typeI} that while this test has reasonable empirical type I error for moderate values of $\theta_0$, the test is overly conservative for extreme values of $\theta_0$. In the right plot of Figure \ref{fig:typeI}, we see that the normal approximation test is much less reliable in this setting, with seven of the nineteen values outside of the 95\% confidence region. We see that at extreme values of $\theta_0$, the actual type I error rates are much higher than the nominal level, resulting in excessive false positives. It is interesting that the type I errors are over-conservative when $\alpha=.01$ and inflated when $\alpha=.05$. In general, it is hard to predict whether in a particular setting the type I errors will be too high or too low. 

In green is an $\ep$-DP normal approximation test, proposed by \citet{karwa2018correspondence} which is analogous to the one-sample test of \citet{vu2009differential}. See Appendix \ref{s:otherTests} for a description of the method. While the empirical type I errors of this test are acceptable when $\alpha=.05$, we see that for $\alpha=.01$, the empirical type I error is approximately .016 and is entirely outside the confidence region. We conclude that the type I errors for this normal approximation test are unreliable for these settings.

In light blue is an $\ep$-DP, which splits the budget between privatizing $T=Y-X$ and $Z=X+Y$, and plugs in the results into the semi-private test of Theorem \ref{thm:semi}. The test is described in Algorithm \ref{alg:plugin}, which appears in Appendix \ref{s:otherTests}. The empirical type I errors for the plugin test are slightly higher than expected, crossing the confidence band three times in the left plot and once in the right plot, but are much more reliable than either of the normal approximation tests discussed above.

Finally, in magenta is the inversion-based test of Algorithm \ref{alg:inversion}.  The empirical type I errors of the inversion-based test lie entirely within the confidence bands for both settings of $\alpha$. This indicates that for the settings of these simulations, the type I errors of the inversion test are indistinguishable from the nominal level, and are much more accurate than the classic normal approximation test or a DP normal approximation test, such as in \citet{vu2009differential}. 

\begin{figure}
    \centering
    \begin{subfigure}[t]{.48\linewidth}
    \includegraphics[width=\linewidth]{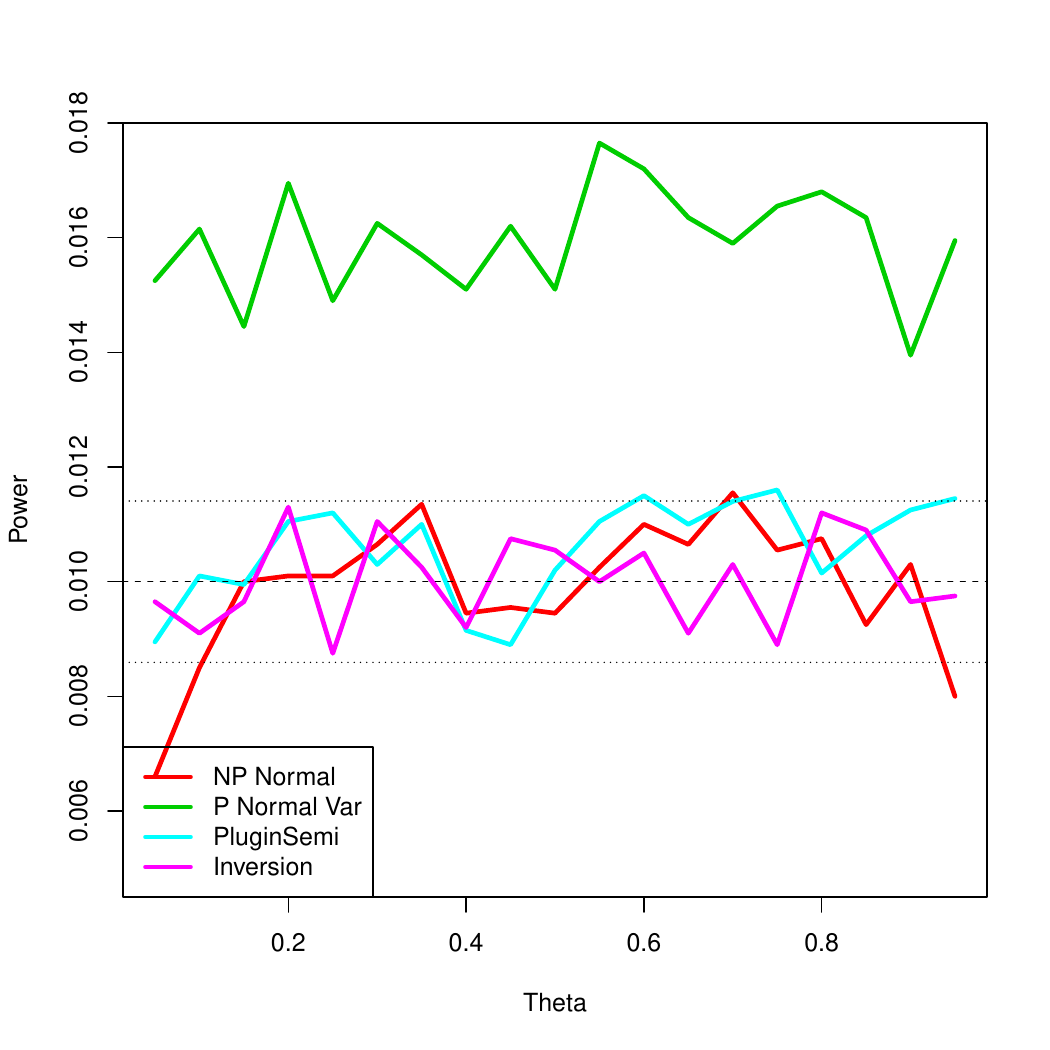}
    \end{subfigure}
		\hspace{.02\linewidth}
		\begin{subfigure}[t]{0.48\linewidth}
    \includegraphics[width=\linewidth]{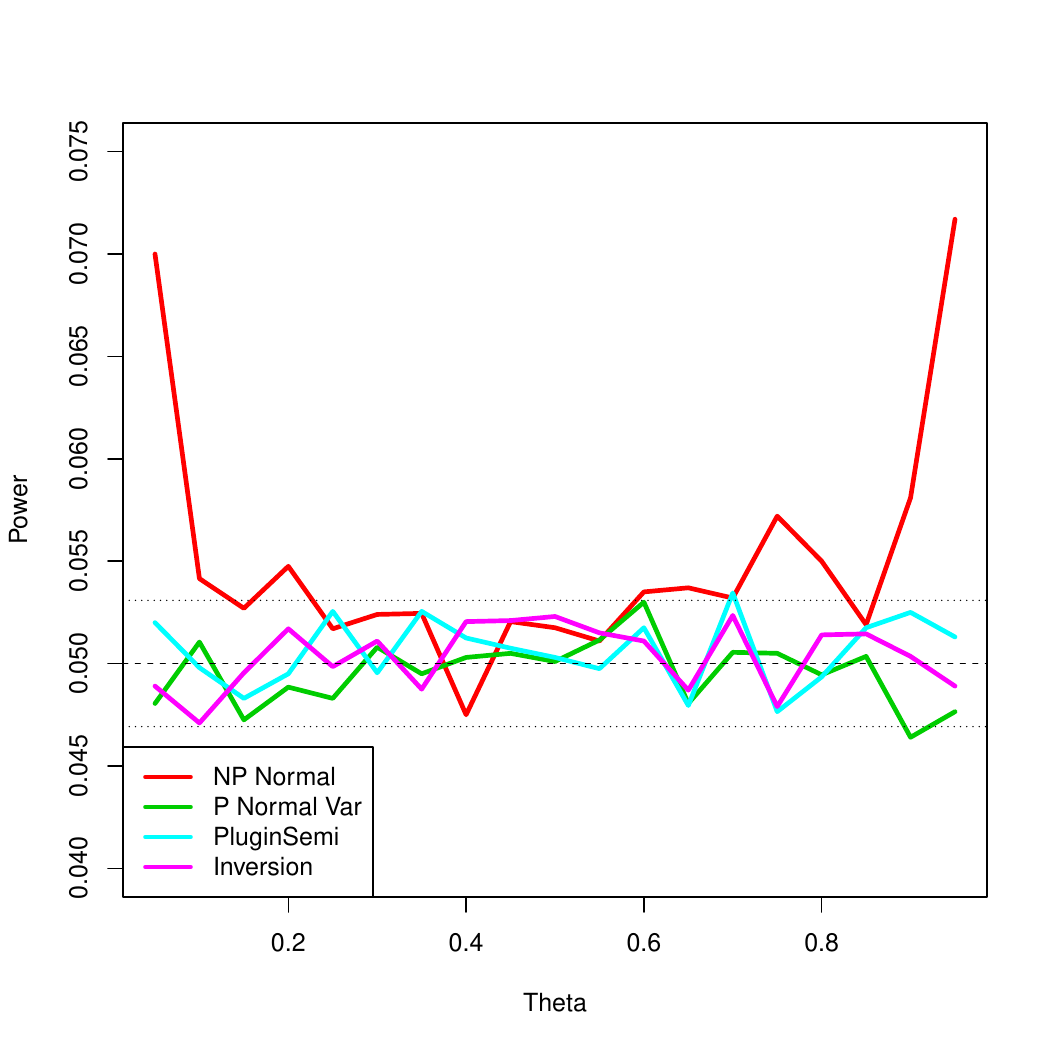}
\end{subfigure}
    \caption{Empirical Type I error as $\theta_0$ varies in $\{.05,.1,\ldots,.95\}$. The nominal $\alpha$ level is $.01$ (left) and $.05$ (right). $m=n=30$, $\ep=.1$, and results are over 20,000 replicates for each $\theta_0$ value.} 
    \label{fig:typeI}
\end{figure}

\subsubsection{P-values}\label{s:pValues}
 In this section, we consider the empirical cumulative distribution function (cdf) of the $p$-values, while holding $\theta_0$ fixed. This can be interpreted as varying the nominal $\alpha$ value on the $x$-axis, with the empirical type I error on the $y$-axis. This differs from the previous simulation, where we varied the null value of $\theta$ along the $x$-axis, but left the nominal value of $\alpha$ fixed. Combined with the previous results, this simulation gives a more complete picture of how accurate the type I errors are, for a spectrum of nominal $\alpha$ values. 

For the simulation, we set $\theta_0=.95$, $n=30$, $m=40$, and $\ep=.1$. We chose to investigate $\theta_0=.95$ since the type I errors in Section \ref{s:typeI} were found to be more inaccurate for extreme values of $\theta_0$. The results are based on 100,000 replicates with these settings. The simulation includes the same tests as in Section \ref{s:typeI}, marked with the same color scheme, as well as a test based on the simulation-based method of \citet{awan2020one}.  Included is a dotted black line of intercept 0 and slope 1, which represents perfectly calibrated type I error rates. 

We see that for these simulation settings, the non-private normal approximation test has inflated type I errors for nominal $\alpha$ values between .02 and .2. The DP normal approximation test has inflated type I error rates for nominal alpha values below .05, and deflated type I error rates for larger values of $\alpha$. The plugin test also has inflated type I errors in this setting, while not as extreme as the normal approximate test. Finally, the curve for the inversion test is visually indistinguishable from the dotted black line, indicating that this tests has well-calibrated type I errors for this simulation setting, much improved over the other approximate tests considered here.

\citet{awan2020one} tackled the same DP testing problem, and also based their test on adding Tulap noise to both $X$ and $Y$. They implement their test using the OASIS algorithm, which they argue gives asymptotically accurate type I errors. We include their test in this section for comparison, and while \citet{awan2020one} advocated this approach in large samples, we see in the left plot of Figure \ref{fig:power} that it has greatly inflated type I errors for the smaller sample sizes considered in this simulation.

\begin{figure}
    \centering
    \begin{subfigure}[t]{0.48\linewidth}
	\includegraphics[width=\linewidth]{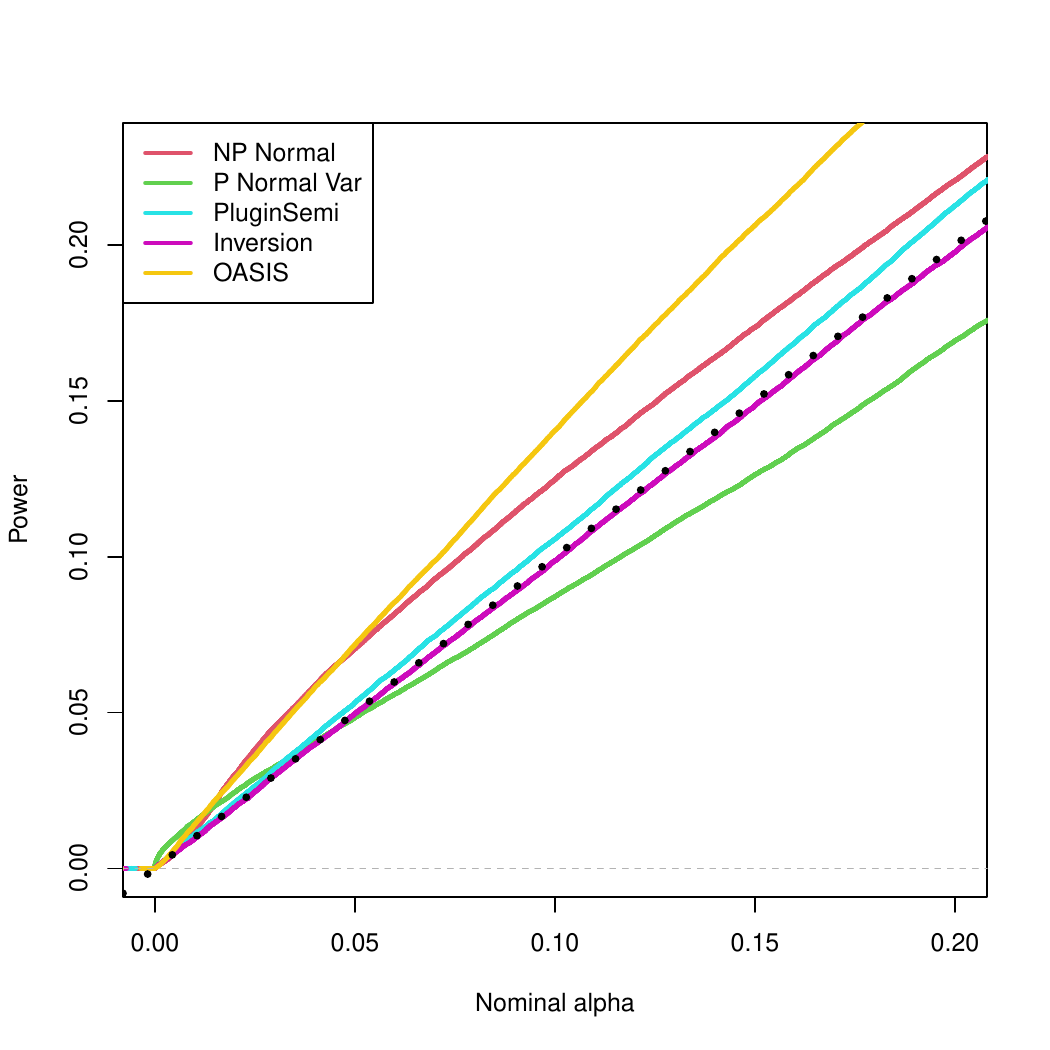}
	\caption{Empirical cdf of the $p$-values. $n=30$, $m=40$, $\ep=.1$, $\theta_0=.95$, and results are based on 100,000 replicates.}
	\end{subfigure}
	\hspace{.02\linewidth}
    \begin{subfigure}[t]{.48\linewidth}
    \includegraphics[width=\linewidth]{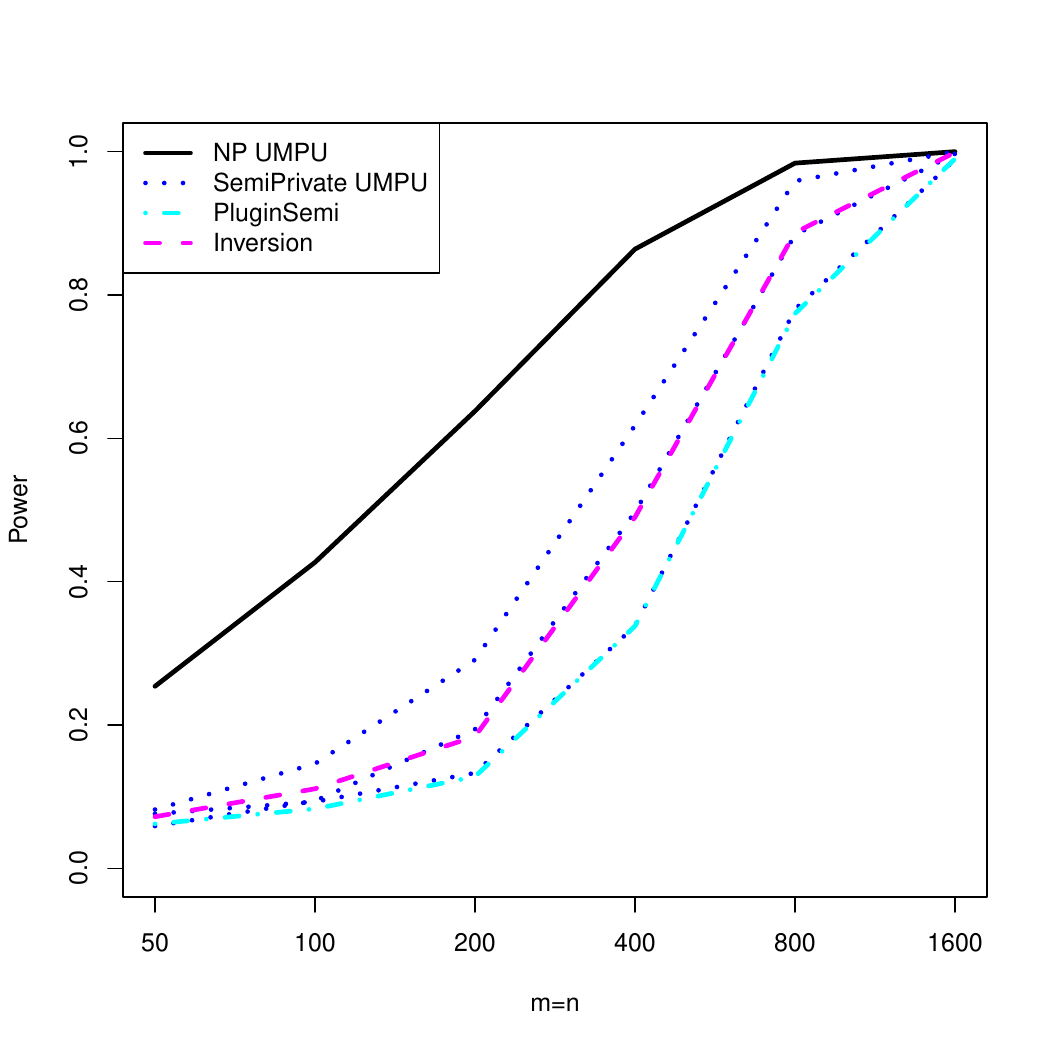}
    \caption{Empirical power at $\theta_X=.5$ and $\theta_Y=.6$, while $m=n$ varies on the $x$-axis. The privacy parameter is $\ep=.1$, and the results are averaged over 1000 replicates for each sample size.}
    \end{subfigure}
    \caption{Simulation results comparing the $p$-values and power of various tests.}
    \label{fig:power}
\end{figure}
\subsubsection{Power}\label{s:power}
Finally, we compare the power of our candidate tests. We use the semi-private UMPU test as a baseline for comparison: recall from Theorem \ref{thm:semi} that the semi-private test has perfectly calibrated type I errors, and is uniformly more powerful than any DP unbiased test. As such, it serves as an upper bound on the power of the other candidate tests. We will see that the inversion test (with $\ep=.1$) has power similar to the semi-private UMPU with $\ep=(.1/\sqrt{2})$, indicating that its power cannot be beaten by the most powerful $(\ep/\sqrt{2})$-DP unbiased test. 

For the simulation, we vary the sample size $n=m$ along the $x$-axis and measure the empirical power on the $y$-axis, at a nominal $\alpha$ level of $.05$. The privacy parameter is set to $\ep=.1$ and the results are based on 1000 replicates for each sample size. In black is the non-private UMPU test, described Appendix \ref{s:umpu}, which is guaranteed to be more powerful than any of the private tests considered in this paper. The dotted dark blue curve is the semi-private UMPU test of Section \ref{s:semiprivate}. Since the semi-private UMPU has a weaker privacy guarantee than DP, this test should also give an upper bound on the power of any DP test. We also include the semi-private test implemented with $\ep=.1/\sqrt 2$ and $\ep=.1/2$, with the same color and line scheme. We see that the plugin test, appearing in light blue, has similar power as the semi-private test with $\ep=.1/ 2$, indicating that this test is more powerful than any $\ep/2$ test. In magenta, we have the inversion-based test, which we see has similar power as the semi-private test with $\ep=.1/\sqrt 2$, indicating that it is more powerful than any $\ep/\sqrt 2$ test.

\begin{remark}\label{rem:sqrt2}
    That the inversion test has comparable power to the semi-private test with $\ep/\sqrt 2$ can be understood as follows: the semi-private test is based on the test statistic $S=Y-X+N$, where $N$ is a Tulap random variable. On the other hand, the inversion test is based on $\twid X=X+N_1$ and $\twid Y=Y+N_2$. If we tried to approximate the test statistic $S$ using  $\twid X$ and $\twid Y$, we end up with $\twid S = Y-X+(N_1-N_2)$. If the same privacy parameters are used for $N$ and $N_1$, $N_2$, then $\var(N_1-N_2)=2\var(N)$. By decreasing the privacy parameter of $N$ to $\ep/\sqrt 2$, we obtain equality of the variances. 
\end{remark}
    \section{Discussion}
    In this paper we proposed the new concept \emph{canonical noise distribution}, which expanded upon previous notions of an optimal noise adding mechanism for privacy. We showed that a CND is a fundamental concept in $f$-DP, connecting it to optimality properties of private hypothesis testing. Using CNDs and the theoretical results on $f$-DP hypothesis tests, we also developed a novel DP test for the difference-of-proportions, which was shown to have accurate type I errors and near optimal power. The introduction of CNDs also raises several questions:
    
     It was noted in Section \ref{s:canonical} that the CND is in general not unique for a given tradeoff function. While the construction in Definition \ref{def:CNDsynthetic} always results in a CND, {and has a simple sampling procedure,} it may not be the most natural CND. For example, when applied to the tradeoff function $G_1$, we see in Figure \ref{fig:cnd} that the CND constructed by Definition \ref{def:CNDsynthetic} has a non-differentiable pdf. On the other hand, $N(0,1)$ is also a CND for $G_1$ which has a smooth pdf. {One may wonder if} there a more natural construction of a CND which recovers $N(0,1)$ in the case of $G_1$, {and similarly, if} there ia a CND for $f_{\ep,\de}$ which has a continuous or smooth pdf. {A recent paper that builds upon the present work, \citet{awan2022logconcave}, partially answers these questions, showing that in some cases it is possible to construct a \emph{log-concave} CND, which recovers $N(0,1)$ in the case of $G_1$; surprisingly, \citet{awan2022logconcave} also show that the Tulap distribution is the \emph{unique} CND for $f_{\ep,0}$, ruling out the possibility of a smooth CND for $f_{\ep,0}$.}
     
     {Another question is whether there is a
     natural and meaningful extension of CNDs to vector-valued statistics. 
     The follow-up paper, \citet{awan2022logconcave}, partially answers this question, giving a definition of a multivariate CND and general constructions under various assumptions. While they show that there exists multivariate CNDs for many general classes of tradeoff functions, including GDP, Laplace-DP, and $(\ep,\de)$-DP, they also prove that there is \emph{no} multivariate CND for $f_{\ep,0}$. 
     
     While this paper focused on the connection between CNDs and private hypothesis tests, it is an open question whether there are other fundamental optimality properties of CNDs. It was also noted in the introduction that additive noise mechanisms often appear as a component of more complex DP mechanisms, and it is worth investigating whether CNDs can be used to optimize these other mechanisms for a particular $f$-DP guarantee.}
     
       
    The applications to DP hypothesis tests also raise many interesting questions. In general, there always exists a most powerful DP test for any composite null and simple alternative, as shown in Proposition \ref{prop:power}, which can be expressed as the solution to a convex optimization problem. However, solving the optimization problem is computationally burdensome for all but the simplest of problems. In Theorem \ref{thm:binary}, we were able to derive closed-form expressions for the most powerful DP tests. Do there exist closed-form expressions for other UMP DP tests to avoid computational optimization?
        
         We also introduced the semi-private framework which allowed us to derive an upper bound on the power of any unbiased $f$-DP test. Can this framework be applied to other DP testing problems to derive similar bounds? We also remarked that the semi-private framework may be useful to better understand the privacy guarantee of mechansisms where certain statistics are privatized, whereas others are reported exactly, such as by in the 2020 Decennial US Census -- it remains to be seen whether the semi-private framework can give new results or new understanding in these settings. 
        

\section*{Acknowledgements}
{This work was supported in part by Cooperative Agreement CB16ADR0160001 from the U.S. Census Bureau. The first author was also supported in part by NSF Award Numbers SES-1534433, SES-1853209, and SES-2150615, and is very grateful for the hospitality of the Center for Research on Computation and Society at Harvard University, where part of this work was completed. The second author was also supported in part by a Simons Investigator Award.}

\bibliographystyle{imsart-nameyear} 
\bibliography{bibliography.bib}       

\pagebreak
\setcounter{page}{1}
\thispagestyle{empty}
\begin{center}
    \huge Canonical Noise and Private Hypothesis Tests\\
    Supplementary Materials\\
    \large 
    Jordan Awan and Salil Vadhan
\end{center}
\begin{appendix}
\section{Background on Hypothesis Testing}\label{s:testing}

In this section, we review the definitions of randomized hypothesis tests and $p$-values. 

\begin{defn}
  [Hypothesis Test]\label{HT} 
  Let $X\in \mscr X$ be distributed $X\sim P_\ta$, where $\ta\in \Ta$. Let $\Ta_0, \Ta_1$ be a partition of $\Ta$. A \emph{(randomized) test} of $H_0: \ta \in \Ta_0$ versus $H_1: \ta\in \Ta_1$ is a measurable function $\phi: \mscr X \rightarrow [0,1]$. We call $H_0: \theta\in \Theta_0$ the \emph{null hypothesis} and $H_1: \theta\in \Theta_1$ the \emph{alternative hypothesis}. We interpret the test $\phi(x)$ as the probability of rejecting the null hypothesis after observing $x\in \mscr X$. We say a test $\phi$ is at \emph{level} $\al$ if $\sup_{\ta\in \Ta_0} \EE_{P_\ta} \phi \leq \al$, and at \emph{size} $\al$ if $\sup_{\ta\in \Ta_0} \EE_{P_\ta} \phi = \al$. The size is also called the \emph{type I error} and represents the probability of mistakenly rejecting the null hypothesis. The \emph{power} of $\phi$ at $\ta$ is denoted $\beta_\phi(\ta) = \EE_{P_\ta}\phi$, which is the probability of rejecting when the true parameter is $\theta$. A test $\phi$ is \emph{unbiased} if $\beta_\phi(\ta_0)\leq \beta_\phi(\ta_1)$ for all $\ta_0\in \Ta_0$ and $\ta_1\in \Ta_1$; that is, the power is always higher at any alternative than at any null value.

Let $\Phi$ be a set of tests for $H_0: \ta \in \Ta_0$ versus $H_1:\ta\in \Ta_1$. We say that $\phi^*\in \Phi$ is the \emph{uniformly most powerful} (UMP) test among $\Phi$ at level $\al$ if it is level $\alpha$ and for any other level $\alpha$ test $\phi \in \Phi$, we have 
$\beta_{\phi^*}(\ta) \geq \beta_{\phi}(\ta)$, for all $\ta\in \Ta_1$. If  $\Theta_1$ has cardinality one, we simply say that $\phi$ is the \emph{most powerful test}.
\end{defn}

Classically, randomized tests appear in the Neyman-Pearson Lemma and their role in that setting is to allow a test to achieve a specified size. However, for privacy, we require that all of our tests are randomized, and use the randomness to achieve differential privacy. 

Usually, rather than a binary accept/reject decision from a randomized test, it is preferable to report a $p$-value, which gives a continuous measure of how much evidence there is for the alternative hypothesis over the null. Smaller values of $p$ give more evidence for the alternative. 

\begin{defn}[$p$-Value]
   Let $X\in \mscr X$ be distributed $X \sim P_\ta$, where $\ta\in \Ta$. Let $\Ta_0, \Ta_1$ be a partition of $\Ta$.
   Let $p$ be a random variable, taking values in $[0,1]$. Define $p(X)\defeq p|X$ to be the random variable $p$ conditioned on $X$. We say that $p$ is a \emph{$p$-value} for the test $H_0: \ta\in \Ta_0$ versus $H_1: \ta\in \Ta_1$ if
   \[\sup_{\ta_0\in \Ta_0} P_{\ta_0} (p(X)\leq \al) \leq \al,\]
   where the probability is over both $p$ and $X$. In other words, for every $\ta\in \Ta_0$, the distribution of $p(X)$ stochastically dominates $U(0,1)$.
 \end{defn}
 A $p$-value represents the probability of observing data as extreme or more extreme as the present sample, when the null hypothesis is true. Often the measure of ``extreme'' is based on a specific test statistic. A small $p$-value offers evidence that the present sample is unlikely to have been generated by the null model. 
 
Given a $p$-value $p(X)$, $\phi(X) = P(p(X)<\al\mid X)$ is a test for the same hypothesis, at level $\al$. For each $\al$, let $\phi_\al: \mscr X\rightarrow [0,1]$ be a test at level $\al$.  Let $U \sim U[0,1]$. Then $p(X) = \inf \{\al \mid \phi_\al\geq U\}$ is a $p$-value for the same test. See \citet{Geyer2005} for a deeper understanding of randomized tests, $p$-values, and confidence sets in terms of fuzzy set theory.

\section{Most powerful \texorpdfstring{$f$}{TEXT}-DP test as convex optimization}\label{s:convex}
In this section, we show that for an arbitrary null hypothesis, and a simple alternative hypothesis, there exists a most powerful $\alpha$-level $f$-DP test, which can be expressed as the solution to a convex optimization problem. This result is an extension of \citet[Remark 3.1]{awan2018differentially}, which showed that in the case of $(\ep,\de)$-DP the most powerful test is the solution to a linear program.

\begin{prop}\label{prop:power}
Let $\Theta$ be a set of parameters, and $\{P_\theta \mid \theta\in \Theta\}$ be a set of distributions on $\mscr X^n$. Let $\Theta_0\subset \Theta$ and $\theta_1\in \Theta\setminus \Theta_0$. Then a most powerful $\alpha$-level $f$-DP for $H_0: \theta\in \Theta_0$ versus $H_1: \theta=\theta_1$ is the solution to a convex optimization problem.
\end{prop}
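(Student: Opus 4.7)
The plan is to formulate the search for the most powerful $f$-DP test as an explicit optimization problem over the space of test functions $\phi:\mscr X^n\to[0,1]$, and to verify that both the objective and all constraints are convex. The objective is the power at the alternative, $\EE_{P_{\theta_1}}\phi$, which is linear in $\phi$ and therefore convex. The feasibility constraints have three kinds: the pointwise box constraints $0\le \phi(x)\le 1$; the level constraints $\EE_{P_\theta}\phi \le \alpha$ for each $\theta\in\Theta_0$; and the $f$-DP constraints.

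For the $f$-DP constraints, I will invoke Lemma~\ref{lem:DPtest}, which recasts $f$-DP for a test as the family of pointwise inequalities $\phi(x)+f(\phi(x'))\le 1$ for every pair $x,x'\in\mscr X^n$ with $H(x,x')\le 1$. Because $f$ is a tradeoff function it is convex, so for each such pair the map $(\phi(x),\phi(x'))\mapsto \phi(x)+f(\phi(x'))$ is convex, and its sublevel set at $1$ is convex. The intersection of all these sublevel sets with the (convex) box and the (linear, hence convex) level constraints is again convex. Thus the feasible region is a convex subset of $[0,1]^{\mscr X^n}$, the objective is linear, and the problem is a convex optimization problem — this is the content of the proposition.

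The main thing to be careful about is just writing down the problem cleanly and invoking the right earlier result; existence of an optimizer is not asserted by the statement, but in the finite sample-space case it follows from compactness of $[0,1]^{\mscr X^n}$ together with closedness of the feasible set (since $f$ is continuous by \citet[Proposition 2.2]{dong2021gaussian}) and continuity of the linear objective. The only mildly subtle point is that the constraint $\phi(x)+f(\phi(x'))\le 1$ is not linear when $f$ is not affine (so in particular not an LP in general, unlike the $(\ep,\de)$-DP case of \citet[Remark 3.1]{awan2018differentially} where $f_{\ep,\de}$ is piecewise linear); however, convexity of $f$ is exactly what rescues convexity of the overall problem, and this is the step worth emphasizing.

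Concretely, the proof will be one short display writing the program
\[
\max_{\phi:\mscr X^n\to[0,1]}\ \EE_{P_{\theta_1}}\phi\quad\text{s.t.}\quad \EE_{P_\theta}\phi\le\alpha\ \forall\theta\in\Theta_0,\ \phi(x)+f(\phi(x'))\le 1\ \forall H(x,x')\le 1,
\]
followed by a one-line verification that the objective is linear, that the level and box constraints are linear, and that convexity of $f$ makes each privacy constraint a convex inequality, so the feasible set is convex and the program is convex. No deeper tools are needed.
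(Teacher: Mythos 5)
Your proposal is correct and takes essentially the same route as the paper: the paper's proof likewise writes the $f$-DP constraint in the form $0\geq \phi(x')-1+f(\phi(x))$ (i.e.\ $\phi(x')+f(\phi(x))\leq 1$, via Lemma \ref{lem:DPtest}), notes it is convex because $f$ is convex, observes that the type I error constraints are linear and the power $\EE_{P_{\theta_1}}\phi$ is a linear objective, and concludes the problem is a convex program. The only small difference is that the paper also remarks the feasible set is nonempty (the constant tests $\phi\equiv c$, $c\in[0,\alpha]$, are feasible), whereas you instead discuss existence of an optimizer via compactness; either remark is a harmless addendum to the same argument.
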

\begin{proof}
First, note that the $f$-DP constraint on tests: $0\geq \phi(x')-1+f(\phi(x))$ is a convex constraint, since $f$ is convex. Furthermore, the type I error constraints $\EE_{P_{\theta_0}} \phi(x) \leq \alpha$ are linear and hence convex. The intersection of the privacy constraints and the type I error constraints is thus a convex set. This set is non-empty as the constant test $\phi(x)=c$ lies inside the set for all $c\in [0,\alpha]$. Finally, the power $\EE_{P_{\theta_1}}\phi(x)$ is a linear objective.
\end{proof}

\section{difference-of-proportions Non-private UMPU}\label{s:umpu}

Suppose we observe $X_i \iid \mathrm{Bern}(\theta_X)$ for $i=1,\ldots, n$ and $Y_j \iid \mathrm{Bern}(\theta_Y)$ for $j=1,\ldots, m$, and we wish to test $H_0: \theta_X\leq\theta_Y$ versus $H_1: \theta_X<\theta_Y$. Denote $X = \sum_{i=1}^n X_i$ and $Y = \sum_{j=1}^m Y_j$.
The joint distribution of $(X_i,Y_j)_{i,j}$ is
\begin{align*}
  f_{\ta_X,\ta_Y}(\ul x,\ul y)&= \prod_{i=1}^n \ta_X^{x_i}(1-\ta_X)^{1-x_i}\prod_{j=1}^m \ta_Y^{y_j}(1-\ta_Y)^{1-y_j}\\
                              &=\ta_X^{\sum_{i=1}^n x}(1-\ta_X)^{n-\sum_{i=1}^n x} \ta_Y^{\sum_{j=1}^m y} (1-\ta_Y)^{m-\sum_{j=1}^m y}\\
  &= (1-\ta_X)^n(1-\ta_Y)^{m} \exp\l(\sum_{i=1}^n x_i \log\l(\frac{\ta_X}{1-\ta_X}\r) + \sum_{j=1}^m y_j \log \l( \frac{\ta_Y}{1-\ta_Y}\r)\r)
\end{align*}

By relabeling $\eta_0=\log\left(\frac{\ta_X}{1-\ta_X}\right)$ and $\eta_1$ such that $\eta_1=\log\left(\frac{\ta_Y}{1-\ta_Y}\right)-\eta_0$, and setting $x = \sum_{i=1}^n x_i$ and $y = \sum_{j=1}^m y_i$, we can write 
\[f_{\ta_x,\ta_y}(\ul x, \ul y) = (1-\ta_X)^n(1-\ta_Y)^{m} \exp\l((x+y)\eta_0+y\eta_1\r),
\]
and from this expression we see that $\eta_0$ and $\eta_1$ are natural exponential family parameters for the sufficient statistics $\{(X+Y),Y\}$. We can also re-express our test $H_0: \theta_X\geq \theta_Y$ versus $H_1: \theta_X<\theta_Y$ as $H_0: \eta_1\leq 0$ versus $H_1: \eta_1>0$. This now fits the assumptions of \citet[Theorem 4.124]{schervish2012theory}. Since $Y\mid X+Y=z$ has a monotone likelihood ratio in $\eta_1$, we know that the UMP unbiased test for the above hypothesis is of the form
\[\phi(X,Y) = \begin{cases}
    0&Y<c\\
    a&Y=c\\
    1&Y>c
  \end{cases}\]
where $c$ and $a$ depend on the value of $X+Y=z$, and are chosen such that $\EE_{\ta_X=\ta_Y}(\phi\mid X+Y=z)=\al$.

In fact there is a more convenient formulation of this test, which gives exact $p$-values. First note that $\phi(X,Y)$ can be written in the form $\phi(X,Y) =F_U(Y-c')$, where $F_U(\cdot)$ is the cdf of $U\sim \mathrm{Unif}(-1/2,1/2)$, and $c'$ is a real number, which depends on $z=X+Y$. Then we can write
\begin{align*}
  \phi(X,Y)&= F_U(Y-c')\\
           &=P(U\leq Y-c'(z)\mid X,Y)\\
  &= P(c'\leq Y+U\mid X,Y)
\end{align*}
From the last equality, we see that the UMPU test depends on the (random) test statistic $T=Y+U$, and on the value $z=X+Y$. The $p$-value corresponding to $T$ is
\begin{align*}
  p&=P(Y+U\geq T\mid T,X+Y=z)\\
   &=P(Y-T\geq U\mid T,X+Y=z)\\
   &= \EE[F_U(Y-T)\mid T,X+Y=z]
\end{align*}
This $p$-value can be computed fairly efficiently, since the expected value is over the $m$ hypergeometric values of $Y$ given $z$. Lastly, to check that this $p$-value agrees with the UMPU, we want to show that $P(p(T,z)\leq \al\mid T,X+Y=z)=\phi(X,Y)$. To this end,
\begin{align*}
  P(p(T,z)\leq \al\mid T,X+Y=z)&= P(1-F_{Y+U\mid z}(T)\leq \al\mid T,z)\\
                               &=P(1-\al \leq F_{T\mid z}(T)\mid T,z)\\
                               &=P(F^{-1}_{T\mid z}(1-\al)\leq T\mid T,z)\\
                               &=P(F^{-1}_{T\mid z}(1-\al) \leq Y+U\mid Y,z)\\
                               &=P(c\leq Y+U\mid Y,z)\\
                               &= \phi(X,Y)
\end{align*}
where $c=F^{-1}_{T\mid z}(1-\al)$ is a constant, which only depends on $z$.

\section{Non-Existence of UMPU in difference-of-proportions}\label{s:noUMPU}

In this section, we give a simple example demonstrating that there is no UMP unbiased $f$-DP test for the problem of Section \ref{s:application}. In particular, we work with $(\ep,0)$-DP.

Suppose that $m=1$ and $n=2$. Then $Y\sim \mathrm{Binom}(1,\thy)$ and $X\sim \mathrm{Binom}(2,\thx)$. we consider unbiased tests which satisfy $(1,0)$-DP, at level $.05$. Equation \eqref{eq:hyper} imposes the following constraints on a test $\phi(x,y)$:
\begin{equation}\label{eq:neyman}
\begin{split}
    \phi(2,1)&=.05\\
    \phi(0,0)&=.05\\
    (1/3)\phi(0,1) + (2/3)\phi(1,0)&=.05\\
    (2/3)\phi(1,1) + (1/3)\phi(2,0)&=.05
    \end{split}
\end{equation}

1) Suppose that $\thx=0$ and $\thy=1$. The following test maximizes the power in this case
\[\left(\begin{array}{c|ccc}
y=0&.05&.05e^{-1}&.05\\
y=1&.15 - .1e^{-1}&.05&.05\\\hline
x:&0&1&2
\end{array}\right)
\approx \left(\begin{array}{c|ccc}
y=0&.05&.0184&.05\\
y=1&.1132&.05&.05\\\hline
x:&0&1&2
\end{array}\right)\]
We can see that maximizing the power is equivalent to maximizing the value of $\phi(0,1)$, as $P(X=0,Y=1)=1$. Increasing $\phi(0,1)$ any further, would require decreasing $\phi(1,0)$. But for privacy we require $\phi(1,0)\geq \exp(-1) \phi(0,0)$, which is tight. The other privacy constraints can be easily verified. So, the above test is the most powerful unbiased test for $\thx=0$, $\thy=1$.

2) Suppose that $\thx=1/2$ and $\thy=1$, and consider the following test: 

\[\left(\begin{array}{c|ccc}
y=0&.05&.075e^{-1}-.025e^{-2}&e^{-1}.05\\
y=1&.15(1-e^{-1})+.05e^{-2} &.075-.025e^{-1}&.05\\\hline
x:&0&1&2
\end{array}\right)\]
\[\approx 
\left(\begin{array}{c|ccc}
y=0&.05&.0242&.0184\\
y=1&.1016&.0658&.05\\\hline
x:&0&1&2
\end{array}\right)\]
It can be verified that this test satisfies the constraints of Equation \eqref{eq:neyman} as well as the $\ep$-DP constraints. Note that the power formula for $\thx=1/2$ and $\thy=1$ is
\[.5^2[\phi(0,1) + 2\phi(1,1) + \phi(2,1)],\]
and we see that the test above has higher power compared to the test from part 1). Since the test in part 1) was most powerful unbiased 1-DP test for $\thx=0$ and $\thy=1$, but it is not most powerful unbiased 1-DP test for $\thx=1/2$ and $\thy=1$, we conclude that there is no uniformly most powerful unbiased 1-DP test in this setting. 


\section{Alternative DP tests}\label{s:otherTests}
In this section, we describe the other DP tests that appear in the simulations of Section \ref{s:simulations}. 

In Algorithm \ref{alg:normalDP}, we describe an $\ep$-DP normal approximation test, proposed by \citet{karwa2018correspondence}. This test is analogous to the test of a single population proportion described in \citet{vu2009differential}, where a normal approximation with inflated variance is used to approximate the sampling distribution. This test adds independent Laplace noise to $X$ and $Y$, bases the test statistic on the difference of the estimated proportions, estimates the variance of the test statistic using a plug-in estimate, and then approximates the sampling distribution of the test statistic as normal. In Algorithm \ref{alg:normalDP}, $\Phi$ denotes the cdf of $N(0,1)$.

\begin{algorithm}[H]
\SetAlgoLined
 \KwData{Let $X$, $Y$, $m$, and $n$ be given. Let $\ep>0$ be given.}
 Draw $L_1,L_2 \iid \mathrm{Laplace}(0,1/\ep)$\;
 Set $\twid X=X+L_1$ and $\twid Y = Y+L_2$\;
 Set $\twid \theta = \min\{\max\{\frac{\twid X + \twid Y}{m+n},0\},1\}$\;
 Set $T = \twid Y/m-\twid X/n$\;
 Set $\mathrm{var} = \twid \theta(1-\twid \theta) + \frac{2}{(m\ep)^2} + \frac{2}{(n\ep)^2}$\;
  \KwResult{ $p$-value: $p= 1-\Phi(T/\sqrt{\mathrm{var}})$, $\twid X$, $\twid Y$.}
 \caption{$\ep$-DP Normal approximation $p$-value.}
 \label{alg:normalDP}
\end{algorithm}

Another DP test would be to take the semiprivate UMPU test of Section \ref{s:semiprivate}, and using \emph{composition}, produce both a privatized test statistic as well as a private estimate of the value $Z=X+Y$. Then plugging in the estimate of $Z$ gives a fully $\ep$-DP version of the semiprivate test. The full procedure is described in Algorithm \ref{alg:plugin}. In Algorithm \ref{alg:plugin}, $F$ represents the cdf of the variables $\mathrm{Tulap}(0,\exp(-\ep/2),0)$. 

\begin{algorithm}[H]
\SetAlgoLined
 \KwData{Let $X$, $Y$, $m$, and $n$ be given. Let $\ep>0$ be given.}
 Draw $L_1,L_2 \iid \mathrm{Tulap}(0,\exp(-\ep/2),0)$\;
 Set $\twid T=(Y-X)+L_1$ and $\twid Z=(X+Y)+L_2$\;
 Set $T = \twid Y/m-\twid X/n$\;
  \KwResult{ $p$-value: $p= \EE_{H\sim \mathrm{Hyper}(m,n,\twid Z)}F(2H-\twid Z)$}
 \caption{$\ep$-DP plug-in $p$-value.}
 \label{alg:plugin}
\end{algorithm}

  \section{Proofs}\label{s:proofs}
  {
  The Galois inequalities are a well-known property of cdfs and their quantile functions. We include a short proof for completeness.
  \begin{lemma}[Galois Inequalities]\label{lem:galois}
  Let $F$ be a cdf and $F^{-1}(p) \defeq \inf \{x\mid p\leq F(x)\}$ be its quantile function. Then $F^{-1}(p)\leq x$ if and only if $p\leq F(x)$. 
  \end{lemma}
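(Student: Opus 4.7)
The plan is to prove the two implications separately, using only the definition of $F^{-1}$ as an infimum together with the fact that $F$, being a cdf, is nondecreasing and right-continuous.

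For the easy direction, suppose $p \leq F(x)$. Then $x$ belongs to the set $S_p \defeq \{y \in \RR \mid p \leq F(y)\}$, and since $F^{-1}(p) = \inf S_p$ by definition, we immediately get $F^{-1}(p) \leq x$. This step uses nothing but the definition of the infimum.

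For the reverse direction, suppose $F^{-1}(p) \leq x$. By monotonicity of $F$, it will suffice to show the single inequality $p \leq F(F^{-1}(p))$, since combining this with $F(F^{-1}(p)) \leq F(x)$ yields $p \leq F(x)$. To establish $p \leq F(F^{-1}(p))$, take any sequence $y_n \in S_p$ with $y_n \downarrow F^{-1}(p)$; such a sequence exists because $F^{-1}(p)$ is the infimum of $S_p$. Then $F(y_n) \geq p$ for every $n$, and by right-continuity of the cdf $F$ we have $F(y_n) \to F(F^{-1}(p))$, so passing to the limit gives $F(F^{-1}(p)) \geq p$, as needed.

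The only subtle point, and the one potential obstacle, is the edge case where $S_p$ might be empty (which happens if $p > \sup_x F(x)$) or where $F^{-1}(p) = -\infty$ (if $S_p = \RR$); these cases must be addressed by interpreting $\inf \emptyset = +\infty$ and $F(+\infty) = 1$, $F(-\infty) = 0$ in the natural way, after which both inequalities hold vacuously or by continuity conventions. Apart from this bookkeeping, the proof is entirely mechanical: one implication is the definition of infimum, the other is monotonicity plus right-continuity.
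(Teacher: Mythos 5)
Your proof is correct and follows essentially the same route as the paper: both arguments reduce to the fact that the upper level set $\{t\mid p\leq F(t)\}$ is the closed half-line $[F^{-1}(p),\infty)$, the paper stating this directly and you verifying it by splitting into the two implications. If anything, your version is slightly more careful, since you make explicit the right-continuity of $F$ needed to show $p\leq F(F^{-1}(p))$ (i.e.\ that the infimum is attained), a point the paper attributes to monotonicity alone.
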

  \begin{proof}
  Suppose that $p\leq F(x)$. This holds if and only if $x\in \{t\mid p\leq F(t)\}$. Since $F$ is monotone increasing, $\{t\mid p\leq F(t)\}$ is of the form $[F^{-1}(p),\infty)$. Therefore, $x\in \{t\mid p\leq F(t)\}$ holds if and only if $x\geq \inf \{t\mid p\leq F(t)\}=F^{-1}(p)$.
  \end{proof}
  
  Let $F$ be a cdf. We say that $F$ is \emph{invertible} at $t\in \RR$ if $F^{-1} \circ F(t)=t$. Similarly, we say that a symmetric tradeoff function $f$ is \emph{invertible} at $\alpha\in [0,1]$ if $f^{-1}\circ f(\alpha) = \alpha$ (or equivalently, $f\circ f(\alpha)=\alpha$).

  
  \begin{lemma}\label{lem:invertible}
  Let $f$ be a symmetric tradeoff function and let $F$ be a cdf. Then 
  \begin{enumerate}
      \item $f$ is invertible for all $\alpha\in [0,f(0)]$ and $f\circ f(\alpha)\leq \alpha$ otherwise.
      \item if $F$ is continuous, then $F\circ F^{-1}(p)=p$ for all $p\in [0,1]$,
      \item $F^{-1}\circ F(t)\leq t$. 
  \end{enumerate}
  \end{lemma}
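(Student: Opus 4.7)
The plan is to derive all three parts from appropriate Galois-style inequalities between a function and its generalized inverse, together with the continuity of $F$ and $f$.

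Part 3 is immediate from Lemma \ref{lem:galois}: taking $p = F(t)$ and $x = t$, the inequality $F(t)\leq F(t)$ on the right forces $F^{-1}(F(t)) \leq t$ on the left.

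For part 2, I would obtain the two inequalities separately. Setting $x = F^{-1}(p)$ in the Galois inequality yields $p \leq F(F^{-1}(p))$ immediately. For the reverse direction, for each $\epsilon>0$ the strict inequality $F^{-1}(p) \not\leq F^{-1}(p)-\epsilon$ gives $p > F(F^{-1}(p)-\epsilon)$; continuity of $F$ then lets us pass to the limit $\epsilon \to 0^+$ to conclude $F(F^{-1}(p)) \leq p$.

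For part 1, I first need the analogue of the Galois inequality for the decreasing function $f$: namely $f^{-1}(y)\leq x \iff y\geq f(x)$, which follows from the definition $f^{-1}(y)=\inf\{x\mid y\geq f(x)\}$ together with monotonicity and continuity, which make $\{x\mid y\geq f(x)\}$ a closed interval of the form $[x^*,1]$. Combined with the symmetry $f=f^{-1}$, this upgrades to the symmetric statement $f(x)\leq y \iff f(y)\leq x$. Setting $y=f(\alpha)$ and $x=\alpha$, the right-hand side is trivial, giving $f\circ f(\alpha)\leq\alpha$ for every $\alpha\in[0,1]$; in particular the ``otherwise'' case is handled.

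To upgrade to equality on $[0,f(0)]$, I would argue by contradiction. Suppose $\alpha\in[0,f(0)]$ and $\beta\defeq f(f(\alpha))<\alpha$; write $\gamma=f(\alpha)$. The decreasing Galois inequality together with continuity force $f(\beta)=\gamma$, so $f$ is constant at value $\gamma$ on the interval $[\beta,\alpha]$. The main obstacle is to extract a contradiction from this plateau; I expect to do so by showing that the plateau creates a jump discontinuity in $f^{-1}$ at $\gamma$: for $y$ slightly below $\gamma$, no point of $[\beta,\alpha]$ lies in $\{x\mid f(x)\leq y\}$, so $f^{-1}(y)>\alpha$, whereas $f^{-1}(\gamma)=\beta$. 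Since $f$ is continuous, this discontinuity of $f^{-1}$ contradicts $f=f^{-1}$. Finally, the boundary cases $\alpha=0$ and $\alpha=f(0)$ can be verified directly using the identity $f(0)=f^{-1}(0)=\inf\{x\mid f(x)=0\}$ and the continuity of $f$.
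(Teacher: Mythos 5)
Your proposal is correct; parts 2 and 3 are essentially the paper's argument in light disguise, while your part 1 takes a genuinely different route for the key invertibility claim. For part 3 your Galois specialization is literally the paper's observation that $t$ belongs to the infimum-defining set, and for part 2 the paper instead invokes the intermediate value theorem and the closedness of the level set $\{t \mid p = F(t)\}$ to show the infimum is attained, whereas you split into the Galois inequality plus a one-sided limit using continuity -- equivalent uses of continuity (just note that for $p \in \{0,1\}$ both arguments need the convention $F(-\infty)=0$, $F(+\infty)=1$ that the paper states explicitly, since $F^{-1}(0)=-\infty$). The real divergence is the equality $f\circ f(\alpha)=\alpha$ on $[0,f(0)]$: the paper gets this in one line from convexity together with $f(1)=0$, which force $f$ to be strictly decreasing wherever it is positive, so plateaus can only occur at height zero; you instead assume a plateau at height $\gamma=f(\alpha)$ and derive a jump discontinuity of $f^{-1}$ at $\gamma$, contradicting $f=f^{-1}$ with $f$ continuous. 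Your argument is sound -- in particular $\gamma>0$ is automatic in your setup, since $\gamma=0$ would give $\beta=f(0)\geq\alpha$, contradicting $\beta<\alpha$, so approaching $\gamma$ from below is legitimate and the boundary cases $\alpha\in\{0,f(0)\}$ need no separate treatment -- and it is interesting in that it uses only monotonicity, continuity, and symmetry, not convexity, so it would apply to any continuous decreasing $f$ with $f=f^{-1}$; the price is a longer contradiction argument where the paper's convexity observation settles the matter immediately.
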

  \begin{proof}
  \begin{enumerate}
      \item Since, $f^{-1}(\alpha) = \inf \{t \mid f(t)\leq \alpha\}$, we have that $f^{-1}(f(\alpha)) = \inf \{t\mid f(t)\leq f(\alpha)\}$. We notice that $\alpha\in \{t\mid f(t)\leq f(\alpha)\}$, and so we have that $f^{-1}(f(\alpha)) = \inf \{t\mid f(t)\leq f(\alpha)\}\leq \alpha$. 
      
      Next, notice that because $f$ is a tradeoff function, it is convex, decreasing, and $f(1)=0$. This implies that the only possibility for $f(a)=f(b)$ is either $a=b$ or $f(a)=f(b)=0$. Hence, $f$ is invertible on $[0,f^{-1}(0)]=[0,f(0)]$. 
      
      \item Since $F$ is continuous, by the Intermediate Value Theorem, for any $p\in (0,1)$, there exists $x\in \RR$ such that $p=F(x)$. So, we can write $F^{-1}(p) = \inf\{t\mid p=F(t)\}$. Since $F$ is continuous,  $\{t\mid p=F(t)\}$ is a closed set, and so we have that $F^{-1}(p)\in \{t\mid p=F(t)\}$. The result follows:  $F(F^{-1}(p))=p$. If $F^{-1}(0)=-\infty$ or $F^{-1}(1)=\infty$, we can allow $F$ to take as input $-\infty$ and $+\infty$, with $F(-\infty)=0$ and $F(\infty)=1$. Then we get $F(F^{-1}(p))=p$ when $p\in \{0,1\}$ as well. 
      \item Note that $F^{-1}\circ F(t) = \inf \{x\mid F(t) \leq F(x)\}$. We see that $t\in \{x\mid F(t)\leq F(x)\}$, so $F^{-1}\circ F(t) = \inf \{x\mid F(t) \leq F(x)\}\leq t$. 
  \end{enumerate}
  \end{proof}
  
  Lemma \ref{lem:support} is a technical lemma establishing that the patterns of invertibility/non-invertibility of a CND $F$ satisfy a recurrence.
  
  \begin{lemma}\label{lem:support}
  Let $F$ be a CND for some tradeoff function $f$, and call $(-U,U) = [F^{-1}(0),-F^{-1}(0)]$ (which may be $(-\infty,\infty)$). Suppose for contradiction that there exists $-U<a<b<U$ such that $F(a)=F(b)$. Then $F(a+1)=F(b+1)$ and $F(a-1)=F(b-1)$.
  \end{lemma}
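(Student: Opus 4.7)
The plan is to show that both conclusions follow essentially immediately from the recurrence for CNDs established in Lemma \ref{lem:recurrence}, once we verify that the hypothesis $-U < a < b < U$ places $F(a)$ and $F(b)$ in the open interval $(0,1)$ so that both branches of the recurrence apply.

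First I would unpack the notation: by the definition of $U$ in terms of $F^{-1}(0)$, any $x$ strictly between $-U$ and $U$ satisfies $0 < F(x) < 1$. Consequently, the hypothesis $-U < a < b < U$ together with $F(a) = F(b)$ yields $0 < F(a) = F(b) < 1$, so in particular both $F(a) > 0$ and $F(a) < 1$ hold (and likewise for $F(b)$). This single observation is what makes the recurrences in Lemma \ref{lem:recurrence} available at both points and in both directions.

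For the forward conclusion, I would apply the first branch of Lemma \ref{lem:recurrence}: since $F(a) > 0$ and $F(b) > 0$, the lemma gives
\[
F(a+1) = 1 - f(F(a)), \qquad F(b+1) = 1 - f(F(b)),
\]
so equality of the arguments $F(a) = F(b)$ immediately yields $F(a+1) = F(b+1)$. For the backward conclusion, I would symmetrically apply the second branch: since $F(a) < 1$ and $F(b) < 1$, the lemma gives
\[
F(a-1) = f(1 - F(a)), \qquad F(b-1) = f(1 - F(b)),
\]
and equality of $F(a)$ and $F(b)$ again forces $F(a-1) = F(b-1)$.

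There is no real obstacle: the entire argument is carried by Lemma \ref{lem:recurrence}, and the hypothesis $-U < a < b < U$ is tailored precisely so that both branches of that recurrence can be invoked at $a$ and at $b$ simultaneously. The larger payoff, presumably exploited in the contradiction argument this lemma feeds into, is that any ``flat'' stretch of $F$ in the interior of its support must propagate by integer translates in both directions; combined with strict monotonicity or symmetry, this should eventually contradict the fact that $F$ is a nondegenerate cdf.
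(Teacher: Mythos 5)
Your deduction is circular relative to the paper's own development, and the circularity is not merely cosmetic. In the paper, Lemma \ref{lem:support} is proved \emph{before} Lemma \ref{lem:recurrence}, directly from the CND identity $f(\alpha)=F(F^{-1}(1-\alpha)-1)$ (using the Galois inequalities of Lemma \ref{lem:galois} and an invertible subinterval $[c,d]\subset[a,b]$ on which one shows $f$ would have to vanish, forcing the flat stretch to sit against the boundary of the support and contradicting $-U<a$). The full proof of Lemma \ref{lem:recurrence} then \emph{invokes} Lemma \ref{lem:support} precisely to handle the case where $F$ is not invertible at the relevant point. So by deriving Lemma \ref{lem:support} from Lemma \ref{lem:recurrence}, you are assuming (through that citation) the statement you are trying to prove. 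Your preliminary observation that $-U<a<b<U$ forces $0<F(a)=F(b)<1$ is fine, but it is not where the difficulty lies.

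The dependence also cannot be removed by supplying an ``easy'' independent proof of the recurrence. From the CND properties one gets $f(1-F(x))=F\bigl(F^{-1}(F(x))-1\bigr)$, which yields $F(x-1)=f(1-F(x))$ only when $F^{-1}(F(x))=x$, i.e.\ at invertible points of $F$. If $F$ is flat on $[a,b]$ with left endpoint $a$, then for $x\in(a,b]$ this identity only gives $f(1-F(x))=F(a-1)$, and asserting the recurrence $F(x-1)=f(1-F(x))$ at such $x$ is \emph{equivalent} to asserting that $F$ is also flat on $[a-1,x-1]$ --- which is exactly the content of Lemma \ref{lem:support}. In other words, the nontrivial work happens exactly where your proof defers to Lemma \ref{lem:recurrence}. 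To repair the argument you would need to establish the recurrence at non-invertible points without appealing to flat-stretch propagation, which in effect means reproducing the paper's direct proof of Lemma \ref{lem:support} from the tradeoff-function identity.
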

  \begin{proof}
  we use a different parametrization for the proof. Suppose that there exists $-U<a-1<b-1<U$ such that $F(a-1)=F(b-1)$ but that $F(a)<F(b)$. Then there exists $[c,d]\subset [a,b]$ such that $F$ is invertible for all $t\in [c,d]$, since $F$ is a continuous cdf. To see this, let $A = \{t\mid \exists x\neq t \text{ s.t. } F(x)=F(t)\}$, which we identify as the union of disjoint closed sets. Then $\RR\setminus A$ is an open set. Because $F$ is continuous, $\RR\setminus A$ is non-empty, and so there exists the desired interval $[c,d]\subset \RR\setminus A$.

  Call $\alpha_c=1-F(c)$ and $\alpha_d=1-F(d)$. Then $\alpha_c>\alpha_d$. Note that 
  \begin{align*}
      F^{-1}(1-\alpha_c)&= F^{-1}(F(c))=c,\\
      F^{-1}(1-\alpha_d)&= F^{-1}(F(d))=d,
  \end{align*}
  since $F$ is invertible on $[c,d]$. Then $f(\alpha_c)= F(F^{-1}(1-\alpha_c)-1)=F(c-1)$ and $f(\alpha_d)=F(F^{-1}(1-\alpha_d)-1)=F(d-1)$ because $F$ is a CND for $f$, and using the above identities. However, since $F(a-1)=F(b-1)$, and $F$ is monotone, we have that $F(c-1)=F(d-1)$, which implies that $f(\alpha_c)=f(\alpha_d)$. Earlier we noted that $\alpha_c>\alpha_d$; since tradeoff functions are decreasing and convex, the only possibility for $f(\alpha_c)=f(\alpha_d)$ is for $f(\alpha_c)=f(\alpha_d)=0$. Now let $t\in [c,d]$. Just as above, we denote $\alpha_t = 1-F(t)$, which satisfies $t=F^{-1}(1-\alpha_t)$ and $f(\alpha_t)=0$ just like above. 
Consider the following inequalities, where each line is implied by the line above:

\begin{align}
    F(F^{-1}(1-\alpha_t)-1)&=f(\alpha_t)=0\label{eq:support1}\\
    F^{-1}(1-\alpha_t)-1&\leq -U\label{eq:support2}\\
    F^{-1}(1-\alpha_t)&\leq 1-U\label{eq:support3}\\
    (1-\alpha_t)&\leq F(1-U)\label{eq:support4}\\
    1-F(1-U)&\leq \alpha_t=1-F(t)\label{eq:support5}\\
    F(t)&\leq F(1-U)\label{eq:support6}\\
    t&\leq F^{-1} (F(1-U))\leq 1-U,\label{eq:support7}
\end{align}
where \eqref{eq:support1} used our observation that $f(\alpha_c)=f(\alpha_d)=0$, the monotonicity of $f$, and the fact that $F$ is a CND for $f$, \eqref{eq:support2} used the fact that $-U=F^{-1}(0)$ and part 3 of Lemma \ref{lem:invertible}, Lemma \ref{lem:galois} gives \eqref{eq:support4}, we used the fact that $\alpha_t = 1-F(t)$ for \eqref{eq:support5}, and part 3 of Lemma \ref{lem:invertible} for \eqref{eq:support7}. We then have that $t-1\leq -U$, which implies that $a-1\leq c-1\leq t-1\leq -U$, which contradicts the assumption that $-U<a-1$. We conclude that $F(a)=F(b)$. Using the parametrization in the Lemma statement, we have that if there exists $-U<a<b<U$ such that $F(a)=F(b)$, then $F(a+1)=F(b+1)$.

Now suppose that $-U<a<b<U$ such that $F(a)=F(b)$, and we will show that $F(a-1)=F(b-1)$. By symmetry, we have that $F(-b)=F(-a)$ and $-U<-b<-a<U$. By the above work, we have that $F(-b+1)=F(-a+1)$. Applying symmetry again, we have $F(b-1)=F(a-1)$, which establishes the result. 
  \end{proof}
  }
   \lemrecurrence*
   \begin{proof}
   {
  we prove the first recurrence in detail and remark that the second recurrence is obtained by a similar argument. We know that $f(\alpha) = F(F^{-1}(1-\alpha)-1)$ for all $\alpha \in (0,1)$. Assume that $F(x-1)\in (0,1)$. If $F$ is invertible at $1-x$, then plugging in $\alpha=F(x-1)$ gives
  \begin{align*}
      f(F(x-1))&= F(F^{-1}(1-F(x-1))-1)\\
      &=F(F^{-1}(F(1-x))-1)\\
      &=F(-x)\\
      &=1-F(x),
  \end{align*}
  where we used the fact that $F$ is symmetric and that $F$ is invertible at $1-x$. 
  
  If $1-x$ is not an invertible point of $F$, then it lies in the interval $[a,b]\defeq [\inf \{t\mid F(1-x)= F(t)\},\sup\{t\mid F(1-x)=F(t)\}]$. Note that $F$ is constant on this interval, and $F$ is invertible at $a$. By Lemma \ref{lem:support}, we have that $F$ is also constant on the interval $[a-1,b-1]$, which contains $-x$. Then 
  
 \begin{align}
     f(F(x-1))&=f(1-F(1-x))\\
     &=f(1-F(a))\label{eq:recurrence1}\\
     &=F(F^{-1}(F(a))-1)\label{eq:recurrence1.5}\\
     &=F(a-1)\label{eq:recurrence2}\\
     &=F(-x)\label{eq:recurrence3}\\
     &=1-F(x),
 \end{align}
 where for \eqref{eq:recurrence1} we use the fact that $F(1-x)=F(a)$, for \eqref{eq:recurrence1.5} we use the fact that $F$ is a CND for $f$, for \eqref{eq:recurrence2} we use the invertibility of $F$ at $a$, and for \eqref{eq:recurrence3} we use the facts that $-x\in [a-1,b-1]$ and $F$ is constant on $[a-1,b-1]$. 
 
 Finally, for the case that $F(x-1)=1$, we have that $F(x)=1$ since $F$ is increasing, and $1-f(F(x-1))=1-f(1)=1-0=1$ and we see that the recurrence holds for this case as well.
  }
\end{proof}

\begin{lemma}\label{lem:contraction}
Let $f$ be a nontrivial symmetric tradeoff function, and let $c$ be the fixed point of $f$. Then 
\begin{enumerate}
    \item $c\in [0,1/2)$,
    \item iteratively applying $1-f(\cdot)$ to any point in $(c,1]$ approaches 1 in the limit. 
\end{enumerate}
\end{lemma}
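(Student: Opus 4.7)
My plan is to reduce both parts to a single key claim: \emph{if $f(\alpha^*) = 1 - \alpha^*$ for some $\alpha^* \in (0, 1)$, then $f(\alpha) = 1 - \alpha$ for all $\alpha \in [0, 1]$, contradicting the nontriviality of $f$.} To establish this key claim, I would pick any subgradient $m$ of the convex function $f$ at the interior point $\alpha^*$. The subgradient inequality $f(\alpha) \geq (1 - \alpha^*) + m(\alpha - \alpha^*)$ evaluated at $\alpha = 0$, combined with $f(0) \leq 1$, forces $m \geq -1$; evaluated at $\alpha = 1$, combined with $f(1) = 0$ (itself a consequence of $f(\alpha) \leq 1-\alpha$), forces $m \leq -1$. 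Hence $m = -1$, and the subgradient inequality becomes $f(\alpha) \geq 1 - \alpha$ on all of $[0, 1]$, which combined with the defining bound $f(\alpha) \leq 1 - \alpha$ yields $f \equiv 1 - \alpha$.

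For Part 1, the upper bound $c \leq 1/2$ is immediate from $c = f(c) \leq 1 - c$, and $c \geq 0$ since $f$ takes values in $[0, 1]$. To exclude $c = 1/2$, I would observe that $c = 1/2$ would give $f(1/2) = 1/2 = 1 - 1/2$, so the key claim applies with $\alpha^* = 1/2$ and contradicts nontriviality.

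For Part 2, set $g(\alpha) = 1 - f(\alpha)$ and iterate $\alpha_{n+1} = g(\alpha_n)$ from any $\alpha_0 \in (c, 1]$. The inequality $\alpha_{n+1} \geq \alpha_n$ is equivalent to $f(\alpha_n) \leq 1 - \alpha_n$, which is guaranteed, so the sequence is monotone non-decreasing; induction shows it is bounded above by $1$. By monotone convergence and the continuity of $f$, $\alpha_n \to \alpha^*$ for some $\alpha^* \in (c, 1]$, and passing to the limit gives $f(\alpha^*) = 1 - \alpha^*$. If $\alpha^* < 1$, then $\alpha^* \in (c, 1) \subset (0, 1)$ (using $c \geq 0$), and applying the key claim produces a contradiction. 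Therefore $\alpha^* = 1$, completing the proof. I expect the main obstacle to be the key claim itself, whose proof hinges on pinning down the subgradient of $f$ at $\alpha^*$ uniquely to $-1$; the careful use of convexity at both endpoints of $[0, 1]$ is the most delicate step.
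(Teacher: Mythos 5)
Your proof is correct, but it takes a genuinely different route from the paper's. The paper handles part 1 by asserting that $f(1/2)=1/2$ together with symmetry and convexity forces $f(\alpha)=1-\alpha$ (without spelling out the details), and handles part 2 by claiming $1-f(\cdot)$ has slope $<1$ on $(c,1)$, is therefore a contraction on $(c,1]$ by the mean value inequality, and then invoking the contraction mapping theorem so that the iterates converge to the unique fixed point $1$. You instead isolate a symmetry-free key claim --- any $\alpha^*\in(0,1)$ with $f(\alpha^*)=1-\alpha^*$ forces $f\equiv 1-\alpha$, proved by pinning every subgradient of $f$ at $\alpha^*$ to exactly $-1$ using $f(0)\leq 1$ and $f(1)=0$ --- and then obtain part 2 by a monotone-convergence argument: the iterates are non-decreasing (since $f(\alpha)\leq 1-\alpha$) and bounded by $1$, their limit $\alpha^*>c\geq 0$ satisfies $f(\alpha^*)=1-\alpha^*$ by continuity, and the key claim rules out $\alpha^*<1$. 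Your key claim is in effect a complete proof of the step the paper leaves implicit, and it drops the symmetry hypothesis. Moreover, replacing the contraction argument by monotonicity is a real robustness gain: for the symmetric nontrivial tradeoff function $f(\alpha)=\max\{0,1-\delta-\alpha\}$ (the $(0,\delta)$-DP tradeoff), the map $1-f(\cdot)$ has slope exactly $1$ on $(c,1-\delta)$, so it is not a contraction on $(c,1]$ and the paper's argument as literally stated does not apply, whereas your monotone argument goes through unchanged; what the contraction route buys, when it does apply, is brevity via standard theorems and a geometric rate of convergence, which your argument does not provide (and the lemma does not need).
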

\begin{proof}
{Since $f(\alpha)<1-\alpha$ for some $\alpha$, it follows that $c<1/2$: Suppose to the contrary that $f(1/2)=1/2$: by symmetry and convexity, $f(\alpha)=1-\alpha$ for all $\alpha$.
          
          Then the function $1-f(\cdot)$ is concave, increasing, and has slope $< 1$ on the set $(c,1)$. By the mean value inequality \citep[Proposition 1.3, Chapter 6.1]{shifrin2005multivariable}, we have that $1-f(\cdot)$ is a contraction map on $(c,1]$. By the contraction mapping theorem \citep[Theorem 1.2, Chapter 6.1]{shifrin2005multivariable}, $1-f(\cdot)$ has a unique fixed point on $(c,1]$.  By definition of $f$ as a tradeoff function, we know that $1-f(1)=1$, so the value $1$ must be the unique fixed point. The contraction mapping theorem also tells us that iteratively applying $1-f(\cdot)$ to any point in $(c,1]$ approaches the fixed point 1 in the limit.}
\end{proof}

\proptradeoff*
\begin{proof}
  \begin{enumerate}
      \item First we will show that $F$ is a continuous cdf, which represents a symmetric random variable. We need to verify the following properties:  
      \begin{enumerate}
      \item $F(x)$ takes values in $[0,1]$: 
      
      First note that $c\in [0,1]$, since $f:[0,1]\rightarrow [0,1]$. Then $F(x) \in [0,1]$ for $x\in [-1/2,1/2]$. Finally, as $f$ takes values in $[0,1]$, by the recurrence relation {of Definition \ref{def:CNDsynthetic}}, we have that $F(x)\in [0,1]$ for all $x\in \RR$.
      \item $F(x) = 1-F(-x)$: 
      
      For values $x\in [-1/2,1/2]$, it is easy to verify that $F(x)=1-F(-x)$. Now assume that the relation $F(x) = 1-F(-x)$ holds on an interval $[-a,a]$ for some $a\geq 1/2$. Let $x\in [a,a+1]$. Then $F(x) = 1-f(F(x-1))=1-f(1-F(-x+1))=1-F(-x)$. By a symmetric argument, we have that the relation now holds on $[-a-1,a+1]$. By induction, we conclude that $F(x)=1-F(-x)$ on $\RR$. 
      \item $F(x)$ is continuous: 
      
      First note that $F(x)$ is continuous on $(-1/2,1/2)$. Next, as $f$ is convex, it is continuous on $(0,1)$. So, we have that $F(x)$ is continuous everywhere except potentially at half-integer values. We can verify that $F$ is continuous at 1/2: $\lim_{x\uparrow 1/2} F(x) = F(1/2)=1-c$, where as $\lim_{x\downarrow 1/2} F(x) = \lim_{x\downarrow 1/2} 1-f(F(x-1))=1-f(c)=1-c$, where we used the fact that $f$ is continuous on $(0,1)$. Now, assume that $F$ is continuous at a half integer value $x\geq 1/2$. Then $\lim_{y\rightarrow x+1} F(y) = \lim_{y\rightarrow x+1} 1-f(F(y-1))=1-f(F(x))=F(x+1)$. By induction, continuity holds on $[0,\infty)$. By symmetry (b), we have continuity on $\RR$. 
      \item $F'(x)$ is defined almost everywhere, and $F(x)$ is increasing: 
      
      Note that $F(x)$ is differentiable and $F'(x)>0$ on $(-1/2,1/2)$. As $f$ is convex it is  differentiable almost everywhere on $(0,1)$. Applying the recurrence relation, we have that $F(x)$ is differentiable a.e. on  $(n-1/2,n+1/2)$ for all $n\in \ZZ$. We conclude that $F(x)$ is differentiable a.e., as a countable union of measure zero sets has measure zero. As $F(x)$ is continuous, it suffices to verify that $F'(x)\geq 0$ almost everywhere. Let $x\geq 1/2$ such that $x\in \RR\setminus(\ZZ+1/2)$ and both $F'(x-1)$ and $f'(F(x-1))$ are defined. Then $\frac d{dx} F(x) = \frac d{dx} (1-f(F(x-1)))=-f'(F(x-1)) F'(x-1)$. For induction, we assume that $F'(x-1)\geq 0$. As $f'(y)\leq 0$ for all $y \in (0,1)$ where $f'$ is defined, we have that $F'(x)\geq 0$. By symmetry and induction, we have that $F'(x)\geq 0$ almost everywhere. Thus, $F$ is increasing. 
      \item $\lim_{x\rightarrow \infty} F(x)=1$ and $\lim_{x\rightarrow -\infty} F(x)=0$: 

      By symmetry, it suffices to show that $\lim_{x\rightarrow \infty} F(x)=1$. {By Lemma \ref{lem:contraction}, we have that $[1-c,1]\subset (c,1]$, and that iteratively applying $1-f(\cdot)$ to any point in $[1-c,1]$ approaches the fixed point 1 in the limit. 
      }

          
          
          Consider the sequence $Y_n$, where $Y_1=1-c$, and $Y_n = 1-f(Y_{n-1})$. Note that $Y_n = F(n-1/2)$ for $n \in \ZZ^+$. Then as $F$ is bounded and increasing, $\lim_{x\rightarrow \infty} F(x) = \lim_{n\rightarrow \infty} Y_n=1$, where in the last equality, we used the fact that $Y_n$ is constructed by iteratively applying $1-f(\cdot)$ and applying {Lemma \ref{lem:contraction}}. 
  \end{enumerate}
      \item First we will check that $1-f(F(-1/2)) = 1-f(c)=1-c=F(1/2)$. Let $x\in \RR$ such that $F(x-1)>0$. If $x\geq 1/2$ then we have $F(x)=1-f(F(x-1))$ by construction. If $x<1/2$, then $x-1<-1/2$. So, $F(x-1)=f(1-F(x))$ {by construction}. {We will apply $f$ to both sides of this last equation. By part 1 of Lemma \ref{lem:invertible}, to justify that $f\circ f(1-F(x))=1-F(x)$, we need to show that $1-F(x)\leq f(0)$. However, if $1-F(x)>f(0)$, then $F(x-1)=f(1-F(x))=0$, which contradicts our earlier assumption. So, applying $f$, we obtain $f(F(x-1))=1-F(x)$.}
      The other recurrence holds by a similar argument due to the symmetries of $F$ and $f$. 
      \item By symmetry, it suffices to check only that $F'(x)$ is decreasing on $(-1/2,\infty)$. Note that it holds trivially on $(-1/2,1/2)$, since $F$ is linear on this interval. For $x\geq 1/2$ a half integer, $F'$ is decreasing on $(x,x+1)$, since $1-f(\cdot)$ is a concave function.
      
      At $1/2$, we check the two limits $\lim_{y\uparrow 1/2} F'(y) = -c+1-c=1-2c$ and $\lim_{y\downarrow 1/2} F'(y) = \left[\lim_{y\downarrow 1/2}-f'(F(y-1))\left]\right[\lim_{y\downarrow 1/2} F'(y-1)\right]=\lim_{z\downarrow c} -f'(z)(1-2c)\leq 1-2c$, where we use the fact that $-f'(y)\leq 1$ for all $y\geq c$, since $c$ is the point of symmetry. 
      
      Now for induction, suppose that for some half integer $x\geq 1/2$, we have $\lim_{y\downarrow x} F'(y)\leq \lim_{y\uparrow x} F'(y)$. Then
      \begin{align*}
          \lim_{y\uparrow x+1} F'(y)&=\lim_{y\uparrow x} F'(y+1)\\
          &= \lim_{y\uparrow x} -f'(F(y))F'(y)\\
          &=\lim_{y\uparrow x} -f'(F(y)) \lim_{y\uparrow x} F'(y)\\
          &\geq \lim_{y\downarrow x} -f'(F(y)) \lim_{y\downarrow x} F'(y)\\
          &=\lim_{y\downarrow x} -f'(F(y))F'(y)\\
          &=\lim_{y\downarrow x} F'(y+1)\\
          &=\lim_{y\downarrow x+1} F'(y),
      \end{align*}
      where the inequality used the inductive hypothesis as well as the fact that $-f'(F(y))$ is positive and decreasing in $y$. 
      \item We will establish that $F$ is strictly increasing within its support $\{x\mid F(x)\in (0,1)\}$. It is non-decreasing and continuous by property 1. Suppose that $F$ is constant on an interval $(a,b)\subset\RR$. Then $F'(x)=0$ on $(a,b)$. By construction, we know that $F$ is strictly increasing on $(-1/2,1/2)$. By symmetry of $F$, we may assume that $(a,b)\subset (1/2,\infty)$.  However, property 3 states that $F'(x)$ is weakly decreasing on $(1/2,\infty)$. But then $F$ must be constant on $(a,\infty)$. This implies that $(a,b)\not\subset \{x\mid F(x)\in (0,1)\}$.    \qedhere
        \end{enumerate}
  \end{proof}

    Lemma \ref{lem:symTrade} is a technical lemma that is important for the proof of Theorem \ref{thm:canonical}.
  
  \begin{lem}\label{lem:symTrade}
    Let $X\sim F$ be a real-valued continuous random variable which is symmetric about zero. Let $m>0$ be given.     Let $f$ be an arbitrary symmetric tradeoff function. Then 
    \begin{enumerate}
        \item $T(F(\cdot),F(\cdot -m)) = T(F(\cdot -m),F(\cdot))$ or equivalently $T(X,X+m)=T(X+m,X)$,
        \item to verify $f\leq T(F(\cdot),F(\cdot-m))$, it suffices to check that $f(\EE_F\phi(X))\leq 1-\EE_{F(x-m)} \phi(X)$, where $\phi(x)$ is either of the form $I(\frac{F'(x-m)}{F'(x)}>k)$ for $k\geq 1 $ or $I(\frac{F'(x-m)}{F'(x)}\geq k)$ for $k\geq 1$. 
    \end{enumerate}
  \end{lem}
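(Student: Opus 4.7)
\emph{Part 1.} My plan is to exhibit a single involution on $\RR$ that swaps $F$ and $F(\cdot-m)$. Define $\tau(x)=m-x$. Using the symmetry identity $F(y)=1-F(-y)$, a direct CDF computation shows that if $X\sim F$ then $P(m-X\leq t)=1-F(m-t)=F(t-m)$, so $\tau_*F=F(\cdot-m)$; and if $Y\sim F(\cdot-m)$ then $Y=X+m$ for some $X\sim F$, so $\tau(Y)=-X\sim F$. Hence for any test $\phi$, the test $\phi\circ\tau$ satisfies $\EE_F(\phi\circ\tau)=\EE_{F(\cdot-m)}\phi$ and $\EE_{F(\cdot-m)}(\phi\circ\tau)=\EE_F\phi$. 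The set of achievable pairs $(\EE_F\phi,\EE_{F(\cdot-m)}\phi)$ is therefore invariant under swapping its two coordinates, and the two tradeoff functions coincide.

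\emph{Part 2.} Write $g\defeq T(F(\cdot),F(\cdot-m))$ and $L(x)\defeq F'(x-m)/F'(x)$. The standard identity $T(P,Q)^{-1}=T(Q,P)$ together with Part 1 says that $g=g^{-1}$, i.e. $g$ is a symmetric tradeoff function; let $c_g$ denote its fixed point. The plan then has two reduction steps. First, I reduce the inequality $f\leq g$ on $[0,1]$ to the inequality on $[0,c_g]$: if $\alpha>c_g$ and we set $\beta=g(\alpha)<c_g$, then by hypothesis $f(\beta)\leq g(\beta)=\alpha$, and since $f^{-1}$ is decreasing with $f^{-1}(\alpha)=f(\alpha)$ (symmetry of $f$), this rearranges to $g(\alpha)=\beta\geq f(\alpha)$. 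Second, by the Neyman--Pearson lemma the lower envelope $g$ is realized by likelihood ratio tests of the form $I(L(x)>k)$ and $I(L(x)\geq k)$ (randomizations between them cover atoms of $L$), so to check $f\leq g$ on any subset of $\alpha$-values it suffices to verify $f(\EE_F\phi)\leq 1-\EE_{F(\cdot-m)}\phi$ for such $\phi$ with $\EE_F\phi$ in that subset.

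The key remaining computation is that the tests with $k\geq 1$ parametrize precisely the range $\alpha\in[0,c_g]$. Here I use the symmetry $F'(-y)=F'(y)$ (from $F(-y)=1-F(y)$) to derive $L(\tau(x))=1/L(x)$. Combined with Part 1, this gives $\EE_{F(\cdot-m)}\phi_1=\EE_F(\phi_1\circ\tau)=P_F(L<1)$ when $\phi_1(x)=I(L(x)>1)$, while $\EE_F\phi_1=P_F(L>1)$; the two sum to $1-P_F(L=1)$, so the boundary $k=1$ lies on (or, via randomization across the two forms, straddles) the diagonal $\alpha=\beta$, placing it at the fixed point $c_g$. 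For $k>1$, $\EE_F\phi_k=P_F(L>k)$ decreases from $c_g$ down to $0$, sweeping out exactly $[0,c_g]$.

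\emph{Main obstacle.} The delicate point is the handling of possible atoms in the distribution of $L$ under $F$: this is what forces the statement to include both the strict ($>k$) and non-strict ($\geq k$) indicator tests, so that the full set of Neyman--Pearson optimal tests for $\alpha\in[0,c_g]$ is captured without explicit randomization. A secondary technical nuisance is tracking the symmetry-based reduction in the first step with the paper's nonstandard infimum convention for inverses $f^{-1}$, which must be justified even when $f$ is not strictly decreasing on all of $[0,1]$ (using Lemma~\ref{lem:invertible}).
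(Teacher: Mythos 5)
Your proposal is correct and follows essentially the same route as the paper: part 1 via the reflection $x\mapsto m-x$ together with the symmetry of $F$, and part 2 via the Neyman–Pearson reduction to the likelihood-ratio tests $I(L>k)$, $I(L\geq k)$ (with convexity of $f$ handling the randomized segments at atoms of $L$), plus the symmetry of $T(F(\cdot),F(\cdot-m))$ to restrict attention to $k\geq 1$, i.e.\ to the portion of the curve up to the fixed point. Your explicit inverse-function argument for the reduction to $[0,c_g]$ and the identity $L(\tau(x))=1/L(x)$ are minor presentational variants of the paper's "slope $-1$ at the point of symmetry" and change-of-variables computations, and the invertibility caveat you flag is handled exactly as you suggest via Lemma \ref{lem:invertible} and the Galois-type inequality.
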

   \begin{proof}
  \begin{enumerate}
    \item Note that the mapping $g(t) = -t+m$ is a bijection, hence applying it to both entries preserves the tradeoff function: 
    $T(X,X+m)=T(-X+m,-X)=T(X+m,X),$ where the last equality uses the fact that $-X\overset d =X$, as $X$ is symmetric. 
    \item  Since $F$ is a continuous cdf, the derivative $F'$ is defined almost everywhere, and $F'$ is a pdf for $F$. We will denote $F_m(x) = F(x-m)$ and $F_0(x) = F(x)$. 
    Then, the Neyman-Pearson Lemma tells us that the optimal test is $\phi^*(x) = {(1-\alpha)} \phi_{>k}(x)+{\alpha} \phi_{\geq k}(x)$, for some $\alpha\in [0,1]$ and $k\in \RR^{\geq 0}$, where $\phi_{>k}(x) = I\l(\frac{F'_m(x)}{F'_0(x)}>k\r)$ and $\phi_{\geq k}(x)=I\l(\frac{F'_m(x)}{F'_0(x)}\geq k\r)$. The points of the tradeoff function consist of the type I and type II error of the test $\phi^*$.
    {Since both type I and type II error of $\phi^*$ are linear in $\alpha$}, we have that the tradeoff function is linear {on the interval $[\EE_F\phi_{>k},\EE_F\phi_{\geq k}]$}. 
    Since {the tradeoff function $f$ is }a convex function, {it} is upper bounded by secant lines; {so,} it suffices to check that $f(\EE_F\phi(X))\leq 1-\EE_{F(\cdot-m)} \phi(X)$ for $\phi=\phi_{>k}$ or $\phi=\phi_{\geq k}$. 
    
    Next, we argue that we need only consider $\phi_{>k}$ for $k\geq 1$ and $\phi_{\geq k}$ for $k>1$. We know from part 1 that $T(F(x),F(x-m))$ is symmetric. So, we need to verify that these tests fully specify the tradeoff function up to the point of symmetry, or equivalently up until the fixed point of $T(F(x),F(x-m))$. {Call
    \begin{align*}
        p &= P_{F_0}\left(\frac{F'_m(x)}{F'_0(x)}=1\right) \\
        &=\int I\left(\frac{F'_m(x)}{F'_0(x)}=1\right)F'_0(x) \ dx\\
        &=\int I\left(\frac{F'_m(x)}{F'_0(x)}=1\right) F'_m(x)\ dx\\
        &= P_{F_m}\left(\frac{F'_m(x)}{F'_0(x)}=1\right).
    \end{align*} 
    If $p>0$, then It suffices to verify that between the points corresponding to the tests $\phi_{>1}$ and $\phi_{\geq 1}$, the tradeoff function has slope -1. This is sufficient since the point of symmetry of a symmetric tradeoff function has $-1$ as a sub-derivative, and by concavity the derivative is increasing. Then 
    \[\EE_{F_0} \phi_{\geq 1} = \EE_{F_0} \phi_{>1} + P_{F_0}\left(\frac{F'_m(x)}{F'_0(x)}=1\right) =\EE_{F_0}\phi_{>1}+p,\]
    \[1-\EE_{F_m}\phi_{\geq 1} = 1-\EE_{F_m} \phi_{>1}-P_{F_m}\left(\frac{F'_m(x)}{F'_0(x)}=1\right)=1-\EE_{F_m}\phi_{>1}-p.\]
    We see that the slope is $-p/p=-1$.
    
    If $p=0$, then $\EE_F \phi_{>1}=\EE_F \phi_{\geq k}$ and $\EE_{F_m}\phi_{>1}=\EE_{F_m}\phi_{\geq 1}$. We will show that the type I and type II errors are equal for the test $\phi_{\geq 1}$, and hence $\EE_{F} \phi_{\geq 1}$ is the fixed point of $T(F_0,F_m)$. For an indeterminate $x$, call $y=m-x$. Then,
    \begin{align*}
\EE_{F_0}\phi_{\geq 1}&=\int I\left( \frac{F'(x-m)}{F'(x)} \geq 1\right) F'(x)\ dx\\
    &=\int I\left(\frac{F'(-y)}{F'(m-y)}\geq 1\right) F'(m-y)\ dx\\
    &=\int I\left(\frac{F'(y)}{F'(y-m)}\geq 1\right) F'(y-m)\ dx\\
    &=1-\int I\left(\frac{F'(y-m)}{F'(y)}>1\right)\\
    &=1-\EE_{F_m} \phi_{>1}\\
    &=1-\EE_{F_m}\phi_{\geq 1}.
    \end{align*}
    We see that $\EE_{F_0}\phi_{\geq 1}$ is the fixed point of $T(F_0,F_m)$. 
    }
    \end{enumerate}
  \end{proof}

\thmcanonical*
\begin{proof}
  For simplicity of notation, we denote $F\defeq F_f$. We verify the four points of Definition \ref{def:CND}. It is easiest to prove the points in reverse order.
    \begin{enumerate}
    \item[4.] Symmetry was already shown in Proposition \ref{prop:tradeoff}. 
        \item[3.] First, we will show that $\frac{\frac{d}{dx} F(x-1)}{\frac{d}{dx} F(x)}$ is increasing in $x$. Let $x\in \RR$ be a point of differentiability of $F(x)$ and $F(x-1)$ and  be such that $0<F(x)<1$. Then $F'(x)>0$ by property 4 of Proposition \ref{prop:tradeoff}. So,
        \begin{align*}
            \frac{\frac{d}{dx} F(x-1)}{\frac{d}{dx} F(x)} &= \frac{\frac{d}{dx} f(1-F(x))}{\frac{d}{dx}F(x)}\\
            &=f'(1-F(x)) \frac{-F'(x)}{F'(x)}\\
            &=-f'(1-F(x)),
        \end{align*}
        where in the end, we note that $1-F(x)$ is decreasing, and $f'(\alpha)$ is increasing. We see that this quantity is increasing. {By property 4 of Proposition \ref{prop:cnd}, $F'(x)=0$ only if $F(x)=0$ or $F(x)=1$. When either $X\sim F$ or $X\sim F(\cdot-1)$, the probability that either $F(X)=F(X-1)=0$ or $F(X)=F(X-1)=1$ is zero, so we can disregard the case that  $F'(x-1)/F'(x)$ has the form $0/0$. If $x$ satisfies $F(x-1)=0$ and $F(x)\in (0,1)$, then the ratio $F'(x-1)/F'(x)=0$, whereas if $F(x)=1$ and $F(x-1)\in (0,1)$, then the ratio $F'(x-1)/F'(x)=+\infty$. These special cases preserve the increasing nature of the ratio.}

        
        Thus, we know that the optimal rejection set is of the form $(a,\infty)$. The type I error of this test is $\alpha=1-F(a)$, whereas the type II is $F(a-1)$. Then we have that the tradeoff function is $T(F(\cdot),F(\cdot-1))(\alpha) = F(F^{-1}(1-\alpha)-1)$. 
        
        \item [2.] Let $\alpha \in (0,1)$. Then 
        {
        \begin{align*}
            T(F(\cdot),F(\cdot-1))(\alpha) &= F(F^{-1}(1-\alpha)-1)\\
            &=f(1-F(F^{-1}(1-\alpha)))\\
            &=f(1-(1-\alpha))\\
            &=f(\alpha),
        \end{align*} 
        }
        where we used the identity $F(x-1) = f(1-F(x))$, provided that {$F(x)>0$} (property 2 of Proposition \ref{prop:tradeoff}), {as well as property 2 of Lemma \ref{lem:invertible}.}
        
        \item[1.] {We need to show that 
        $T(F(\cdot),F(\cdot-m))\geq T(F(\cdot),F(\cdot-1))$ for all $m\in [0,1]$}. We will denote $F_0=F$, $F_m=F(\cdot-m)$ and $F_1=F(\cdot-1)$. By Lemma \ref{lem:symTrade}, when testing $F_0$ versus $F_m$, it suffices to check rejection regions of the form $S_k = \{x\mid \frac{F'_m(x)}{F'_0(x)}>k\}$ for $k\geq 1$ or $S'_k = \{x\mid \frac{F'_m(x)}{F'_0(x)}\geq k\}$ for $k>1$. First we will check $S_k$ for $k\geq 1$. Since $F'$ is decreasing on $(-1/2,\infty)$ and increasing on $(-\infty,1/2)$, we have that $S_k\subset S_1=\{x\mid F'(x-m)>F'(x)\}\subset (1/2,\infty)$. Furthermore, on $(1/2,\infty)$, we have that $\frac{F'_1(x)}{F_m'(x)}\geq 1$, since $F'$ is decreasing.
        
        Then $\EE_{X\sim F_m} I(X\in S_k)=\int_{S_k} F'_m(x) \ dx\leq \int_{S_k} \frac{F'_1(x)}{F'_m(x)} F'_m(x) \ dx=\int_{S_k} F'_1(x){\ dx}=\EE_{X\sim F_1}I(X\in S_k)$, so that the power under $F_1$ is greater than the power under $F_m$. We can repeat the argument for rejection sets of the form $S_k'$ for $k>1$. By part 2 of Lemma \ref{lem:symTrade}, we have that $T(F(\cdot),F(\cdot-m))\geq T(F(\cdot),F(\cdot-1))$, {establishing} part 1 of Definition \ref{def:CND}.
    \end{enumerate}
  \end{proof}
  
  {In Definition \ref{def:CNDsynthetic} we defined the cdf of the constructed CND. While this expression is very useful for deriving properties of this distribution, the quantile function is important for sampling. In Proposition \ref{prop:samplingCND}, we give a recursive expression for the quantile function of the CND constructed in Definition \ref{def:CNDsynthetic} and show that it can be evaluated in a finite number of steps.}
    
  \begin{restatable}{prop}{propsamplingCND}\label{prop:samplingCND}
  Let $f$ be a symmetric nontrivial tradeoff function and let $F_f$ be as in Definition \ref{def:CNDsynthetic}. Then the quantile function $F_f^{-1}:(0,1)\rightarrow \RR$ for $F_f$ can be expressed as
  \[F_f^{-1}(u) = \begin{cases}
  F_f^{-1}(1-f(u))-1&u<c\\
  \frac{u-1/2}{1-2c}&c\leq u\leq 1-c\\
  F_f^{-1}(f(1-u))+1&u>1-c,
  \end{cases}\]
  where $c$ is the unique fixed point of $f$. {Furthermore, for any $u\in (0,1)$, the expression $Q_f(u)$ takes a finite number of recursive steps to evaluate. Thus,} if $U\sim U(0,1)$, then $F_f^{-1}(U) \sim F_f$. 
  \end{restatable}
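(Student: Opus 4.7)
The plan is to (i) derive the three-case recursive formula by inverting the piecewise definition of $F_f$ together with the recurrence $F_f(x)=1-f(F_f(x-1))$ (and its sibling), (ii) show that the recursion terminates in finitely many steps for every $u\in(0,1)$, and (iii) invoke standard inverse-transform sampling.

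For (i): on $[-1/2,1/2]$, Definition \ref{def:CNDsynthetic} gives $F_f(x)=(1-2c)x+1/2$, which maps this interval bijectively onto $[c,1-c]$, so its inverse is the affine map $(u-1/2)/(1-2c)$ on $[c,1-c]$. For $u<c$, set $x=F_f^{-1}(u)$; by continuity and the strict monotonicity of $F_f$ on its support (Proposition \ref{prop:tradeoff} parts 1 and 4), $F_f(x)=u<c=F_f(-1/2)$ forces $x<-1/2$. The recurrence of Proposition \ref{prop:tradeoff}(2) yields $u=F_f(x)=f(1-F_f(x+1))$. Since $u>0$, the argument $1-F_f(x+1)$ must lie in $[0,f(0)]$ (otherwise $f$ would vanish there and give $u=0$), and on this interval $f$ is invertible with $f\circ f=\mathrm{id}$ by Lemma \ref{lem:invertible}(1) combined with symmetry of $f$. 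Applying $f$ to both sides therefore gives $f(u)=1-F_f(x+1)$, i.e.\ $F_f(x+1)=1-f(u)$, so $x=F_f^{-1}(1-f(u))-1$. The case $u>1-c$ is handled symmetrically using the companion recurrence $F_f(x)=f(1-F_f(x+1))$, leading to $F_f^{-1}(u)=F_f^{-1}(f(1-u))+1$.

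For (ii): fix $u\in(0,1)$. Since $F_f$ is a continuous cdf on $\RR$ whose range contains $(0,1)$ and which is strictly increasing on its support (Proposition \ref{prop:tradeoff} parts 1 and 4), the value $x_0\defeq F_f^{-1}(u)$ is a finite real number with $F_f(x_0)=u$. If $u\in[c,1-c]$ the formula closes in zero recursive steps. Otherwise, the case-analysis above shows that each application of the recursive rule replaces the current argument $u_k$ with $u_{k+1}$ whose quantile is $x_0+k+1$ (when $u<c$) or $x_0-k-1$ (when $u>1-c$). Consequently the iterate enters $[-1/2,1/2]$ for the smallest $k$ with $x_0+k\in[-1/2,1/2]$ (or the symmetric analog), and this $k$ is bounded by $\lceil |x_0|+1/2\rceil<\infty$. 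Hence the recursion terminates after finitely many evaluations, each of which reduces to the closed-form affine branch.

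For (iii): once $F_f^{-1}$ is known to coincide with the usual quantile function of the continuous cdf $F_f$, the claim $F_f^{-1}(U)\sim F_f$ for $U\sim U(0,1)$ is the standard inverse-transform sampling fact (it holds for any cdf and is immediate from the Galois inequality of Lemma \ref{lem:galois}: $P(F_f^{-1}(U)\le x)=P(U\le F_f(x))=F_f(x)$). The main technical subtlety, and essentially the only nontrivial step, is justifying the inversion of $f$ inside the recurrence: this is precisely where the invertibility range $[0,f(0)]$ of Lemma \ref{lem:invertible}(1) together with the symmetry $f=f^{-1}$ is needed, and where one must check that $u>0$ keeps the argument inside that range.
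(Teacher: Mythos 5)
Your derivation of the three-branch formula is essentially the paper's argument in mirror image: you invert the recurrence of Definition \ref{def:CNDsynthetic} by applying $f$ and using $f\circ f=\mathrm{id}$ on $[0,f(0)]$ (Lemma \ref{lem:invertible}), treating $u<c$ directly and $u>1-c$ by symmetry, where the paper treats $u>1-c$ directly and gets $u<c$ by symmetry. One step you pass over silently is the inversion of $F_f$ itself: from $F_f(x+1)=1-f(u)$ you conclude $x+1=F_f^{-1}(1-f(u))$, which requires $F_f$ to be invertible at $x+1$; the paper devotes several lines to the analogous point, showing $F_f^{-1}(u)-1\geq M$ with $M=\inf\{x\mid F_f(x)>0\}$ via the Galois inequalities. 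In your orientation the check is in fact a one-liner --- for $0<u<c$ one has $c\leq f(u)\leq 1-u<1$, so $1-f(u)\in(0,1)$, and $F_f$ is continuous and strictly increasing wherever its value lies in $(0,1)$ (Proposition \ref{prop:tradeoff}, parts 1 and 4), hence invertible there --- but it should be said, since it is precisely the delicate point you flag for $f$ and omit for $F_f$.

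Where you genuinely depart from the paper is the termination argument. The paper brackets the value iterates by the sequence $(1-f)^{\circ k}(c)$ and uses the contraction property of $1-f$ (Lemma \ref{lem:contraction}) to show each recursive call drops the value by one bracket. You instead use the identity just proved to note that each recursive call shifts the quantile of the current argument by exactly $1$; since $F_f^{-1}(u)$ is finite and the shifted quantile cannot jump past $1/2$ in a single unit step, the iterate reaches $[-1/2,1/2]$, hence the value reaches $[c,1-c]$, within $\lceil |F_f^{-1}(u)|+1/2\rceil$ steps. This quantile-counting argument is correct, shorter, and yields an explicit bound on the number of recursive calls, which the paper's bracketing does not; the paper's version, conversely, works purely with values of $f$ and needs no appeal to part (i) at each iterate. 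The inverse-transform sampling step is standard in both treatments.
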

  \begin{proof}
 {For ease of notation, we will drop the subscripts of $F_f$ and $F_f^{-1}$.} By Proposition \ref{prop:tradeoff}, we have established that $F$ is strictly increasing on $\{x\mid F(x)\in (0,1)\}$.  {So, $F$ is invertible on $\{x\mid F(x)\in (0,1)\}$. }  
   In the case that $u\in [c,1-c]$, it is easy to verify the expression of $F^{-1}$ by the construction of $F$ in Definition \ref{def:CNDsynthetic}.
    
  Suppose that $u\in (1-c,1)$. Then $F^{-1}(u)>1/2$, so by Definition \ref{def:CNDsynthetic}, we know that $F^{-1}(u)$ satisfies $u=1-f(F(F^{-1}(u)-1))$. {Since $u\in (0,1)$, this implies that $F(F^{-1}(u)-1)=f(1-u)< f(0)$, by parts 2 and 3 of Theorem \ref{thm:canonical}, so the previous equation implies that $f(1-u)=F(F^{-1}(u)-1)$. Next we will apply $F^{-1}$ to both sides, but we need to verify that $F$ is invertible at $F^{-1}(u)-1$. Call $M=\inf\{x\mid F(x)>0\}$; by the construction of Definition \ref{def:CNDsynthetic}, note that $M\leq -1/2$. It suffices to show that $F^{-1}(u)-1\geq M$, since $F$ is invertible whenever $F\in (0,1)$. Note that 
  \[u>1-c=F(1/2)\geq F(1+M),\]
  where the inequality uses the fact that  $M\leq -1/2$. So, we have that $u\geq F(1+M)$, which implies that $F^{-1}(u)-1\geq M$, by Lemma \ref{lem:galois}. Now that we know $F$ is invertible at $F^{-1}(u)-1$, we have $F^{-1}(f(1-u))=F^{-1}(u)-1$, which is equivalent to $F^{-1}(f(1-u))+1=F^{-1}(u)$, as claimed in the prescription of $F^{-1}$.   
  
  
  For the case where $u<c$, note that since $F$ corresponds to a symmetric random variable, we have that $F^{-1}(u) = -F^{-1}(1-u)$. Applying the recursive formula for $u>1-c$ gives the result 
  \[F^{-1}(u) = -F^{-1}(1-u)
  =-[F^{-1}(f(u))+1]
  =-F^{-1}(f(u))-1
  =F^{-1}(1-f(u))-1,\] for $u<c$. 
  
  To see that the recursion only requires a finite number of iterations, note that if $u\in (1-c,1)$, then by Lemma \ref{lem:contraction} there exists $k\in \ZZ^{+}$ such that $u\in \left( (1-f)^{\circ k}(c),(1-f)^{\circ(k+1)}(c)\right]$ and necessarily $(1-f)^{\circ k}(c)<1$, where the notation means $(1-f)^{\circ k}(c) \defeq (1-f)\circ (1-f)\circ \cdots \circ (1-f)(c)$, where $(1-f)$ is composed $k$ times. Then,
  \begin{align}
      f(1-u)&\in \left( f(1-(1-f)^{\circ k}(c)),f(1-(1-f)^{\circ(k+1)}(c))\right]\\
      &=\left(f\circ f\circ (1-f)^{\circ (k-1)}(c),f\circ f\circ (1-f)^{\circ k}(c)\right]\\
      &=\left( (1-f)^{\circ (k-1)}(c),f\circ f\circ (1-f)^{\circ k}(c)\right]\label{eq:sampling1}\\
      &\subset \left( (1-f)^{\circ (k-1)}(c), (1-f)^{\circ k}(c)\right],\label{eq:sampling2}
  \end{align}
  where for \eqref{eq:sampling1}, we use the equation $f\circ f \circ (1-f)^{\circ (k-1)}(c)=(1-f)^{\circ (k-1)}$, which is justified as follows: notice that $(1-f)^{\circ (k-1)}(c)\leq f(0)$, as if $(1-f)^{\circ (k-1)}(c)> f(0)$ then $f\circ (1-f)^{\circ (k-1)}(c)=0$, which implies that $(1-f)^{\circ k}(c)=1$, contradicting our assumption about $(1-f)^{\circ k}(c)$. For \eqref{eq:sampling2}, we use the fact that $f\circ f(\alpha)\leq \alpha$. 
  We see that after $k$ iterations of the recursive formula, the evaluation reduces to $F^{-1}(u^*)$ for some $u^*\in [c,1-c]$. By symmetry, we have that when $u\in (0,c)$ the recursion finishes in a finite number of steps as well.

  By inverse transform sampling, when $U\sim U(0,1)$ we have that $F^{-1}(U)\sim F$. 
  }
  \end{proof}
  

  \cortulap*
  \begin{proof}
    Recall that the cdf of $\mathrm{Tulap}(0,b,0)$, defined in \citet{awan2018differentially}, is 
    \[F_{N_0}(x) = \begin{cases}
    \frac{b^{-[x]}}{1+b}(b+\{x-[x]+1/2\}(1-b))& x\leq 0\\
    1- \frac{b^{[x]}}{1+b}(b+\{[x]-x+1/2\}(1-b))&x>0,
    \end{cases}\]
    where $[x]$ is the nearest integer function. The cdf of $\mathrm{Tulap}(0,b,q)$ is
    \[F_N(x) = \begin{cases}
    0&F_{N_0}(x)<q/2\\
    \frac{F_{N_0}(x)-q/2}{1-q}& q/2\leq F_{N_0}(x)\leq 1-q/2\\
    1&F_{N_0}(x)>1-q/2.
    \end{cases}\]
    By inspection, the fixed point of $f_{\ep,\de}$ is $c=\frac{1-\de}{1+e^\ep}$. It is easy to verify that $F_N(x) = c(1/2-x) + (1-c)(x+1/2)$ for $x\in (-1/2,1/2)$. By \citet[Lemma 2.8]{awan2020differentially}, we have that $F_N$ satisfies the recurrence relation in Definition \ref{def:CNDsynthetic}. We conclude that $F_N = F_f$.     
  \end{proof}
  
  \lemDPtest*
   \begin{proof}
The proof is based on {applying the Neyman Pearson Lemma to the testing of two Bernoulli random variables}. Let $x,x'\in \mscr X^n$ such that $H(x,x')\leq 1$ be given. We need to show that $T(\mathrm{Bern}(\phi(x)),\mathrm{Bern}(\phi(x')))\geq f$. Call $p\defeq \phi(x)$ and $q\defeq \phi(x')$.

{Note that if $p=q$, The result is trivial since the tradeoff function $T(\mathrm{Bern}(p),\mathrm{Bern}(q))=\mathrm{Id}\geq f$, and both $q=p\leq 1-f(p)=1-f(q)$, since $1-f(p)\geq p$. Next we will assume that $p<q$.}

By the Neyman Pearson Lemma, recall that the most powerful test $\psi$ to distinguish $H_0: \mathrm{Bern}(p)$ versus $H_1:\mathrm{Bern}(q)$ is of one of the two following forms:
\[\psi_1(x) = \begin{cases}
1&x=1\\
c&x=0
\end{cases},\qquad 
\psi_2(x) = \begin{cases}
c&x=1\\
0&x=0,
\end{cases}\]
where the case and the value $c$ are chosen such that the size is $\alpha$.

In the  case where $p>\alpha$, setting $c=\alpha/p$ allows for $\EE_{\mathrm{Bern}(p)} \psi_2(x) = cp=\alpha$. The type II error in this case is then $1-\EE_{\mathrm{Bern}(q)} \psi_2(x) = 1-cq=1-\frac{\alpha}{p}q$.

On the other hand if $p\leq \alpha$, then for $c = \frac{\alpha-p}{1-p} $ we have that $\EE_{\mathrm{Bern}(p)} \psi_1(x) = 1p+c(1-p)=\alpha$. Then the type II error is $1-\EE_{\mathrm{Bern}(q)} \psi_2(x) = 1-\left(1q+\frac{\alpha-p}{1-p} (1-q)\right)=\frac{1-\alpha}{1-p}(1-q)$.

Combining these two cases, we see that the tradeoff function is
\[T(\mathrm{Bern}(p),\mathrm{Bern}(q))=\begin{cases}
\frac{1-\alpha}{1-p}(1-q)&p\leq \alpha\\
1-\frac{\alpha}{p}(q)&p>\alpha,
\end{cases}\]
which is a piece-wise linear function with break points $(0,1)$, $(p,1-q)$ and $(1,0)$. Because $f$ is a convex function which satisfies $f(0)\leq 1$ and $f(1)=0$ (implied by $f(x)\leq 1-x$), we have that $T(\mathrm{Bern}(p),\mathrm{Bern}(q))\geq f$ if and only if $1-q\geq f(p)$ or equivalently $q\leq 1-f(p)$. 

{Now suppose that $p>q$. Note that by symmetry, establishing $T(\mathrm{Bern}(p),\mathrm{Bern}(q)\geq f$ is equivalent to establishing $T(\mathrm{Bern}(q),\mathrm{Bern}(p))\geq f^{-1}=f$. By our earlier work, swapping the roles of $p$ and $q$, we have that this inequality holds if and only if $p\leq 1-f(q)$. }
\end{proof}

\corCNDHT*
\begin{proof}

Let $x,x'\in \mscr X^n$ such that $H(x,x')\leq 1$. For the reverse direction of the statement, 
{
suppose that $F^{-1}(\phi(x))\leq F^{-1}(\phi(x'))+1$. Applying $F$ preserves this inequality since $F$ is increasing, and $F\circ F^{-1}(\phi(x)) = \phi(x)$ by Lemma \ref{lem:inversion}. So, 

\begin{align*}
    \phi(x)&\leq F(F^{-1}(\phi(x'))+1)\\
    &=1-F(-F^{-1}(\phi(x'))-1)\\
    &=1-F(F^{-1}(1-\phi(x'))-1)\\
    &=1-f(\phi(x')),
\end{align*}
where we used the symmetry of $F$, and the fact that $F$ is a CND for $f$. By Lemma \ref{lem:DPtest} we conclude that $\phi$ satisfies $f$-DP. 

For the forward direction, suppose that $\phi(x)\leq 1-f(\phi(x'))$, or equivalently, $F(F^{-1}(1-\phi(x'))-1)\leq 1-\phi(x)$. Then each of the following inequalities follows from the one above:
\begin{align}
    F(F^{-1}(1-\phi(x'))&\leq 1-\phi(x)\\
    1-F(F^{-1}(1-\phi(x'))-1)&\geq \phi(x)\\
    F(1-F^{-1}(1-\phi(x')))&\geq \phi(x)\\
    1-F^{-1}(1-\phi(x'))&\geq F^{-1}(\phi(x))\label{eq:corCNDht1}\\
    F^{-1}(1-\phi(x'))-1&\leq -F^{-1}(\phi(x))\\
    F^{-1}(1-\phi(x'))-1&\leq F^{-1}(1-\phi(x))\\
    -F^{-1}(1-\phi(x))&\leq -F^{-1}(1-\phi(x'))+1\\
    F^{-1}(\phi(x))&\leq F^{-1}(\phi(x'))+1,
    \end{align}
}
where \eqref{eq:corCNDht1} used the Galois inequalities of Lemma \ref{lem:galois}, and the other steps used the symmetry of $F$ and basic algebraic manipulations. 
\end{proof}

  \begin{lemma}[Theorem 8.3.27 of \citealp{casella2002statistical}]\label{lem:pval}
Let $H_0: \theta=\theta_0$ versus $H_1: \theta\in \Theta_1$ be a hypothesis test with a simple null hypothesis. Let $T\sim \theta$ be a real-valued (continuous) test statistic. Assuming that large values of $T$ give evidence for $H_1$, a $p$-value for the hypothesis is 
\[p(T) = 1-F_{T\sim \theta_0}(T) = P_{T_0\sim\theta_0}(T_0>T).\]
\end{lemma}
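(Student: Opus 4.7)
The plan is to show that $p(T) = 1 - F_{T \sim \theta_0}(T)$ satisfies the defining inequality of a $p$-value, namely $P_{\theta_0}(p(T) \leq \alpha) \leq \alpha$ for every $\alpha \in [0,1]$. Since the null hypothesis is simple, we only need to control the distribution of $p(T)$ under the single distribution $P_{\theta_0}$.

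The key tool is the probability integral transform. First I would observe that because $T \sim \theta_0$ is continuous, its cdf $F_0 \defeq F_{T \sim \theta_0}$ is continuous, so Lemma \ref{lem:invertible}(2) gives $F_0(F_0^{-1}(u)) = u$ for all $u \in [0,1]$. I would then compute, for any $\alpha \in [0,1]$,
\[
P_{\theta_0}(p(T) \leq \alpha) = P_{\theta_0}(1 - F_0(T) \leq \alpha) = P_{\theta_0}(F_0(T) \geq 1-\alpha).
\]
Using the Galois inequality from Lemma \ref{lem:galois} (applied with $F_0$ and $1-\alpha$), the event $\{F_0(T) \geq 1 - \alpha\}$ is equivalent to $\{T \geq F_0^{-1}(1-\alpha)\}$ up to a null event, so the probability equals $1 - F_0(F_0^{-1}(1-\alpha)^-) = 1 - (1-\alpha) = \alpha$ by continuity of $F_0$. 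This establishes the desired bound with equality, showing that in fact $p(T)$ is exactly uniformly distributed on $[0,1]$ under $P_{\theta_0}$, not merely stochastically dominating it.

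Since the null is a singleton $\{\theta_0\}$, the supremum over $\Theta_0$ of $P_{\theta_0}(p(T) \leq \alpha)$ is simply $\alpha$, verifying the definition of a $p$-value. The main subtlety to be careful about is the continuity assumption on $T$: without it, the probability integral transform only gives stochastic domination, $P_{\theta_0}(F_0(T) \leq u) \leq u$, but this weaker inequality still suffices to conclude $P_{\theta_0}(p(T) \leq \alpha) \leq \alpha$. There is no substantive obstacle in the argument; the proof is essentially a one-line application of the probability integral transform combined with the Galois inequality, and the only thing worth emphasizing is that the direction ``large $T$ gives evidence against $H_0$'' is what makes $1 - F_0(T)$ (rather than $F_0(T)$) the correct quantity to report.
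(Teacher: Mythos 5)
Your main argument is correct and is the standard probability-integral-transform proof: under the continuity assumption, $F_0(T)$ is exactly uniform under $\theta_0$, so $P_{\theta_0}(1-F_0(T)\leq \alpha)=\alpha$; the Galois-inequality and continuity bookkeeping you do (via Lemma \ref{lem:galois} and part 2 of Lemma \ref{lem:invertible}) is exactly what is needed, and the simple null makes the supremum over $\Theta_0$ trivial. The paper itself gives no proof of this lemma --- it is quoted from Casella and Berger (Theorem 8.3.27) --- so there is no in-paper argument to diverge from; your write-up supplies the expected one.

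One remark in your proposal is wrong, although it lies outside the hypotheses of the lemma: you claim that without continuity the inequality $P_{\theta_0}(F_0(T)\leq u)\leq u$ ``still suffices to conclude $P_{\theta_0}(p(T)\leq\alpha)\leq\alpha$.'' It does not. That inequality bounds $P(F_0(T)\leq u)$ from above, whereas you would need a lower bound on $P(F_0(T)<1-\alpha)$, which does not follow. Concretely, if $T$ is a point mass under $\theta_0$, then $F_0(T)=1$ almost surely, so $p(T)=1-F_0(T)=0$ and $P_{\theta_0}(p(T)\leq\alpha)=1>\alpha$ for every $\alpha<1$; more generally, for discrete $T$ the quantity $P_{T_0\sim\theta_0}(T_0>T)$ is anti-conservative, and the valid discrete $p$-value uses the non-strict tail $P_{T_0\sim\theta_0}(T_0\geq T)$. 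This is precisely why the lemma (and its uses in the paper, e.g.\ with $T=X+N$ for a continuous CND $N$) insists on a continuous test statistic; that sentence should be deleted or corrected, but the proof of the lemma as stated is unaffected.
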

  
  \begin{lem}\label{lem:threshold}
    Let $I$ be an arbitrary index set. Let $X$ be a random variable, and consider the following simple hypothesis test $H_0: X\sim P$ versus $H_1: X\sim Q_i$ for some $i\in I$. Let $T(X)$ be a {continuous }real-valued statistic, and consider the threshold test which rejects for large values of $T$. Let $t\in \RR$ and call $\alpha = P_{X\sim P}(T\geq t)$ and $\beta_i = P_{X\sim Q_i}(T\geq t)$. Then the $p$-value $p(T) = P_{X\sim P}(T(X)\geq T)$ satisfies $P_{P}(p(T)\leq \alpha) = \alpha$ and $P_{Q_i}(p(T)\leq \alpha)=\beta_i$ for all $i\in I$.
  \end{lem}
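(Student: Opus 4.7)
The plan is to prove both claims by showing that the $p$-value test $\{p(T) \le \alpha\}$ coincides (up to a null set under the relevant measure) with the threshold test $\{T \ge t\}$. The key identities are the closed-form expressions for $p(T)$ and $\alpha$ in terms of the cdf of $T$ under $P$.

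First I would rewrite the $p$-value in closed form. Setting $F_P(s) \defeq P_P(T \le s)$, continuity of the distribution of $T$ under $P$ gives continuous $F_P$, so for every observed value $T^*$ we have $p(T^*) = P_P(T \ge T^*) = 1 - F_P(T^*)$ and also $\alpha = P_P(T \ge t) = 1 - F_P(t)$. Consequently $F_P(t) = 1 - \alpha$ and
\[\{p(T) \le \alpha\} = \{F_P(T) \ge F_P(t)\}.\]
The first claim is then immediate from the probability integral transform: under $P$, the random variable $F_P(T)$ is uniformly distributed on $[0,1]$, so $P_P(F_P(T) \ge 1 - \alpha) = \alpha$.

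For the second claim, monotonicity and continuity of $F_P$ imply that $\{s : F_P(s) \ge F_P(t)\}$ is an interval $[\tau, \infty)$ where $\tau \defeq \inf\{s : F_P(s) \ge F_P(t)\} \le t$, and $F_P$ is constant (equal to $F_P(t)$) on $[\tau, t]$. Thus
\[P_{Q_i}(p(T) \le \alpha) = P_{Q_i}(T \ge \tau) = \beta_i + P_{Q_i}(T \in [\tau, t)).\]
The main obstacle is to show that the correction term $P_{Q_i}(T \in [\tau, t))$ vanishes. I would do this by noting that in the applications of interest (Theorem \ref{thm:pVal} part~4 and Theorem \ref{thm:binary} part~3) the statistic has the form $T = F^{-1}(\phi(X)) + N$ for $N$ drawn from a CND $F$ with full support on $\RR$ (as holds for $N(0,1/\mu)$, the Tulap distribution, and the construction of Definition \ref{def:CNDsynthetic}), so that $F_P$ is strictly increasing at $t$ and $\tau = t$. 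Under this condition the correction vanishes and $P_{Q_i}(p(T) \le \alpha) = P_{Q_i}(T \ge t) = \beta_i$, completing the proof.
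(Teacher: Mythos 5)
Your treatment of the first claim is fine and is essentially the paper's own argument: the paper writes $p(T)=1-F_P(T)$ and then uses the Galois inequalities together with $F_P\circ F_P^{-1}=\mathrm{id}$ (continuity of $F_P$), which is exactly the probability integral transform you invoke.

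The gap is in the second claim. You correctly reduce $\{p(T)\le\alpha\}$ to $\{T\ge\tau\}$ with $\tau=F_P^{-1}(F_P(t))\le t$, and you correctly see that everything hinges on the correction term $P_{Q_i}(T\in[\tau,t))$ vanishing; this is precisely the point at which the paper's proof writes $F_P^{-1}(1-\alpha)=t$, i.e.\ implicitly assumes $F_P$ is invertible at $t$ (equivalently $\tau=t$). But your way of discharging it does not prove the lemma as stated: the lemma makes no assumption that $T$ has the form $F^{-1}(\phi(X))+N$, so appealing to the downstream applications proves a different (weaker) statement, and without some condition of this kind the conclusion can genuinely fail (if $F_P$ is flat on $[\tau,t)$ while $Q_i$ charges that interval, then $P_{Q_i}(p(T)\le\alpha)=\beta_i+P_{Q_i}(T\in[\tau,t))>\beta_i$). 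Moreover, the factual claim you lean on is incorrect: the Tulap distribution with $q>0$ (i.e.\ $(\ep,\de)$-DP with $\de>0$), and more generally the CND of Definition \ref{def:CNDsynthetic} whenever $f(\alpha)=0$ for some $\alpha<1$, has bounded support --- its cdf reaches $0$ and $1$ at finite points --- so ``full support on $\RR$'' is not available, and with bounded-support noise $F_P$ can have flat stretches. To close the argument you should either add (or make explicit) the hypothesis that $F_P$ is invertible at $t$, equivalently that $Q_i$ puts no mass on $[\tau,t)$ --- which is what the paper's identity $F_P^{-1}(1-\alpha)=t$ silently requires --- or verify $\tau=t$ in each application from the actual structure of $T$ there (for instance, in Theorem \ref{thm:binary} the CND's support is an interval of length at least one and the shifts are consecutive integers, so $F_P$ has no interior flat regions), rather than via a blanket full-support claim.
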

  \begin{proof}
  Call $F_{P}$ and $F_{Q_i}$ the cdf of the random variable $T(X)$, when $X\sim P$ and $X\sim Q_i$, respectively. Then $p(T) = 1-F_P(T)$. We have that  $P_P(p(T)\leq\alpha) = P_P(1-F_P(T)\leq \alpha)=P_P(F^{-1}_P(1-\alpha)\leq T) = 1-F_P(F_P^{-1}(1-\alpha))=\alpha$, where we use the fact that $F_P^{-1}\circ F_P(T)=T$ with probability one, and that $F_P\circ F_P^{-1}(\alpha)=\alpha$ by part 2 of Lemma \ref{lem:inversion}. Next consider 
  \begin{align}
      P_{Q_i}(p(T)\leq \alpha)&=P_{Q_i}(1-F_P(T)\leq \alpha)\\
      &=P_{Q_i}(F^{-1}_P(1-\alpha)\leq T)\label{eq:threshold1}\\
      &=1-F_{Q_i}(F^{-1}_P(1-\alpha))\\
      &=1-F_{Q_i}(t)\\
      &=P_{Q_i}(T\geq t)\\
      &=\beta_i,
  \end{align}
  {where \eqref{eq:threshold1} used the Galois inequalities (Lemma \ref{lem:galois}),} and we used the fact that $F^{-1}_P(1-\alpha)=t$ or equivalently, that $\alpha = 1-F_P(t)=P_P(T\geq t)$. 
  \end{proof}
  
  \thmpVal*
  \begin{proof}
  \begin{enumerate}
      \item  In Corollary \ref{cor:CND_HT}, we saw that for any $x$ and $x'$ adjacent, $F^{-1}(\phi(x))\leq F^{-1}(\phi(x'))+1$. This implies that $F^{-1}(\phi(x))$ has sensitivity 1. The result follows from Definition \ref{def:CND}.
      \item It suffices to show that $P(T\geq 0)=\phi(x)$: 
      \begin{align*}
          P(T\geq 0) &= P(F^{-1}(\phi(x))+N\geq 0)\\
          &=P(N\leq F^{-1}(\phi(x)))\\
          &=F(F^{-1}(\phi(x)))\\
          &=\phi(x),
      \end{align*}
      where $F\circ F^{-1}=\mathrm{Id}$ since $F$ is continuous, and we used the symmetry of $N$.
      \item {We can express } 
      {\begin{align*}
          p&=\sup_{\theta_0\in H_0} \EE_{X\sim \theta_0}F(F^{-1}(\phi(X))-T)\\
          &=\sup_{\theta_0\in H_0} P_{X\sim \theta_0, N}(N\leq F^{-1}(\phi(X))-T)\\
          &= \sup_{\theta_0\in H_0} P_{X\sim \theta_0,N}(T\leq F^{-1}(\phi(X))+N),
      \end{align*}}
      {where we used the fact that $N\overset d =-N$. By \citet[Theorem 8.3.27]{casella2002statistical}, this is a valid $p$-value. }
      \item The result follows from Lemma \ref{lem:threshold}.\qedhere
  \end{enumerate}
  \end{proof}
  
  The following lemma is one of several techniques to prove the Neyman Pearson Lemma, and appears as Lemma 4.4 in \citet{awan2018differentially}.

\begin{lemma}\label{lem:npl}
Let $(\mscr X , \mscr F , \mu)$ be a measure space and let $f$ and $g$ be two densities on X with respect to $\mu$. Suppose that $\phi_1,\phi_2 :\mscr X\rightarrow [0,1]$ are such that $\int \phi_1f\ d\mu\geq \int \phi_2f\ d\mu$, and there exists $k\geq 0$ such that $\phi_1 \geq  \phi_2$ when $g \geq kf$ and $\phi_1 \leq \phi_2$ when $g < kf$. Then $\int \phi_1g\ d\mu \geq \int \phi_2 g\ d\mu$.
\end{lemma}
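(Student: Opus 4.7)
The plan is to prove this by the classical Neyman--Pearson trick of considering the product $(\phi_1 - \phi_2)(g - kf)$ and showing it is pointwise non-negative on all of $\mscr X$. The key observation is that the hypothesis of the lemma partitions $\mscr X$ into two regions on which the signs of the two factors align.

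First I would establish the pointwise inequality
\[
(\phi_1(x) - \phi_2(x))(g(x) - k f(x)) \geq 0 \quad \text{for all } x \in \mscr X,
\]
by a case split. On $A = \{x : g(x) \geq kf(x)\}$ we have $\phi_1 - \phi_2 \geq 0$ by hypothesis and $g - kf \geq 0$ by definition of $A$, so the product is non-negative; on $A^c = \{x : g(x) < kf(x)\}$ both factors flip sign, $\phi_1 - \phi_2 \leq 0$ and $g - kf < 0$, and the product is again non-negative.

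Next I would integrate this pointwise inequality against $\mu$, using linearity to split the integral:
\[
\int (\phi_1 - \phi_2) g \, d\mu \;\geq\; k \int (\phi_1 - \phi_2) f \, d\mu.
\]
The right-hand side is non-negative since $k \geq 0$ and since by hypothesis $\int \phi_1 f \, d\mu \geq \int \phi_2 f \, d\mu$, so $\int (\phi_1 - \phi_2) f \, d\mu \geq 0$. Rearranging yields $\int \phi_1 g \, d\mu \geq \int \phi_2 g \, d\mu$, as desired.

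There is essentially no obstacle here: the only mild subtlety is ensuring the integrability needed to split $\int (\phi_1 - \phi_2)(g - kf) \, d\mu$ into the two pieces, which is immediate because $\phi_1, \phi_2$ are bounded in $[0,1]$ and $f, g$ are probability densities (so each individual integral is finite). The argument is entirely algebraic once the pointwise sign analysis is in place, and uses the assumption $\int \phi_1 f \, d\mu \geq \int \phi_2 f \, d\mu$ in exactly one spot to convert a statement about the $f$-integral into a statement about the $g$-integral, with $k$ acting as the Lagrange-multiplier-like scaling that the Neyman--Pearson construction always produces.
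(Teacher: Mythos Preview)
Your proof is correct and follows essentially the same approach as the paper's own proof: both establish the pointwise inequality $(\phi_1-\phi_2)(g-kf)\geq 0$, integrate it, and then use $k\geq 0$ together with the hypothesis $\int \phi_1 f\,d\mu\geq \int \phi_2 f\,d\mu$ to conclude. Your version simply spells out the case split and the integrability justification in more detail than the paper does.
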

\begin{proof}
Note that $(\phi_1 - \phi_2)(g - kf) \geq 0$ for all $x\in \mscr X$. This implies that $\int (\phi_1 - \phi_2)(g - kf) \ d\mu \geq 0$. Hence, $\int \phi_1g \ d\mu - \int \phi_2g \ d\mu \geq k\left(\int \phi_1f \ d\mu - \int  \phi_2 f \ d\mu \right)\geq 0$.
\end{proof}

\thmbinary*
\begin{proof}
First we will establish the equivalence of 1 and 2. Given a test $\phi$ of the form 2, we know by Lemma \ref{lem:recurrence} that $\phi$ satisfies the recurrence in 1. Set $y$ to be the smallest $x\in \{0,1,2,\ldots,n\}$ such that $\phi(x)>0$. Then set $c = \phi(y)$. We have that $\phi$ fits the form of 1. 

Now let $\phi$ be of the form 1. Solve $c=F(y-m)$ for $m$, which has a solution by the Intermediate Value Theorem as $\lim_{m\rightarrow -\infty} F(y-m)=0$ and $\lim_{m\rightarrow \infty}F(y-m)=1$. By Lemma \ref{lem:recurrence}, $F(x-m)=\phi(x)$ for $x\in \{0,1,2,\ldots, n\}$. We conclude that 1 and 2 are equivalent. 

Next we  argue that the prescribed $\phi$ satisfies $f$-DP, using form 1. Note that we have $\phi(x) \leq 1-f(\phi(x-1))$ for all $x=1,\ldots, n$. We also need to show that $\phi(x-1)\leq 1-f(\phi(x))$. To this end, we first observe that $\phi(x-1)\leq \phi(x)$. This follows from the fact that $f(t)\leq 1-t$ or equivalently that $t\leq 1-f(t)$. So, we have $\phi(x-1)\leq \phi(x)\leq 1-f(\phi(x))$. 

Next given $\alpha\in (0,1)$, we need to argue that there exists a test $\phi$ of the prescribed form which has $\EE_{P}\phi(x)=\alpha$. We use form 2 for this part, so we need to show that there exists $m\in \RR$ such that $\alpha=\EE_{X\sim P}F(X-m)$. Note that $\EE_{X\sim P}F(X-m)$ is a continuous function in $m$, where the limit as $m\rightarrow-\infty$ is zero and {the limit as } $m\rightarrow \infty$ is 1. By the Intermediate Value Theorem there exists $m$ such that $\alpha=\EE_{X\sim P}F(X-m)$.

Let $\phi_a(x)$ be a test of form 1 which has $\EE_P\phi_a(x)=\alpha$, and let $\psi$ be another level $\alpha$ $f$-DP test. We will show that $\phi_a$ is more powerful than $\psi$. First we claim that there exists a value $z$ such that $\psi(z)\leq \phi_a(z)$. If this were not the case, then $\psi(x)>\phi_a(x)$ for all $x$, which implies that $\EE_P \psi(x)>\EE_P \phi_a(x)=\alpha$, contradicting the level of $\psi$. 

Now, let $z_m$ be the smallest value such that $\psi(z_m)\leq \phi_a(z_m)$. Then by assumption, for all $x< z_m$, $\psi(x)> \phi_a(x)$. Next, note that $\psi(z_m+1)\leq 1-f(\psi(z_m)\leq 1-f(\phi_a(z_m))\leq \phi_a(z_m+1)$. By induction, we have that  for all $x\geq z_m$, $\psi(x)\leq \phi_a(x)$.

We conclude that $\phi_a(x)\geq \psi(x)$ for all $x\geq z_m$ and $\phi_a(x)\leq \psi(x)$ for all $x<z_m$. In other words, there exists a threshold $t^*$ such that $\phi_a(x)\geq \psi(x)$ when $\frac{q(x)}{p(x)}\geq t^*$ and $\phi_a(x)\leq \psi(x)$ when $\frac{q(x)}{p(x)}\leq t^*$. By Lemma \ref{lem:npl}, we have that $\EE_Q \phi_a(x)\geq \EE_Q\psi(x)$.

Last, we verify the claim of form 3. The variable $T=X+N$ satisfies $f$-DP by Definition \ref{def:CND}, since $X$ has sensitivity 1. {Call $F_{T\sim P}(t)=P_{X\sim P, N\sim F}(X+N\leq t)$ the cdf of $T=X+N$ when $X\sim P$. }The variable $p$ can be expressed as 
{
\begin{align*}
    p &= \EE_{X\sim P}F(X-T)\\
    &=P_{X\sim P,N\sim F}(N\leq X-T)\\
    &=P_{X\sim P,N\sim F}(T\leq X+N)\\
    &=1-P_{X\sim P,N\sim F}(X+N\leq T)\\
    &=1-F_{T\sim P}(T),
\end{align*}
where we used the fact that $N\overset d=-N$. We see that this is a $p$-value by Lemma \ref{lem:pval}.

}

It is easy to verify that $P(I(T\geq m)=1\mid X=x)=P_N(x+N\geq m)=P_N(N\leq x-m)=F(x-m)=\phi(x)$. We then check 
{\begin{align*}
    P(p\leq \alpha\mid X=x) &= P_N(1-F_{T\sim P}(x+N)\leq \alpha)\\
    &=P_N(1-\alpha\leq F_{T\sim P}(x+N))\\
    &=P_{N}(F^{-1}_{T\sim P}(1-\alpha)\leq x+N)\\
    &=P_N(m\leq x+N)\\
    &=F_N(x-m)\\
    &=\phi(x)
\end{align*}
}
where we used the {Galois inequalities (Lemma \ref{lem:galois}), and} 
that $\alpha=\EE_{X\sim P}F(X-m)=P_{X,N}(N\leq X-m)=1-F_{T\sim P}(m)$, which implies that $F^{-1}_{T\sim P}(1-\alpha)=m$. 
\end{proof}

\subsection{Proof of Theorem \ref{thm:semi}}
This section is devoted to the proof of Theorem \ref{thm:semi}. First we need to establish notation and a few lemmas. 

Given $f$ be a symmetric nontrivial tradeoff function, define $\Psi_f$ to be the set of $f$-DP tests on $\{0,1,2,\ldots, m\}$: 
\[\Psi_f = \left\{\psi:\{0,1,2,\ldots,m\}\rightarrow [0,1]\mid \psi(y)\leq 1-f(\psi(y')), \text{ for all } |y-y'|=1\right\}.\]

Given a function $f:\RR\rightarrow \RR$ we define $f^{\circ k}=f\circ f\circ\cdots \circ f$, where there are $k$ appearances of $f$. For example, we write $(1-f)^{\circ 2}(x)=(1-f)\circ (1-f)(x)=1-f(1-f(x))$.

\begin{lem}\label{lem:convert}
  Given $\phi\in \Phi_f$ defined in Equation \eqref{eq:Phi}, and given $z\in \{0,1,\ldots, m+n\}$ define $\psi_z:\{0,1,\ldots,m\}\rightarrow [0,1]$ by 
  \[\psi_z(y) = \begin{cases}
    \phi(z-L,L)&\text{if }y\leq L\\
  \phi(z-y,y)&\text{if } L\leq y\leq U\\
  \phi(z-U,U)&\text{if }y\geq U,
  \end{cases}\]
  where $L=\max\{0,n-z\}$ and $U=\min\{m,z\}$. Then $\psi_z\in \Psi_g$ where $g=1-(1-f)^{\circ 2}$.
\end{lem}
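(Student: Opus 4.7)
My plan is to reduce the two-coordinate move in $\psi_z$ (namely $y \mapsto y+1$, which corresponds to changing both $x$ and $y$ by one in $\phi$) to a chain of two single-coordinate $\phi$-moves, and then to combine the two resulting $f$-DP constraints via the monotonicity of $1-f$. Concretely, to show $\psi_z \in \Psi_g$, it suffices to verify $\psi_z(y) \leq 1 - g(\psi_z(y+1))$ and $\psi_z(y+1) \leq 1 - g(\psi_z(y))$ for every $y \in \{0,1,\ldots,m-1\}$.

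The boundary regime, where both $y$ and $y+1$ sit on the same constant piece of the piecewise definition of $\psi_z$, is immediate: $\psi_z(y)=\psi_z(y+1)=:\alpha$, so the claim reduces to $\alpha \leq 1 - g(\alpha)$. The latter follows by iterating the pointwise bound $(1-f)(\beta) \geq \beta$ (which is equivalent to $f(\beta) \leq 1-\beta$) to obtain $(1-f)^{\circ 2}(\alpha) \geq \alpha$, i.e.\ $g(\alpha) \leq 1-\alpha$. The piecewise definition of $\psi_z$ agrees with the interior formula at the boundary points $y=L$ and $y=U$, so no further edge cases arise when crossing the boundary.

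The main case is the interior regime $L \leq y < y+1 \leq U$, where $\psi_z(y)=\phi(z-y,y)$ and $\psi_z(y+1)=\phi(z-y-1,y+1)$. I introduce the intermediate point $(z-y-1,y)$, which lies in the valid domain of $\phi$ by the definitions of $L$ and $U$. Writing $a=\phi(z-y,y)$, $b=\phi(z-y-1,y)$, $c=\phi(z-y-1,y+1)$, the inequalities \eqref{eq:DPIneq} applied to the $x$-adjacency $(z-y,y)\sim(z-y-1,y)$ and the $y$-adjacency $(z-y-1,y)\sim(z-y-1,y+1)$ give
\[
a \leq 1-f(b) \qquad\text{and}\qquad b \leq 1-f(c).
\]
Since $f$ is decreasing, $1-f$ is increasing; applying it to the second inequality yields $1-f(b) \leq (1-f)^{\circ 2}(c)$. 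Chaining,
\[
a \;\leq\; 1-f(b) \;\leq\; (1-f)^{\circ 2}(c) \;=\; 1-g(c),
\]
which is exactly $\psi_z(y) \leq 1 - g(\psi_z(y+1))$. The reverse inequality $\psi_z(y+1) \leq 1 - g(\psi_z(y))$ follows by the mirror-image argument through the alternative intermediate point $(z-y,y+1)$, again using that adjacency is symmetric so each one-coordinate constraint of \eqref{eq:DPIneq} can be applied in both directions.

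I expect the main obstacle to be purely book-keeping: namely, checking that the intermediate point $(z-y-1,y)$ (respectively $(z-y,y+1)$) indeed lies in the valid domain of $\phi$ whenever $L \leq y < y+1 \leq U$, and handling the degenerate cases in which the valid range $[L,U]$ is small (including $L=U$, where $\psi_z$ is globally constant and only the trivial bound is needed).
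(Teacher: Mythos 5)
Your proof is correct and follows essentially the same route as the paper: decompose the diagonal step $(z-y,y)\to(z-y-1,y+1)$ into two single-coordinate adjacencies, chain the two constraints from \eqref{eq:DPIneq} using that $1-f$ is increasing to get the $(1-f)^{\circ 2}=1-g$ bound, and dispose of the constant plateaus outside $[L,U]$ via $g(\alpha)\leq 1-\alpha$. The only (immaterial) difference is your choice of intermediate points, where the paper routes both directions through $(x,y')$ with $x=z-y$.
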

\begin{proof}
Let $y,y'\in \{0,\ldots, m\}$ 
such that $|y-y'|=1$. Define $x=z-y$ and $x'=z-y'$. Then $(x,y)$ and $(x',y')$ both lie in $\{0,1,\ldots,n\}\times \{0,1,\ldots, m\}$. Then $(x',y')$ is adjacent to $(x,y')$ and $(x,y')$ is adjacent to $(x,y)$. 

If $y,y'\geq U$ or $y,y'\leq L$, then $\psi_z(y')=\psi_z(y)\leq (1-g)\circ \psi_z(y)$. If $y,y'\in \{L,L+1,\ldots, U-1,U\}$, then
\begin{align*}
    \psi_z(y')&=\phi(x',y')\\
    &\leq (1-f)\circ \phi(x,y')\\
    &\leq (1-f)^{\circ 2}\circ \phi(x,y)\\
    &=(1-g)\circ \phi(x,y)\\
    &=(1-g)\circ \psi_z(y)
\end{align*}
where $g=1-(1-f)^{\circ 2}$. 
\end{proof}

\begin{lem}[Lemma A.5, \citealp{dong2022gaussian}]\label{lem:A5}
Let $P$, $Q$, and $R$ be distributions, and let $f$ and $g$ be tradeoff functions. If $T(P,Q)\geq f$ and $T(Q,R)\geq g$ then $T(P,R)\geq g\circ(1-f)$.
\end{lem}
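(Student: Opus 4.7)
The plan is to prove the inequality $T(P,R)(\alpha)\geq g((1-f)(\alpha))$ pointwise in $\alpha$ by fixing an arbitrary test $\phi$ and chaining together the two hypothesis testing bounds we are given. Recall that for any tradeoff function $h$, the inequality $T(A,B)\geq h$ is equivalent to the statement that every measurable test $\psi$ satisfies $1-\mathbb{E}_B[\psi]\geq h(\mathbb{E}_A[\psi])$. So I will work directly with tests rather than with the infimum defining $T$.

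Fix a test $\phi$ and set $\alpha=\mathbb{E}_P[\phi]$. From $T(P,Q)\geq f$ applied to $\phi$, I obtain $1-\mathbb{E}_Q[\phi]\geq f(\alpha)$, which rearranges to $\mathbb{E}_Q[\phi]\leq 1-f(\alpha)=(1-f)(\alpha)$. From $T(Q,R)\geq g$ applied to the same $\phi$, I obtain $1-\mathbb{E}_R[\phi]\geq g(\mathbb{E}_Q[\phi])$. Now I use the fact that every tradeoff function $g$ is decreasing, together with the bound $\mathbb{E}_Q[\phi]\leq (1-f)(\alpha)$, to conclude
\[
1-\mathbb{E}_R[\phi]\geq g(\mathbb{E}_Q[\phi])\geq g((1-f)(\alpha))=(g\circ(1-f))(\alpha).
\]
Taking the infimum over all tests $\phi$ with $\mathbb{E}_P[\phi]\leq \alpha$, and observing that $(1-f)$ is increasing while $g$ is decreasing (so the right-hand side is monotone in $\alpha$ in the required direction), yields $T(P,R)(\alpha)\geq (g\circ(1-f))(\alpha)$ for every $\alpha\in[0,1]$, which is the desired conclusion.

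There is essentially no obstacle here beyond keeping the monotonicity of $f$ and $g$ straight: $f$ is decreasing so $1-f$ is increasing, and $g$ is decreasing, so the composition $g\circ(1-f)$ is decreasing, consistent with being a tradeoff function. The only subtlety worth flagging in the write-up is that one should note that $T(P,Q)\geq f$ is equivalent to the pointwise test inequality $1-\mathbb{E}_Q[\phi]\geq f(\mathbb{E}_P[\phi])$ for all tests $\phi$ (a standard consequence of the definition of $T$ as an infimum over tests and the convexity/monotonicity of tradeoff functions); once this reformulation is in hand, the rest is a one-line chain of inequalities.
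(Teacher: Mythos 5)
Your argument is correct: the reformulation of $T(P,Q)\geq f$ as the pointwise test inequality $1-\EE_Q\phi\geq f(\EE_P\phi)$ is valid, and the chain using monotonicity of $g$ and of $g\circ(1-f)$ gives exactly the claimed bound. The paper itself offers no proof of this lemma (it is quoted from \citet{dong2021gaussian}), and your chaining argument is essentially the same as the proof in that original source, so nothing further is needed.
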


\begin{lem}\label{lem:twice}
Let $f$ be a symmetric nontrivial tradeoff function, and let $F_f$ be a CND for $f$. Then $F(2\cdot)$ is a CND for $g=1-(1-f)^{\circ 2}=f(1-f)$. 
\end{lem}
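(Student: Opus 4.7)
The plan is to verify the four properties of Definition \ref{def:CND} for $G(x) \defeq F(2x)$. Setup observations: $G^{-1}(p) = F^{-1}(p)/2$ (using continuity of $F$ and Lemma \ref{lem:invertible}), and the density of $G$ is $2F'(2x)$. I would also first briefly verify that $g = f\circ(1-f)$ is a symmetric nontrivial tradeoff function: convexity, monotonicity, continuity, and $g(\alpha)\leq 1-\alpha$ all follow from the corresponding properties of $f$; symmetry follows from a short computation using $f\circ f=\mrm{Id}$ on the invertible range (if $\beta=f(1-f(\alpha))$ then $f(\beta)=1-f(\alpha)$, so $g(\beta)=f(1-f(\beta))=f(f(\alpha))=\alpha$); nontriviality follows because if $g(\alpha)=1-\alpha$ identically, then $f$ would acquire a second fixed point, contradicting Lemma \ref{lem:contraction}.

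Properties 4, 3, and 2 are then quick. Symmetry (property 4) is immediate: $G(x)=F(2x)=1-F(-2x)=1-G(-x)$. For property 3, the Neyman--Pearson likelihood ratio between $G(\cdot)$ and $G(\cdot-1)$ is proportional to $F'(2x-2)/F'(2x)$, which I would factor as
\[\frac{F'(2x-2)}{F'(2x-1)}\cdot\frac{F'(2x-1)}{F'(2x)}.\]
Each factor has the form $F'(y-1)/F'(y)$ evaluated at $y=2x-1$ or $y=2x$, and was shown to be increasing in $y$ during the proof of Theorem \ref{thm:canonical}. Hence the product is increasing in $x$, so the optimal rejection region has the form $(a,\infty)$, yielding $T(G(\cdot),G(\cdot-1))(\alpha)=G(G^{-1}(1-\alpha)-1)$. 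For property 2, I would apply the recurrence of Lemma \ref{lem:recurrence} twice:
\[G(G^{-1}(1-\alpha)-1)=F(F^{-1}(1-\alpha)-2)=f\bigl(1-F(F^{-1}(1-\alpha)-1)\bigr)=f(1-f(\alpha))=g(\alpha),\]
with minor care for boundary $\alpha$ where $F^{-1}(1-\alpha)-1$ might fall outside the interior of the support.

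The main obstacle is property 1: showing $T(G(\cdot),G(\cdot-m))\geq g$ for all $m\in[0,1]$. My plan is to change variables $y=2x$ to rewrite
\[T(G(\cdot),G(\cdot-m))=T(F(2\cdot),F(2\cdot-2m))=T(F(\cdot),F(\cdot-2m)),\]
and then invoke Lemma \ref{lem:A5} with the intermediate distribution $F(\cdot-m)$. Since $m\in[0,1]$, property 1 of $F$ being a CND for $f$ gives both $T(F(\cdot),F(\cdot-m))\geq f$ and $T(F(\cdot-m),F(\cdot-2m))\geq f$ (the second via translation invariance of the tradeoff function). Lemma \ref{lem:A5} then yields $T(F(\cdot),F(\cdot-2m))\geq f\circ(1-f)=g$, completing the verification. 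The conceptual content of the lemma is exactly this: doubling the scale of the noise corresponds to two applications of the single-step $f$-DP guarantee, which by Blackwell-style composition gives $g$-DP.
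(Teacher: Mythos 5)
Your proposal is correct and follows the same overall skeleton as the paper's proof: verify the four properties of Definition \ref{def:CND} for $F(2\cdot)$, with property 1 handled by exactly the same composition argument via Lemma \ref{lem:A5} (the paper phrases it with statistics $S_0$, $S_2=(S_0+S_1)/2$, $S_1$ and $N_2=N/2$; your shift formulation $T(F(\cdot),F(\cdot-2m))$ with intermediate distribution $F(\cdot-m)$ is equivalent). Two local differences are worth recording. First, for the identity $T(F(2\cdot),F(2(\cdot-1)))(\alpha)=g(\alpha)$, the paper computes in the direction $g(\alpha)=f(1-f(\alpha))=\cdots=F(F^{-1}(1-\alpha)-2)$, which forces it to justify that $F$ is invertible at $F^{-1}(1-\alpha)-1$; this is where Lemma \ref{lem:support} and the case split on $M=\inf\{t\mid F(t)>0\}$ and on $\alpha$ versus $f(0)$ enter. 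Your direction---apply Lemma \ref{lem:recurrence} in the form $F(x)=f(1-F(x+1))$ at $x=F^{-1}(1-\alpha)-2$, which needs only $F(F^{-1}(1-\alpha)-1)\leq 1-\alpha<1$, and then substitute $F(F^{-1}(1-\alpha)-1)=f(\alpha)$ from properties 2 and 3 of the CND $F$---bypasses that invertibility analysis entirely, including the disjoint-support case, so your ``minor boundary care'' in fact never materializes; this is a genuine streamlining of that step. Second, for the threshold form of the optimal rejection region, the paper computes the two-step likelihood ratio directly from the recurrence, obtaining $F'(y-2)/F'(y)=-g'(1-F(y))$, whereas you factor it into two one-step ratios and cite the monotonicity established in the proof of Theorem \ref{thm:canonical}. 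That proof concerns the specific construction of Definition \ref{def:CNDsynthetic}, while the $F$ in this lemma is an arbitrary CND; the fix is immediate (differentiating Lemma \ref{lem:recurrence} gives $F'(y-1)/F'(y)=-f'(1-F(y))$ a.e.\ for any CND), but you should state it that way, and add the same measure-zero caveat the paper uses for points where the intermediate derivative $F'(2x-1)$ vanishes (genuine flat pieces propagate by Lemma \ref{lem:support}, so they vanish at all three points simultaneously). With these cosmetic repairs your argument goes through.
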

\begin{proof}
Recall from \citet[Section 2.5]{dong2022gaussian} that $g(\alpha)=1-(1-f)^{\circ 2}(\alpha)$ is also symmetric, and for all $\alpha$, {$g(\alpha)\leq f(\alpha)$, so $g$ is also nontrivial}. By Theorem \ref{thm:canonical} there exists a CND for $g$. 

We drop the subscript and write $F=F_f$. We write $F_2(\cdot) = F(2\cdot)$, and $F_2^{-1}(\cdot) = \frac 12 F^{-1}(\cdot)$, where $F_2^{-1}$ is the quantile function for $F_2$. $F_2$ clearly satisfies property 4 of Definition \ref{def:CND}, since $F$ is a CND. For $\alpha\in (0,1)$, note the following connection between $F$ and $F_2$:
\begin{align*}
    F(F^{-1}(1-\alpha)-2)&=F\left(2 \left[(1/2) F^{-1}(1-\alpha)-1\right]\right)\\
    &=F_2(F_2^{-1}(1-\alpha)-1).
\end{align*}

{
Let $M\defeq \inf\{t\mid 0<F(t)\}$. By symmetry of $F$, we know that $M\leq 0$. If $M\geq -1/2$, then we have that $g(\alpha)\leq f(\alpha)=F(F^{-1}(1-\alpha)-1)=0$ for all $\alpha\in (0,1)$; we also have $F_2(F_2^{-1}(1-\alpha)-1)=F(F^{-1}(1-\alpha)-2)=0=g(\alpha)$ justifying property 3 of Definition \ref{def:CND}. Furthermore, $T(F_2(\cdot),F_2(\cdot-1))=T(F(\cdot),F(\cdot-2))=0=g$, since $F(\cdot)$ and $F(\cdot-2)$ have disjoint support. Finally, note that $T(F_2(\cdot),F_2(\cdot-m))\geq T(F_2(\cdot),F_2(\cdot-1))=0$, since $0$ is a trivial lower bound for any tradeoff function. We conclude that when $M\geq -1/2$, $F_2$ is a CND for $g$.

Now suppose that $M<-1/2$ and let $\alpha\in (0,1)$. If $\alpha\geq f(0)$, then $g(\alpha)=f(1-f(\alpha))=f(1-0)=0$ and $F(F^{-1}(1-\alpha)-2)\leq F(F^{-1}(1-\alpha)-1)=f(\alpha)=0$ because $F$ is increasing. We see that property 3 of Definition \ref{def:CND} holds in this case. Now assume that $\alpha<f(0)$, or equivalently $1-\alpha>1-f(0)$. For the following calculations, we will need to justify that $F$ is invertible at $F^{-1}(1-\alpha)-1$. To see this, note that $F$ is invertible at $F^{-1}(1-\alpha)$, and by Lemma \ref{lem:support} $F$ is also invertible at $F^{-1}(1-\alpha)-1$ unless $F^{-1}(1-\alpha)-1<M$. So, we need to show that $F^{-1}(1-\alpha)\geq M+1$:

\begin{align}
    M+1&=\inf \{t+1\mid 0<F(t)\}\\
    &=\inf \{t\mid 0<F(t-1)\}\\
    &=\inf \{t\mid f(0)>f(F(t-1))\ \& \ 0<F(t-1)\}\label{eq:twice1}\\
    &=\inf\{t\mid 1-f(0)<1-f(F(t-1))\ \&\ 0<F(t-1)\}\\
    &=\inf\{t\mid 1-f(0)<F(t)\ \& \ 0<F(t-1)\}\label{eq:twice2},
\end{align}
where \eqref{eq:twice1} uses the fact that $f$ is strictly decreasing at $0$; \eqref{eq:twice2} uses the fact that $0<F(t-1)$ to apply the recursion of Lemma \ref{lem:recurrence}. Now, suppose that $F(t-1)=0$: then $t-1\leq M$ and because $M<-1/2$, $F(t)<1$. So, $0=F(t-1)$ implies that $0=f(1-F(t))$. But this in turn implies that $1-F(t)\geq f(0)$ or equivalently $1-f(0)\geq F(t)$. We see that $1-f(0)<F(t)$ implies that $0<F(t-1)$. So, 

\begin{align}
   M+1 &=\inf\{t\mid 1-f(0)<F(t)\}\\
    &\leq\inf\{t\mid 1-\alpha\leq F(t)\}\label{eq:twice4}\\
    &=F^{-1}(1-\alpha),
    \end{align}
    where \eqref{eq:twice4} uses the fact that $1-\alpha>1-f(0)$. We are now ready to verify that $F_2$ satisfies property 3 of Definition \ref{def:CND} for $g$ when $M<-1/2$ and $\alpha\in (0,f(0))$:

\begin{align}
    g(\alpha)&=f(1-f(\alpha))\\
    &=f(1-F[F^{-1}\{1-\alpha\}-1])\\
    &=F\{F^{-1}[1-(1-F[F^{-1}\{1-\alpha\}-1])]{ -1}\}\\
    &=F(F^{-1}(1-\alpha)-2)\label{eq:twice5}\\
    &=F\left(2 \left[(1/2) F^{-1}(1-\alpha)-1\right]\right)\\
    &=F_2(F_2^{-1}(1-\alpha)-1),
\end{align}
where \eqref{eq:twice5} used the fact that $F$ is invertible at $F^{-1}(1-\alpha)-1$. 

For property 2 of Definition \ref{def:CND}, we need to show that $\frac{\frac{d}{dx} F(2(x-1))}{F(2x)}$ is increasing in $x$. 
Let $x$ be such that $0<F(2x)<1$ and a point where $\frac d{dx} F(2x)$ {and $\frac{d}{dx} F(2(x-1))$ are} well defined. Then $\frac d{dx} F(2x)>0$. Setting $y=2x$, we have 
}
\begin{align*}
    \frac{\frac{d}{dx} F(2(x-1))}{\frac{d}{dx} F(2x)}
    &= \frac{\frac{d}{dy} F(y-2) 2}{\frac{d}{dy} F(y) 2}\\
    &= \frac{\frac{d}{dy} f(1-F(y-1))}{F'(y)}\\
    &= \frac{\frac{d}{dy} f(1-[f(1-F(y))])}{F'(y)}\\
    &= \frac{\frac{d}{dy} g(1-F(y))}{F'(y)}\\
    &= -g'(1-F(y)),
\end{align*}
which is increasing because $g$ is convex and $1-F(x)$ is decreasing. {If $F(2x)=0$, then $F(2(x-1))=0$ as well, and the ratio of the derivatives is $0/0$. If $F(2x)=1$ and $F(2(x-1))\in \{0,1\}$, then we also get the undefined ratio of $0/0$. In the case that $F(2x)=1$, but $F(2(x-1))<1$, the ratio is $\frac{\frac{d}{dx} F(2(x-1))}{\frac{d}{dx} F(2x)}=+\infty$. In each case, we have that the ratio is increasing, except when it has the form $0/0$ (which has probability zero under either $F(2(\cdot))$ or $F(2(\cdot-1))$}. 

It remains to verify property 1 of Definition \ref{def:CND}. Let $S_0$ and $S_1$ be two real values such that $|S_0-S_1|\leq \Delta$. Let $N\sim F$ and $N_2\sim F_2$. Note that $N_2\overset d = \frac 12 N$. Call $S_2 = (1/2)(S_0+S_1)$, and observe that $|S_2-S_0|\leq \Delta/2$ and $|S_2-S_1|\leq \Delta/2$. Then since $N$ is drawn from a CND for $f$,
\begin{align*}
    T(S_0+\Delta N_2,S_2+\Delta N_2)&=T\left(S_0+\frac{\Delta}{2} N,S_2+\frac{\Delta}{2}N\right)\geq f\\
    T(S_2+\Delta N_2,S_1+\Delta N_2)&=T\left(S_2+\frac{\Delta}{2} N,S_1+\frac{\Delta}{2}N\right)\geq f,
\end{align*}
since $F$ is a CND for $f$. Then by Lemma \ref{lem:A5}, we have $T(S_0+\Delta N_2,S_1+\Delta N_2)\geq f(1-f)=g$.
\end{proof}

Before we finally prove Theorem \ref{thm:semi}, we recall the definition of Neyman structure, and its connection to unbiased tests.

\begin{defn}
  [Definition 4.120 of Schervish]
  Let $G\subset \Theta$. If $T$ is a sufficient statistic for $G$, then a test $\phi$ has \emph{Neyman structure relative to  $G$ and $T$} if $\EE_{\theta}[\phi(X)\mid T=t]$ is constant in $t$ for all $\theta \in G$.
\end{defn}

\begin{thm}
  [Theorem 4.123 of Schervish]\label{thm:Neyman}
  Let $G = \ol \Theta_0 \cap \ol \Theta_1$. Let $T$ be a boundedly complete sufficient statistic for $G$. Assume that the power function is continuous. If there is a UMP unbiased level $\al$ test $\phi$ among those which have Neyman structure relative to $G$ and $T$, then $\phi$ is UMP unbiased level $\al$.
\end{thm}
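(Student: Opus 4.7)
The plan is to reduce the conclusion (that $\phi$ is UMP among all unbiased level $\al$ tests) to the hypothesis (that $\phi$ is UMP among those with Neyman structure relative to $G$ and $T$) by proving the set-theoretic containment: every unbiased level $\al$ test already has Neyman structure relative to $G$ and $T$. Once this inclusion is in hand, any competing unbiased level $\al$ test is automatically a Neyman-structure test, so the hypothesized power dominance applies to it and, combined with $\phi$ being unbiased, yields UMP unbiasedness.

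First I would show that any unbiased level $\al$ test $\psi$ is $\al$-similar on $G = \ol{\Ta}_0 \cap \ol{\Ta}_1$, meaning $\beta_\psi(\ta) = \al$ for every $\ta \in G$. For $\ta \in \ol{\Ta}_0$, continuity of the power function lets me pass to the limit in the level inequality $\beta_\psi(\ta_0) \leq \al$ holding on $\Ta_0$, yielding $\beta_\psi(\ta) \leq \al$; symmetrically, unbiasedness gives $\beta_\psi(\ta_1) \geq \al$ on $\Ta_1$, which extends by continuity to $\ol{\Ta}_1$. On $G$ both bounds hold simultaneously, forcing equality.

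Next I would upgrade $\al$-similarity on $G$ to Neyman structure using bounded completeness. Since $T$ is sufficient for the family $\{P_\ta : \ta \in G\}$, the conditional expectation $h(t) \defeq \EE_\ta[\psi(X) \mid T=t]$ has a version not depending on $\ta \in G$. Then
\[
\EE_\ta\!\bigl[h(T) - \al\bigr] \;=\; \beta_\psi(\ta) - \al \;=\; 0 \quad \text{for every } \ta \in G,
\]
and the random variable $h(T) - \al$ is uniformly bounded in $[-\al, 1-\al]$. Bounded completeness of $T$ on $G$ then forces $h(T) = \al$ almost surely under each $P_\ta$ with $\ta \in G$, which is exactly the Neyman-structure condition.

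Combining the previous two steps yields $\Phi_U \subseteq \Phi_N$, where $\Phi_U$ is the set of unbiased level $\al$ tests and $\Phi_N$ the set of level $\al$ tests with Neyman structure. Since by hypothesis $\phi$ is unbiased and dominates every competitor in $\Phi_N$ (a fortiori every competitor in $\Phi_U$) pointwise on $\Ta_1$, $\phi$ is UMP unbiased. The main obstacle is the second step: bounded completeness (not full completeness) is precisely what is available and precisely what is needed, since the conditional expectation of a test function is automatically bounded; the continuity of the power function in step one is likewise essential, since without it neither the null bound nor the unbiasedness bound need pass to the closure, and Neyman structure on $G$ could fail.
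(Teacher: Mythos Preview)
The paper does not supply its own proof of this statement; it is quoted verbatim as Theorem 4.123 of \citet{schervish2012theory} and used as a black box in the proof of Theorem \ref{thm:semi}. Your proposal reproduces the standard textbook argument found in Schervish (and Lehmann): continuity of the power function forces every unbiased level $\al$ test to be $\al$-similar on the boundary $G$, bounded completeness of the sufficient statistic $T$ then upgrades $\al$-similarity to Neyman structure, so the class of unbiased level $\al$ tests sits inside the class of Neyman-structure tests and the hypothesized optimality of $\phi$ on the larger class transfers to the smaller one. This is correct and matches the cited source.
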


\thmsemi*
\begin{proof}
By Theorem \ref{thm:Neyman}, 
it suffices to consider tests which have Neyman structure relative to $\{(\theta_x,\theta_y)\mid \theta_x=\theta_y\}$ and $Z=X+Y$. So, we need only consider tests $\phi(x,y)$ that satisfy $\EE_{\theta_x=\theta_y}[\phi(x,y)\mid x+y=Z]=\alpha$ for all $Z\in \{0,1,2,\ldots, m+n\}$. By Theorem \ref{thm:Neyman}, it suffices to show that $\phi^*$ is UMP {among the tests in $\Phi_f^{\text{semi}}$  which also satisfy} $\EE_{\theta_x=\theta_y}[\phi(x,y)\mid x+y=Z]=\alpha$. 

Recall that if $X\sim \mathrm{Binom}(n,\theta_X)$ and $Y\sim \mathrm{Binom}(m,\theta_Y)$, then $(X,Y)\mid X+Y=z$ is equal in distribution to $(z-H,H)$, where $H\sim \mathrm{Hyper}(m,n,z,\omega)$, where $\omega = \frac{\theta_Y/(1-\theta_Y)}{\theta_X/(1-\theta_X)}$ and where $\mathrm{Hyper}(m,n,z,\omega)$ is the Fisher noncentral hypergeometric distribution, which has pmf 
\[P_\omega(H=x)=\frac{\binom{m}{x}\binom{n}{z-x}\omega^x}{\sum_{x=L}^{U} \binom{m}{x}\binom{n}{z-x}\omega^x},\]
with support $\left\{L=\max\{0,z-n\},L+1,\ldots,U-1, U=\min\{z,m\}\right\}$. Then unbiased testing $H_0: \theta_X\leq \theta_Y$ versus $H_1: \theta_Y\geq \theta_X$ in the original model is equivalent to testing $H_0: \omega\leq 1$ versus $H_1: \omega>1$ in the hypergeometric model. 

Next, note that $\mathrm{Hyper}(m,n,z,\omega)$ has an increasing likelihood ratio in $\omega$, meaning that for $\omega_1\leq \omega_2$, we have $\frac{P_{\omega_1}(H=x)}{P_{\omega_2}(H=x)}$ is an increasing function of $x$. Then a test on hypergeometrics for $H_0: \omega\leq 1$ versus $H_1: \omega_1>1$ has size $\alpha$ if and only if $\EE_{H\sim \mathrm{Hyper}(m,n,z,1)}\phi(H)=\alpha$. For the tradeoff function $g=1-(1-f)^{\circ 2}$, by Theorem \ref{thm:binary}, there exists a most powerful $\Psi_g$ test for $H_0: \omega=1$ versus $H_1: \omega=\omega_1$ where $\omega_1>1$, which is of the form $\psi_z^*(x) = F_g(x-m(z))$, where $m(z)$ is chosen such that $\EE_{H\sim \mathrm{Hyper}(m,n,z,1)}F_g(H-m)=\alpha$. Since this test does not depend on the specific alternative, it is UMP for $H_0: \omega\leq 1$ versus $H_1: \omega>1$. 

Now, given $\psi^*_z$ for $z\in\{0,1,2,\ldots, m+n\}$, we will show that $\phi^*$ as defined in the theorem statement is related as follows. Let $z\in \{0,1,2,\ldots, m+n\}$. Then for $x+y=z$, we have 
\[\phi^*(x,y) = F_f(y-x-c(z))=F_f(2y-z-c(z))=F_f(2y-2m(z))=F_g(y-m(z))=\psi^*_z(y),\] 
where $m(z)=(1/2)(z+c(z))$, and where we used Lemma \ref{lem:twice} to justify that $F_f(2\cdot)=F_g(\cdot)$. 

Now, suppose that there is another $\psi_{f}^{\text{semi}}$ test which satisfies $\EE_{H\sim \mathrm{m,n,z,1}} \phi(z-H,H)=\alpha$, and which has higher power than $\psi^*$ for some $\theta_X<\theta_Y$ (call $\omega_1 = \frac{\theta_Y/(1-\theta_Y)}{\theta_X/(1-\theta_X)}$). Because power can be expressed as $\EE_{Z} \EE_{H\sim \mathrm{Hyper}(m,n,Z,\omega_1)} \phi(Z-H,H)$, where the first expectation is over the marginal distribution of $Z$, this implies that there exists $z\in \{0,1,2,\ldots, m+n\}$ such that $\EE_{H\sim \mathrm{Hyper}(m,n,z,\omega_1)} \phi(z-H,H) >\EE_{H\sim \mathrm{Hyper}(m,n,z,\omega_1)} \phi^*(z-H,H)$. However, applying the transformation in Lemma \ref{lem:convert} gives {test in $\Psi_g$ with size $\alpha$} for testing  $H_0: \omega\leq 1$ versus $H_1: \omega>1$ in the family $\mathrm{Hyper}(m,n,z,\omega)$, with power at $\omega_1$ higher than $\psi^*_z$. This contradicts that $\psi^*_z$ is UMP size $\alpha$ in $\Psi_g$. We conclude that $\phi^*$ is UMP unbiased size $\alpha$ among $\Psi_f^{\text{semi}}$ for the hypothesis $H_0: \theta_X\leq \theta_Y$ versus $H_1: \theta_X<\theta_Y$. 

Line 2 of the theorem statement follows from the monotone likelihood ratio property of the Fisher noncentral hypergeometric distribution along with parts 3 and 4 of Theorem \ref{thm:pVal}. 
\end{proof}

\end{appendix}

\end{document}